\newif\ifdraft
\newif\iftechreport
\newcolumntype{P}[1]{>{\centering\arraybackslash}p{#1}}
\definecolor{tboxcolor}{HTML}{116699}
\definecolor{aboxcolor}{HTML}{cc6633}
\definecolor{schemecolor}{HTML}{cccccc}
\definecolor{mygray}{HTML}{cccccc}
\definecolor{mydarkgray}{HTML}{555555}
\definecolor{myorange}{HTML}{cc6633}
\definecolor{myred}{HTML}{C00D2A}
\definecolor{mygreen}{HTML}{009B5C}
\definecolor{myblue}{HTML}{006FB3}
\colorlet{conccolor}{myblue}
\colorlet{rolecolor}{black}
\colorlet{indcolor}{myred}
\colorlet{homcolor}{mydarkgray}
\colorlet{autocolor}{mygreen!50!mygray}
\colorlet{tboxlcolor}{mygreen}
\colorlet{tboxrcolor}{myblue}
\colorlet{aboxcolor}{mydarkgray}
  \tikzset{declare function={atanXY(\x,\y)=atan2(\y,\x);atanYX(\y,\x)=atan2(\y,\x);}}
  \tikzset{declare function={atanXY(\x,\y)=atan2(\x,\y);atanYX(\y,\x)=atan2(\x,\y);}}
\tikzset{ %
  background rectangle/.style= {fill=white, rounded corners}, %
  table/.style={draw, mydarkgray, rounded corners, inner sep=0},
  %
  player1/.style={rounded corners=6, outer sep=0.05cm, inner ysep=0.05cm,
    minimum width=1cm, minimum height=0.5cm, draw=black, fill=white}, %
  player1a/.style={rounded corners=3, inner ysep=0.04cm, draw=black,
    fill=white, text=myred}, %
  player0/.style={rounded corners=0, outer sep=0.05cm, inner ysep=0.05cm,
    minimum width=1cm, minimum height=0.5cm, draw=mydarkgray, fill=mygray}, %
  move0/.style={myred},%
  move1/.style={myblue},%
  deadend/.style={double, double distance=1pt},%
  pebble/.style={fill, circle, black, text=white, inner sep=0.5pt, outer
    sep=0.1pt, minimum size=0.2cm},%
  individual/.style={draw, circle, outer sep=0.05cm, inner sep=1.2pt, thick,
  }, %
  constant/.style={fill}, %
  %
  autree/.style={rectangle, outer sep=0.05cm, inner sep=0.5pt}, %
  aunode/.style={individual, autocolor}, %
  aurun/.style={-stealth', thick, mygreen}, %
  austate/.style={-latex, color=autocolor, text=mydarkgray}, %
  region/.style={rounded corners, color=mygray, fill=mygray!30, text=black}, %
  entity/.style={rounded corners=2, inner sep=0.04cm, outer sep=0.02cm},%
  abstractentity/.style={draw, fill=white, inner sep=0.1cm, outer sep=0.02cm,
    drop shadow={opacity=.4,shadow xshift=0.04, shadow yshift=-0.04}},%
  abox/.style={cylinder, shape border rotate=90, aspect=0.4, thick, minimum
    width=1.8cm, minimum height=1.5cm, inner ysep=0.1cm, outer sep=0.05cm, %
    draw=aboxcolor, fill=aboxcolor!35},%
  aboxl/.style={abox, draw=aboxcolor!60, fill=aboxcolor!20},%
  aboxr/.style={abox},%
  abox3d/.style={abox, draw=aboxcolor!80, %
    left color=aboxcolor!70, right color=aboxcolor!80, middle color=white,%
    append after command={ let%
      \p{cyl@center} = ($(\tikzlastnode.before top)!0.5!(\tikzlastnode.after top)$), %
      \p{cyl@x} = ($(\tikzlastnode.before top)-(\p{cyl@center})$), %
      \p{cyl@y} = ($(\tikzlastnode.top) -(\p{cyl@center})$) in %
      (\p{cyl@center}) edge[draw=none, fill=aboxcolor!40, to path={ ellipse%
        [x radius=veclen(\p{cyl@x})-1\pgflinewidth-0.05cm, %
        y radius=veclen(\p{cyl@y})-1\pgflinewidth-0.05cm, rotate=atanXY(\p{cyl@x})]}]
      () }%
  },%
  abox3dl/.style={abox, draw=aboxcolor!40, %
    left color=aboxcolor!30, right color=aboxcolor!40, middle color=white,%
    append after command={ let%
      \p{cyl@center} = ($(\tikzlastnode.before top)!0.5!(\tikzlastnode.after top)$), %
      \p{cyl@x} = ($(\tikzlastnode.before top)-(\p{cyl@center})$), %
      \p{cyl@y} = ($(\tikzlastnode.top) -(\p{cyl@center})$) in %
      (\p{cyl@center}) edge[draw=none, fill=aboxcolor!20, to path={ ellipse%
        [x radius=veclen(\p{cyl@x})-1\pgflinewidth-0.05cm, %
        y radius=veclen(\p{cyl@y})-1\pgflinewidth-0.05cm, rotate=atanXY(\p{cyl@x})]}]
      () }%
  },%
  abox3dr/.style={abox3d}, %
  tbox/.style={thick, outer sep=0.05cm, inner sep=0.2cm, drop
    shadow={opacity=0.35}, draw=tboxcolor, fill=tboxcolor!50},%
  tboxl/.style={tbox, draw=tboxlcolor, fill=tboxlcolor!40}, %
  tboxr/.style={tbox, draw=tboxrcolor, fill=tboxrcolor!40}, %
  tbox3d/.style={tbox, draw=tboxcolor!85, inner color=tboxcolor!45, outer
    color=tboxcolor!65},%
  tbox3dl/.style={tbox, draw=tboxlcolor!60, inner color=tboxlcolor!20, outer
    color=tboxlcolor!40},%
  tbox3dr/.style={tbox, draw=tboxrcolor!85, inner color=tboxrcolor!45, outer
    color=tboxrcolor!65},%
  kbbox/.style={draw=mygray, rounded corners, very thick, outer sep=0cm, inner
    sep=0.2cm},%
  kbbox3d/.style={kbbox, pattern=north east lines, draw=mydarkgray, pattern
    color=mydarkgray!50, rounded corners=10, thick},%
  %
  %
  trans/.style={-stealth', semithick},%
  homomor/.style={-stealth, dashed, thick, color=homcolor},%
  role/.style={-latex, semithick},%
  isa/.style={-open triangle 60, semithick},%
  disj/.style={open triangle 60-open triangle 60, ultra thin,
    postaction={decorate, decoration={markings, mark=at position 0.45 with
        {\draw[-,solid] (-2pt,-2pt) -- (2pt, 2pt); } }}},%
  soledge/.style={dotted, -latex, ultra thick, color=mydarkgray, text=black},%
  path/.style={decorate, decoration={zigzag, segment length=20pt,
      amplitude=2pt},->, ultra thin},%
  mapping/.style={decorate, decoration={coil, aspect=0, segment length=20pt,
      amplitude=1pt},-latex, semithick, mydarkgray},%
  disjmap/.style={decorate, decoration={coil, aspect=0, segment length=20pt,
      amplitude=1pt},-latex, semithick, mydarkgray, postaction={decorate,
      decoration={markings, mark=at position 0.55 with {\draw[-,solid]
          (-2pt,-2pt) -- (2pt, 2pt); } }}},%
  crossout/.style={decorate, decoration={markings, mark=at position 0.55 with
      {\draw[-,solid,thick,myred] (-2pt,-2pt) -- (2pt, 2pt); %
        \draw[-,solid,thick,myred] (-2pt,2pt) -- (2pt, -2pt);} }},%
  obdamapping/.style={-latex, line width=0.1cm, myred},%
  isometricYXZ/.style={x={(1cm,0cm)}, y={(-1.299cm,-0.75cm)}, z={(0cm,1cm)}},%
  inside/.code=\preto\tikz@auto@anchor{\pgf@x-\pgf@x\pgf@y-\pgf@y},%
}
\tikzset{
  sshadow/.style={opacity=.25, shadow xshift=0.05, shadow yshift=-0.06},
}
\def\factor{4}
\def\xradius{2}
\def\yradius{2/\factor}
\def\height{1.05cm}
\def\xandy{2 and 2/\factor}
\tikzset{
  pics/.cd, %
  disc/.style ={
    code = {
      \path [fill=red] (-\xradius,0) -- (-\xradius,-\height) arc
      (180:360:\xandy) -- (\xradius,0) arc (0:180:\xandy);%
      \path [draw=white, ultra thick] (-\xradius,0) -- (-\xradius,-\height) arc
      (180:360:\xandy) -- (\xradius,0)%
      (0,0) ellipse [x radius=\xradius, y radius =\yradius];%
    } },%
}
\tikzset{ %
  taxonomy/.style={anchor=west, grow via three points={one child at (0.2,-0.5)
      and two children at (0.2,-0.5) and (0.2,-1)}, %
    edge from parent path={(\tikzparentnode.south) |- (\tikzchildnode.west)},
    draw=mydarkgray}, %
  webtaxonomy/.style={anchor=west, grow via three points={one child at
      (0.1,-0.5) and two children at (0.1,-0.5) and (0.1,-1)}, %
    edge from parent path={[]}}, %
  highlight/.style={draw=myred, rounded corners, text=black},%
  webhighlight/.style={draw=mygray, fill=mygray!50, rounded corners, text=black}%
}
\def \tikzdots[#1]{
  \begin{scope}[shift={#1}, text=black]
    \node at (0,0.15) {$\cdot$}; \node {$\cdot$}; \node at (0,-0.15) {$\cdot$};
  \end{scope}
}
\def\schemel[#1,#2,#3,#4,#5,#6]#7{ %
  \node[draw, diamond, shape aspect=#3, rotate=#2, minimum size=#1, %
  bottom color=schemecolor!80!black, top color=schemecolor!30,
  color=schemecolor!80, %
  drop shadow={sshadow,color=gray!50}] (#5) at #6
  {\textcolor{schemecolor!90}{$\Sigma_1$}}; %
  \node at #6 {#7};%

  \node[anchor=north] at (#5.south) {#4}; 
}
\def\schemer[#1,#2,#3,#4,#5,#6]#7{ %
  \node[draw, diamond, shape aspect=#3, rotate=#2, minimum size=#1, %
  bottom color=schemecolor!50!black, top color=schemecolor!50,
  color=schemecolor!87!black, %
  drop shadow={sshadow,color=gray!50}] (#5) at #6
  {\textcolor{schemecolor!80!black}{$\Sigma_2$}}; %
  \node at #6 {#7}; %

  \node[anchor=north] at (#5.south) {#4}; 
}
\newcommand{\mongodb}{MongoDB\xspace}
\title{Expressivity and Complexity of \mongodb Queries (Extended Version)}
\title{Expressivity and Complexity of \mongodb \makebox[0cm][l]{Queries}}
\author{E.\ Botoeva}
\author{D.\ Calvanese}
\author{B.\ Cogrel}
\author{G.\ Xiao}
\affil{Free University of Bozen-Bolzano \texttt{\textit{lastname}@inf.unibz.it}}
\authorrunning{E. Botoeva, D. Calvanese, B. Cogrel and G. Xiao} 
\begin{document}

\maketitle

\begin{abstract}
  A significant number of novel database architectures and data models have
  been proposed during the last decade. While some of these new systems have
  gained in popularity, they lack a proper formalization, and a precise
  understanding of the expressivity and the computational properties of the
  associated query languages.  In this paper, we aim at filling this gap, and
  we do so by considering \mongodb, a widely adopted document database system
  managing complex (tree structured) values represented in a JSON-based data
  model, equipped with a powerful query mechanism.  We provide a formalization
  of the \mongodb data model, and of a core fragment, called \mquery, of the
  \mongodb query language.  We study the expressivity of \mquery, showing its
  equivalence with nested relational algebra.  We further investigate the
  computational complexity of significant fragments of it, obtaining several
  (tight) bounds in combined complexity, which range from \LOGSPACE to
  alternating exponential-time with a polynomial number of alternations.  As a
  consequence, we obtain also a characterization of the combined complexity of
  nested relational algebra query evaluation.
\end{abstract}

\section{Introduction}

As was envisioned by Stonebraker and Cetintemel \cite{StCe05}, during the last
ten years a diversity of new database (DB) architectures and data models has
emerged, driven by the goal of better addressing the widely varying demands of
modern data-intensive applications.  Notably, many of these new systems do not
rely on the relational model but instead adopt a semi-structured data format,
and alternative query mechanisms, which combine an increased flexibility in
handling data, with a higher efficiency (at least for some types of common
operations).  These systems are generally categorized under the terms
\emph{NoSQL} (for ``not only SQL'') \cite{Catt11,Leav10}.

A large portion of the so-called \emph{non-relational} systems (e.g., \mongodb,
CouchDB, and
DocumentDB) organize data in collections of semi-structured, tree-shaped
documents in the JavaScript Object Notation (JSON) format, which is commonly
viewed as a lightweight alternative to XML.
Such documents can be seen as complex values
\cite{GrVi91,AbBe95,VaPa95,DaVo97}, in particular due to the presence of nested
arrays.
Consider, e.g., the document in Figure~\ref{fig:mongodb-document}, containing
standard personal information about Kristen Nygaard (such as name and
birth-date), and information about the awards he received, the latter being
stored inside an array.

\begin{figure}[t]
  \centering
\begin{lstlisting}
{ "_id": 4,
  "awards": [
    { "award": "Rosing Prize", "year": 1999, "by": "Norwegian Data Association" },
    { "award": "Turing Award", "year": 2001, "by": "ACM" },
    { "award": "IEEE John von Neumann Medal", "year": 2001, "by": "IEEE" } ],
  "birth": "1926-08-27",
  "contribs": [ "OOP", "Simula" ],
  "death": "2002-08-10",
  "name": { "first": "Kristen", "last": "Nygaard" } }
\end{lstlisting}
  \caption{A sample \mongodb document in the \texttt{bios} collection}
  \label{fig:mongodb-document}
\end{figure}

It is not surprising that among the non-relational languages that have been
proposed for querying JSON collections (see, e.g.,
\cite{beyer11jaql,olston08pig,thusoo09hive} and the \emph{\mongodb aggregation
 framework}\footnote{\scriptsize\url{https://docs.mongodb.com/core/aggregation-pipeline/}}),
languages with rich capabilities have many similarities with well-known query
languages for complex values, such as monad algebra (MA) \cite{BNTW95,Koch06},
nested relational algebra (NRA) \cite{thomas86,Vand01} and Core XQuery
\cite{Koch06}.  For instance, Jaql \cite{beyer11jaql}, one of the most
prominent query languages targeting map-reduce frameworks \cite{dean08},
supports higher-order functions, which have their roots in MA, and the group
and unwind operators of \mongodb
are similar to the nest and unnest operators of NRA.  While some of these
languages have been widely used in large-scale applications, there have been
only few attempts at capturing their formal semantics, e.g., through a calculus
for Jaql \cite{BCNS13}.  Only very recently abstract frameworks have been
proposed, with the aim of understanding the formal and computational properties
of query languages over JSON documents \cite{hidders17jlogic,bourhis17json}.

In this paper, we consider the case of \mongodb, a widespread JSON-based
document database, and conduct the first major investigation into the formal
foundations and computational properties of its data model and query language.
\mongodb provides rich querying capabilities by means of the \emph{aggregation
 framework}, which is modeled on the flexible notion of data processing
pipeline.  In this framework, a query is composed of multiple stages, where
each stage defines a transformation, using a \mongodb-specific operator,
applied to the set of documents produced by the previous stage.  The \mongodb
model is at the basis of systems provided by different vendors, such as the
DocumentDB system on Microsoft
Azure\footnote{\scriptsize\url{https://docs.microsoft.com/en-us/azure/documentdb/documentdb-protocol-mongodb}}.

Our first contribution is a formalization of the \mongodb data model and of a
fragment of the aggregation framework query language, which we call
\emph{\mquery}.  We deliberately abstract away some low-level features, which
appear to be motivated by implementation aspects, rather than by the objective
of designing an elegant language for nested structures.  On the other hand, our
objective still is to capture as precisely as possible the actual behavior of
\mongodb, rather than developing top-down a clean theoretical framework that is
distant from the actual system.  We see this as essential for our work to be of
practical relevance, and possibly help ``cleaning up'' some of the debatable
choices made for \mongodb.  \mquery includes the \emph{\textbf{m}atch},
\emph{\textbf{u}nwind}, \emph{\textbf{p}roject}, \emph{\textbf{g}roup}, and
\emph{\textbf{l}ookup} operators, roughly corresponding to the NRA operators
select, unnest, project, nest, and left join, respectively.
As a useful side-effect of our formalization effort, we point out different
``features'' exhibited by \mongodb's query language that are somewhat
counter-intuitive, and that might need to be reconsidered by the \mongodb
developers for future versions of the system.
In our investigation, we consider various fragments of \mquery, which we denote
by \MQ{$\alpha$}, where $\alpha$ consists of the initials of the stages that
can be used in the fragment.

Our second contribution is a characterization of the expressive power of
\mquery obtained by comparing it with NRA.  We define the relational view of
JSON documents, and devise translations in both directions between \mquery and
NRA, showing that the two languages are equivalent in expressive power.  We
also consider the \mupg fragment, where we rule out the lookup operator, which
allows for joining a given document collection with external ones.  Actually,
we establish that already \mupg is equivalent to NRA over a single relation,
and hence is capable of expressing arbitrary joins (within one collection),
contrary to what is believed in the community of \mongodb practitioners and
users.  Interestingly, all our translations are compact (i.e., polynomial),
hence they allow us also to carry over complexity results between \mquery and
NRA.

Finally, we carry out an investigation of the computational complexity of
\mupgl and its fragments.  In particular, we establish that what we consider
the minimal fragment, which allows only for match, is \LOGSPACE-complete (in
combined complexity).  Projection and grouping allow one to create
exponentially large objects, but by representing intermediate results compactly
as DAGs, one can still evaluate \mpgl queries in \PTIME.  The use of unwind
alone causes loss of tractability in combined complexity, specifically it leads
to \NP-completeness, but remains \LOGSPACE-complete in query complexity.
Adding also project or lookup leads again to intractability even in query
complexity, although \mupl stays \NP-complete in combined complexity.  In the
presence of unwind, grouping provides another source of complexity, since it
allows one to create doubly-exponentially large objects; indeed we show
\PSPACE-hardness of \mug.  Finally, we establish that the full language and
also the \mupg fragment are complete for exponential time with a polynomial
number of alternations (in combined complexity).
As mentioned, our polynomial translations between \mquery and NRA, allow us to
carry over the complexity results also to NRA (and its fragments).  In
particular, we establish a tight \TAexppoly result for the combined complexity
of Boolean query evaluation in NRA, for which the lower bound was known, but
the best upper bound was \EXPSPACE \cite{Koch06}.




\section{Preliminaries}
\label{sec:preliminaries}

We recap the basics of nested relational algebra (NRA)
\cite{JaeschkeS82,Vand01}, mainly to fix the notation. 

Let $\A$ be a countably infinite set of attribute names and relation schema
names.  A \emph{relation schema} has the form $R(S)$, where $R\in\A$ is a
relation schema name and $S$ is a finite set of attributes, each of which is an
atomic attribute (i.e., an attribute name in~$\A$) or a schema of a
sub-relation.  A relation schema can also be obtained through an NRA operation
(see below).  We use the function $\att$ to retrieve the attributes from a
relation schema name, i.e., $\att(R)=S$.
%
%
Let $\Delta$ be the domain of all atomic attributes in $\A$.  An
\emph{instance} $\R$ of a relation schema $R(S)$ is a finite set of tuples over
$R(S)$.  A \emph{tuple $t$ over $R(S)$} is a finite set
$\{a_1{:}v_1,\dots,a_n{:}v_n\}$ such that if $a_i$ is an atomic attribute, then
$v_i\in\Delta$, and if $a_i$ is a relation schema, then $v_i$ is an instance of
$a_i$.
In the following, when convenient, we refer to relation schemas by their name
only.

A \emph{filter} $\psi$ over a set $A\subseteq\A$ is a Boolean formula
constructed from atoms of the form $(a=v)$ or $(a=a')$, where
$\{a,a'\}\subseteq A$, and $v$ is an atomic value or a relation.
Let $R$ and $R'$ be relation schemas.  We use the following
operators:
\begin{inparaenum}[\itshape (1)]
\item \emph{set union} $R\cup R'$ and \emph{set difference}
$R\setminus R'$,
  for $\att(R)=\att(R')$;
\item \emph{cross-product} $R\times R'$, resulting in a relation schema with
  attributes
  $\{\text{rel1}.a \mid a\in\att(R)\} \cup
  \{\text{rel2}.a \mid a\in\att(R')\}$;
\item \emph{selection} $\sigma_\psi(R)$, where $\psi$ is a filter over
  $\att(R)$;
\item \emph{projection} $\pi_{P}(R)$, for $P\subseteq\att(R)$;
\item \emph{extended projection} $\pi_{P}(R)$, where $P$ may also contain
  elements of the form $b/e(a_1,\ldots,a_n)$, for an expression $e$ computable
  in \ACz in data complexity, $b$ a fresh attribute name, and
  $\{a_1,\ldots,a_n\}\subseteq\att(R)$;
%
\item \emph{nest} $\nest_{\{a_1, \ldots, a_n\}\rightarrow b}(R)$, resulting in
  a schema with attributes
  $(\att(R)\setminus\{a_1,\ldots,a_n\}) \cup \{b(a_1,\ldots,a_n)\}$; and
\item \emph{unnest} $\unnest_a(R)$, resulting in a schema with attributes
  $(\att(R)\setminus\{a\}) \cup \att(a)$.
\end{inparaenum}
For more details on~(5)--(7), we refer to Appendix~\ref{sec:semantics-nra}.
%
%
%
Given an NRA query $Q$ and a (relational) database $\D$, the result of
evaluating $Q$ over $\D$ is denoted by $\ansra(Q,\D)$.

%


\section{\mongodb Documents}
\label{sec:documents}

In this section, we propose a formalization of the syntax and the semantics of
\mongodb documents.  In our formalization, we make two simplifying assumptions
with respect to the way such documents are treated by the \mongodb system:
\begin{inparaenum}[\itshape (i)]
\item we abstract away document order, i.e., we view documents as expressed in
  JSON, as opposed to
  BSON\footnote{\url{https://docs.mongodb.org/manual/reference/bson-types/}},
  and
\item we consider set-semantics as opposed to bag-semantics.
\end{inparaenum}

\begin{figure}
  \centering
  \begin{tabular}{r@{~~}c@{~~}l}
    \meta{Value} &\DEF& \meta{Literal} \CHOICE~ \meta{Object} \CHOICE~ \meta{Array} \\[-0.5mm]
    \meta{Object} &\DEF& \term{\lobject} \meta{List<Key} \term{:} \meta{Value>} \term{\robject}\\[-0.5mm]
    \meta{Array} &\DEF& \term{[} \meta{List<Value>} \term{]}
  \end{tabular}~
  \begin{tabular}{r@{~~}c@{~~}l}
    \meta{List<T>} &\DEF& \meta{$\varepsilon$} \CHOICE~ \meta{List$^+$<T>}\\
    \meta{List$^+$<T>} &\DEF& \meta{T} \CHOICE~ \meta{T} \term{,} \meta{List$^+$<T>}
  \end{tabular}
  \caption{Syntax of JSON objects.  We use double curly brackets to distinguish
   objects from sets}
  \label{fig:syntax-bson}
\end{figure}

A \mongodb database stores collections of documents, where
a collection corresponds to a table in a (nested) relational database, and a
document to a row in a table.
We define the syntax of \mongodb documents.
%
\emph{Literals} are atomic values, such as strings, numbers, and Booleans.
A \emph{JSON object} is a finite set of key-value pairs, where a \emph{key} is a
string and a \emph{value} can be a literal, an object, or an array of values,
constructed inductively according to the grammar in Figure~\ref{fig:syntax-bson}
(where terminals are written in black, and non-terminals in blue).
We require that the set of key-value pairs constituting a JSON object does not
contain the same key twice.
A \emph{(\mongodb) document} is a JSON object not nested within any other
object, with a special key `\id', which is used to identify the document.
Figure~\ref{fig:mongodb-document} shows a \mongodb document in which, apart from
\id, the keys are \valuefont{birth}, \valuefont{name}, \valuefont{awards}, etc.
%
Given a collection name~$C$, a \emph{(\mongodb) collection for $C$} is a finite
set $F_C$ of documents, such that each document is identified by its value of
\id, i.e., each value of \id is unique in $F_C$. Given a set $\mathbb{C}$ of
collection names, a \emph{\mongodb database instance $D$ (over $\mathbb{C}$)} is
a set of collections, one for each name $C \in \mathbb{C}$.  We write $D.C$ to
denote the collection for name $C$.

\begin{figure*}[b]
  \centering
  \scalebox{0.8}{\begin{tikzpicture}[xscale=1.2]\scriptsize
  \foreach \al/\x/\y/\lab/\wh in {%
    root/3/1/{\object{}}/above, %
    _id/-4/-0/{4}/below,%
    awards/0/0/{\bf [\,]}/above,%
    birth/3/0/{1926-08-27}/below,%
    contribs/5/0/{\bf [\,]}/above,%
    death/7/0/{2002-08-10}/below,%
    name/9/0/{\object{}}/above,%
    first/8.5/-1/{Kristen}/below,%
    last/9.5/-1/{Nygaard}/below,%
    award0/-2.8/-1/{\object{}}/above,%
    award1/0/-1/{\object{}}/above left,%
    award2/3.1/-1/{\object{}}/above,%
    contrib0/4.5/-1/{OOP}/below,%
    contrib1/5.5/-1/{Simula}/below,%
    a0a/-3.95/-2/{\begin{minipage}{0.85cm}\centering Rosing Prize
      \end{minipage}}/below,%
    a0b/-2.8/-2/{\begin{minipage}{1.4cm}\centering Norwegian Data
        Association\end{minipage}}/below,%
    a0y/-1.8/-2/{1999}/below,%
    a1a/-0.85/-2/{\begin{minipage}{0.9cm}\centering Turing
        Award\end{minipage}}/below,%
    a1b/0/-2/{ACM}/below,%
    a1y/0.7/-2/{2001}/below,%
    a2a/1.9/-2/{\begin{minipage}{1.7cm}\centering IEEE John\\von
        Neumann\\Medal\end{minipage}}/below,%
    a2b/3.1/-2/{IEEE}/below,%
    a2y/3.8/-2/{2001}/below%
  }{ \node[draw,rectangle,rounded corners, minimum height=0.4cm,
    minimum width=0.4cm, inner sep=2,
    anchor=north]
    (\al) at (\x,\y) {\lab};}

  \foreach \from/\to/\lab/\b in {%
    root/_id/\_id/5, root/name/name/-7,%
    root/awards/awards/5, root/birth/birth/0,
    root/contribs/contribs/-7, root/death/~~death/-5,
    name/first/first/5, name/last/last/-5, awards/award0/0/5,
    awards/award1/~~1/0, awards/award2/2/-5, %
    contribs/contrib0/0/5, contribs/contrib1/1/-5, %
    award0/a0a/award~~~~/10, award0/a0b/by/0, award0/a0y/year/-10,
    award1/a1a/award~~~~/10, award1/a1b/by/0, award1/a1y/~~year/-10,
    award2/a2a/award~~~~/10, award2/a2b/by/0, award2/a2y/~~year/-10%
  }{\draw[role, gray] (\from) to[bend right=\b]
    node[text=black,pos=0.55] {\bf\lab} (\to);}
\end{tikzpicture}}
  \caption{The tree representation of the \mongodb document in
    Figure~\ref{fig:mongodb-document}}
\label{fig:mongodb-document-tree}
\end{figure*}

We formalize \mongodb documents as finite \emph{unordered, unranked,
 node-labeled, and edge-labeled trees}.  We assume three disjoint sets of
labels: the sets $K$ of \emph{keys} and $I$ of \emph{indexes} (non-negative
integers), used as edge-labels, and the set $V$ of \emph{literals}, containing
the special elements \nullvalue, \truevalue, and \falsevalue, and used as node
labels.
A \emph{tree} is a tuple $(N,E,\lnode,\ledge)$, where $N$ is a set of nodes, $E$ is a
successor relation, $\lnode: N \to V \cup \big\{\objectlabel, \arraylabel\big\}$ is
a node labeling function, and $\ledge: E \to K\cup I$ is an edge labeling function,
such that
\begin{inparaenum}[\it (i)]
\item $(N,E)$ forms a tree,
\item a node labeled by a literal must be a leaf,
\item all outgoing edges of a node labeled by \objectlabel must be labeled by
  keys, and
\item all outgoing edges of a node labeled by \arraylabel must be labeled by
  distinct indexes.
\end{inparaenum}
Given a tree $t$ and a node $x$, the \emph{type} of $x$ in $t$, denoted
$\type(x,t)$, is \tliteral if $\lnode(x) \in V$, \tobject if
$\lnode(x) =\objectlabel$, and \tarray if $\lnode(x) =\arraylabel$.
The root of $t$ is denoted by $\treeroot(t)$.
A \emph{forest} is a set of trees.

We define inductively the \emph{value represented by} a node $x$ in a
tree $t$, denoted $\mathsf{value}(x,t)$:
\begin{inparaenum}[\it (i)]
\item $\mathsf{value}(x,t) = \lnode(x)$, if $x$ is a leaf in $t$;
\item let $x_1,\dots,x_m$, be all children of $x$ with $\ledge(x,x_i)=k_i$.
  Then $\mathsf{value}(x,t)$ is
  $\object{k_1{:}\mathsf{value}(x_1,t),\ldots,k_m{:}\mathsf{value}(x_m,t)}$ if
  $\type(x,t)=\tobject$, and $[\mathsf{value}(x_1,t),\ldots,
  \mathsf{value}(x_m,t)]$, if $\type(x,t)=\tarray$.
\end{inparaenum}
The \emph{JSON value represented by~$t$} is then
$\mathsf{value}(\treeroot(t),t)$.
Conversely, the \emph{tree corresponding to a value $u$}, denoted $\tree(u)$, is
defined as $(N,E,\lnode,\ledge)$, where $N$ is the set of $x_v$ such that $v$ is
an object, array, or literal value appearing in $u$, and for $x_v \in N$:
\begin{inparaenum}[\itshape (i)]
\item if $v$ is a literal, then $\lnode(x_{v}) = v$ and $x_v$ is a leaf;
\item if $v = \object{k_1{:}v_1, \dots, k_m{:}v_m}$ for $m \geq 0$, then
  $\lnode(x_v) = \objectlabel$, \mbox{$x_v$ has $m$ children $x_{v_1},\dots,x_{v_m}$}
  with $\ledge(x_v,x_{v_i}) = k_i$; 
\item if $v = [v_1,\dots,v_m]$ for $m \geq 0$, then $\lnode(x_v) =
  \arraylabel$, $x_v$ has $m$ children $x_{v_1}, \dots, x_{v_m}$ with
  $\ledge(x_v,x_{v_i}) = i-1$. 
\end{inparaenum}
%
%
The tree corresponding to the document in Figure~\ref{fig:mongodb-document} is
depicted in Figure~\ref{fig:mongodb-document-tree}.


\section{\mongodb Queries}
\label{sec:queries}

\mongodb is equipped with an expressive query mechanism provided by the
\emph{aggregation framework}, and a first contribution of this paper is to
provide a formalization of its core aspects.
We deliberately abstract away (in the algebra and semantics) some
low-level features%
\footnote{We provide in Appendix~\ref{sec:mquery-vs-mongodb} a more detailed
 discussion on the difference in syntax and semantics between our formal
 language and the one implemented by \mongodb.},
and we use set (as opposed to bag) semantics.  We call the resulting
language~\emph{\mquery}.



An \emph{\mquery} is a sequence of stages $s$, also called a \emph{pipeline},
applied to a collection name~$C$, where each stage
transforms a forest into another forest.  Here we are not concerned with
syntactic aspects of \mquery (which are described in detail in
Figure~\ref{fig:mongodb-query-syntax} in the Appendix), and instead propose for
it an algebra, shown in Figure~\ref{fig:algebra-syntax}.
%
\begin{figure}
\centering
{ $\begin{array}{@{}r@{~}r@{~}l}
    \varphi &\DEF& p = v
    \mid \exists p
    \mid \neg \varphi
    \mid \varphi \lor\varphi
    \mid \varphi \land \varphi
    \\
    d &\DEF& v \mid p \mid [d,\dots,d] \mid \beta \mid \cond{\beta}{d}{d}
    \\
    \beta &\DEF& \truevalue
    \mid \falsevalue
    \mid p = p
    \mid p = v
    \mid{}\\
    && \hspace{1.2cm}
    \exists p
    \mid \neg \beta
    \mid \beta \lor \beta
    \mid \beta \land \beta
    \\
   \end{array}\qquad
   \begin{array}{r@{~}r@{~}l}
    P &\DEF& p \,\mid\, p/d \,\mid\, p,P \,\mid\, p/d,P
    \\
    G,A &\DEF& p/p' \,\mid\, p/p',G
    \\
    s &\DEF& \match{\varphi}
    \mid \unwind{p}
    \mid \unwind[+]{p}
    \mid \project{P}
    \mid \project[\noid]{P}
    \mid \group{G}{A}
    \mid \lookup{p_1=C.p_2}{p} \\
    \multicolumn{3}{l}{\mquery ~\DEF~ C \pipeline s \pipeline \cdots \pipeline
     s}
  \end{array}$
}
\caption{Algebra for \mquery.  Here, $p$ denotes a path, $v$ a value, and $C$ a
 collection name }
\label{fig:algebra-syntax}
\end{figure}
%
In an \mquery, \emph{paths}, which are (possibly empty) concatenations of keys,
are used to access actual values in a tree, similarly to how attributes are used
in relational algebra.
We use $\varepsilon$ to denote the empty path.
For two paths $p$ and $p'$, we say that $p'$ is a \emph{(strict) prefix} of $p$,
if $p=p'.p''$, for some (non-empty) path $p''$.
%
\mquery allows for five types of \emph{stages} (below, we use the tree $t$ in
Figure~\ref{fig:mongodb-document-tree}):
\begin{compactitem}
\item \emph{\textbf{m}atch} $\match{\varphi}$, selecting trees according to
  criterion $\varphi$, which is a Boolean combination of atomic conditions
  expressing the equality of a path $p$ to a value $v$, or the existence of a
  path $p$. E.g., for $\varphi_1 = (\text{\id}\text{\small =4})$,
  $\varphi_2 = (\text{\small awards.award="Turing Award"})$ and
  $\varphi_3 = (${\small awards $=$ \object{award: "Rosing
    Prize", year: 2001, by: "ACM"}}$)$, $\mu_{\varphi_1}$ and $\mu_{\varphi_2}$
  select~$t$, but $\mu_{\varphi_3}$ does not.

\item \emph{\textbf{u}nwind} $\unwind{p}$ and $\unwind[+]{p}$, which flatten an
  array reached through a path $p$ in the input tree, and output a tree for each
  element of the array; $\unwind[+]{p}$ preserves a tree even when the array
  does not exist or is empty.  For instance, $\unwind{\text{awards}}$ produces
  three trees from $t$, which coincide on all key-value pairs, except for the
  \valuefont{awards} key, whose values are nested objects such as, e.g.,
  {\small\object{award: "Turing Award", year: 2001, by: "ACM"}}.

\item \emph{\textbf{p}roject} $\project{P}$ and $\project[\noid]{P}$, which
  modify trees by projecting away paths, renaming paths, or introducing new
  paths; $\project[\noid]{P}$ projects away \id, while $\project{P}$ keeps it by
  default.  Here $P$ is a sequence of elements of the form $p$ or $q/d$, where
  $p$ is a path to be kept, $q$ is a new path whose value is defined by $d$, and
  among all such paths $p$ and $q$, there is no pair $p$, $p'$ where $p$ is a
  prefix of $p'$.  A \emph{value definition} $d$ can provide for $q$ a constant
  $v$, the value reached through a path $p$ (i.e., \emph{renaming} path $p$ to
  $q$), a new array defined through its values, the value of a Boolean
  expression $\beta$, or a value computed through a conditional expression
  $\cond{\beta}{d_1}{d_2}$.  Note that, in a Boolean expression~$\beta$, one can
  also compare the values of two paths, while in a match criterion $\varphi$ one
  can only compare the value of a path to a constant value.
  For example,
  $\project{\text{bool}/(\text{birth=death}),\, \text{cond}/\cond{(\exists
    \text{awards})}{\text{contribs}}{\text{\id}},\, \text{newArray}/[0,1]}$
  applied to $t$ produces {\small $\tree${\small(\lobject \id:\,4,
    bool:\,\falsevalue, cond:\,["OOP", "Simula"], newArray:\,[0,1]\robject)}}.

\item \emph{\textbf{g}roup} $\group{G}{A}$, which groups trees according to a
  grouping condition $G$ and collects values of interest according to an
  aggregation condition $A$. Both $G$ and $A$ are (possibly empty) sequences of
  elements of the form $p/p'$, where $p'$ is a path in the input trees, and
  $p$ a path in the output trees.
  In these sequences, if $p$ coincides with $p'$, then we simply write $p$
  instead of $p/p$.  Each group in the output will have an \id whose value is
  given by the values of $p'$ in $G$ for that group.  Consider, e.g., the trees
  $\tree${\small(\lobject\id:\,1, a:\,"a1"\robject)} and $\tree${\small(\lobject
    \id:\,2, a:\,"a2", d:\,"d2"\robject)}.  Then $\group{}{\text{ids}/\id}$
  groups them in \mbox{$\tree${\small(\lobject\id: \nullvalue, ids: [1,
      2]\robject)},} while $\group{\text{d}}{\text{a}}$ produces two groups
  $\tree${\small(\lobject\id:\,\lobject \robject, a:\,["a1"]\robject)} and
  $\tree${\small(\lobject\id: \lobject d:\,"d2"\robject, a:\,["a2"]\robject)}.

\item \emph{\textbf{l}ookup} $\lookup{p_1=C.p_2}{p}$, which joins input trees
  with trees in an external collection $C$, using a local path $p_1$ and a path
  $p_2$ in $C$ to express the join condition, and stores the matching trees in
  an array under a path $p$.  E.g., let $C$ consist of
  $\tree${\small(\lobject\id:\,1, a:\,3\robject)} and
  \mbox{$\tree${\small(\lobject \id:\,2, a:\,4\robject)}}.  Then
  $\lookup{\id=C.\text{a}}{\text{docs}}$ evaluated over $t$ adds to it the
  key-value pair {\small docs:\,[\lobject \id:\,2, a:\,4\robject]}.
\end{compactitem}
We consider also various fragments of \mquery, and we denote each fragment by
\MQ{$\alpha$}, where $\alpha$ consists of the initials of the stages that can be
used in queries in the fragment.  Hence, \mupgl denotes \mquery itself, and,
e.g., \mupg denotes \mupgl without lookup.
%


\smallskip

To define the semantics of \mquery, we first show how to interpret paths over
trees.

\begin{definition}
  Given a tree $t=(N,E,\lnode,\ledge)$, we interpret a (possibly empty) path
  $p$, and its concatenation $p.i_1...i_m$ with indexes $i_1,\dots,i_m$, as sets
  of nodes (below $k$ is a key):\\[1mm]
  \mbox{\quad}$\begin{array}[b]{r@{~}c@{~}l}
    \eval{\varepsilon} &=& \{\treeroot(t)\}
    \\
    \eval{p.k} &=& \{y \in N \mid
    \text{there are $i_1,\dots,i_m$, $m\geq 0$, }
    \text{and }x\in\eval{p.i_1...i_m}\\
    &&\hfill \text{ s.t. } (x,y)\in E \text{ and } \ledge(x,y)=k\}
    \\
    \eval{p.i_1...i_m} &=& \{y \in N \mid
    \text{there is $x\in\eval{p.i_1...i_{m-1}}$ s.t.\ }
    (x,y)\in E \text{ and } \ledge(x,y)=i_m\}
  \end{array}$\\[1mm]
  When $\eval{p}=\emptyset$, we say that the path $p$ is \emph{missing} in $t$.
\end{definition}
Observe that, in the above definition, the semantics of paths allows for
skipping over intermediate arrays at every step in the path.

Given a tree $t$ and a path $p$, when $\type(x,t)=\mathsf{ty}$, for each
$x \in \eval{p}$, where $\mathsf{ty}\in\{\tarray,\tliteral,\tobject\}$, we
define the \emph{type of $p$ in $t$}, denoted $\type(p,t)$, to be $\mathsf{ty}$.
Also, when $\type(p,t)=\tarray$ and $\type(x,t)=\mathsf{ty}$ for each
$x\in \eval{p.i}$ for $i\in I$, we write $\type(p[\,],t)=\mathsf{ty}$.

\begin{figure}[t]
  \centering
  \scalebox{0.87}{\begin{tabular}{@{}l@{~}|@{~}l@{}}
    Match &
    \begin{tabular}[t]{@{}ll@{\hspace{4cm}}ll}
      $t \models (p = v)$,
      & \multicolumn{3}{l}{if there is $x$ in $\eval{p}$ or $\eval{p.i}$ for $i\in I$
      such that $\mathsf{value}(x,t) = v$ holds}
      \\
      $t \models (\exists p)$, &if $\eval{p}\neq \emptyset$
      &
      $t \models \varphi_1 \land \varphi_2$, &if $t \models \varphi_1$ and
      $t \models \varphi_2$
      \\
      $t \models \neg \varphi$, &if $t \not\models \varphi$
      &
      $t \models \varphi_1 \lor \varphi_2$, &if $t \models \varphi_1$ or
      $t \models \varphi_2$
    \end{tabular}
    \\
    & $F\pipeline \match{\varphi} = \{ t \mid t \in F \text{ and }t \models \varphi\}$
    \\
    \hline\hline
    Project
    & $t \models (p_1=p_2)$ if there is a value $v$ such that $t\models (p_1=v)
    \land (p_2=v)$, or $t \models \neg(\exists p_1)\land \lnot(\exists p_2)$
    \phantom{$!^{!^!}$}
    \\
    & $\project{p}(t) =\; \subtree(t,N_p)$, where $N_p$ are the nodes in $t$ on
    a path from $\treeroot(t)$ to a leaf via some $x \in \eval{p}$
    \\
    &
    $\project{q/p}(t)=\;
    \attach(q,\tree(v_{p}))$ if $t \models \exists p$, and
    $\emptyset$ otherwise
    \\
    &
    $\project{q/\cond{c}{d_1}{d_2}}(t)=\;
    \project{q/d_1}(t)$ if $t \models c$, and
    $\project{q/d_2}(t)$ otherwise
    \qquad~
    $\project{q/d}(t)=\;\attach(q,\tree(v_d))$, for other $d$
    \\[1mm]
    & $\project[\noid]{P}(t)=\;\bigoplus_{p \in P}\project{p}(t)~\oplus~\bigoplus_{(q/d)\in P}\project{q/d}(t)$
    \hspace{3cm} $\project{P}(t)=\; \project[\noid]{P}(t) \oplus
    \project{\id}(t)$
    \\
    & $F\pipeline \project[(\noid)]{P} = \{\project[(\noid)]{P}(t) \mid t\in F\}$
    \\\hline\hline
    Unwind
    & $\unwind{p}(t) =
    \{ (t \setminus \subtree(t, p)) \oplus \attach(p,\subtree(t, p.i))
    \}_{\eval{p.i}\neq\emptyset,\, i \in I}$ if $p$ is a first array, and
    $\emptyset$  otherwise
    \\
    & $\unwind[+]{p}(t) =
    \unwind{p}(t)$, if $\unwind{p}(t)\neq\emptyset$, and
    $\{t\}$ otherwise
    \\
    & $F\pipeline \unwind[(+)]{p} = \bigcup_{t\in F} \unwind[(+)]{p}(t)$
    \\[0.5mm]
    \hline\hline
    Group
    & $F \pipeline \group{}{a_1/b_1,..,a_m/b_m} =
     \big\{\attach(\id,\nullvalue) \oplus \bigoplus_{i=1}^m \attach(a_i, \colltoarray(F, b_i))\big\}$
    \\
    & $F \pipeline \group{g_1/y_1,..,g_n/y_n}{a_1/b_1,..,a_m/b_m} = $\\
    & ~~$\begin{array}[t]{@{}r@{~}l}
      \Big\{ \attach(\id, \object{}) \oplus %
      \bigoplus_{i=1}^m \attach\big(a_i,\colltoarray(F\pipeline\match{\varphi}, b_i) \big) \mid & %
      \varphi = \bigwedge_{j=1}^n(\neg\exists y_j),~ (F \pipeline \match\varphi) \neq \emptyset ~\Big\} \cup {}\\

      \Big\{ \bigoplus_{j \in J}\attach(\id.g_j,t_j) \oplus %
      \bigoplus_{i=1}^m \attach\big(a_i,\colltoarray(F\pipeline\match{\varphi}, b_i)  \big) \mid & %
      J \in 2^{\{1,..,n\}}\setminus\emptyset,\\
      \multicolumn{2}{r}{t_j\in \forest(F,y_j) \text{ for }j \in J,~
      \varphi = \bigwedge_{j \in J} ((y_j=t_j) \land \exists y_j)\land
      \bigwedge_{j\notin J}(\neg\exists y_j),~
      (F\pipeline\match{\varphi})\neq\emptyset~ \Big\}\phantom{{}\cup{}}}
    \end{array}$
    \\
    \hline\hline
    Lookup
    & $\lookup{p_1 = C.p_2}{p}[F'](t) = t \oplus \attach(p,\colltoarray(F'\pipeline\match{\varphi},\varepsilon))$, for $\varphi = (p_2 = v_{p_1})$ if $t\models \exists p_1$, and $\varphi = \neg \exists p_2$ otherwise\\
    & $F \pipeline \lookup{p_1 = C.p_2}{p}[F'] =
    \{\lookup{p_1=C.p_2}{p}[F'](t) \mid t \in F\}$
  \end{tabular}
}
\caption{The semantics of \mquery stages. Here, for a value definition $d$ (and
  a tree $t$), we denote by $v_d$ the \emph{value associated} to $d$ in $t$,
  defined as $d$ if $d\in V$, as $\mathsf{value}(\subtree(t,d))$ if $d$ is a
  path, as the value of $(t\models d)$ if $d$ is a Boolean value definition, and
  as $[v_{d_1}, \dots, v_{d_m}]$ if $d=[d_1,\dots,d_m]$. A path $p$ is a
  \emph{first array} in $t$ if $\type(p,t)=\tarray$ and
  $\type(p',t) \neq \tarray$, for each strict prefix $p'$ of $p$.  }
  \label{fig:mquery-semantics1}
\end{figure}

In Figure~\ref{fig:mquery-semantics1}, we define the semantics of the \mquery
stages: specifically, given a forest $F$ and a stage
$s$, we define the forest $F\pipeline
s$ (for a lookup stage, we also require an additional forest $F'$ as parameter).
For the match and project stages, we define when a tree
$t$ satisfies a criterion or a Boolean value definition $\varphi$, denoted $t
\models \varphi$.
%
We employ the classical semantics for ``deep'' equality of non-literal values,
and assume that $(v=\nullvalue)$ holds iff $v$ is \nullvalue.

To define the semantics of the \emph{unwind}, \emph{project}, \emph{group}, and
\emph{lookup} operators, we use auxiliary operators over trees,
informally introduced here (for a formal definition, see
Appendix~\ref{sec:tree-operations}).
Let $t$, $t_1$, $t_2$ be trees, $F$ a forest, $p$ a path, $N$ a set of nodes,
and $x$ a node. Then:
\begin{inparaenum}[\it (i)]
\item $\subtree(t,N)$ returns the subtree of $t$ induced by~$N$;
\item $\subtree(t,p)$ returns the subtree of $t$ hanging from $p$.  In the case
  where $|\eval{p}|>1$, it returns the array of single subtrees, and in the case
  where $\eval{p}=\emptyset$, it returns \nullvalue;
\item $\attach(p,t)$ constructs a new tree by attaching $p$ on top of the root
  of~$t$;
\item $t_1 \setminus t_2$ returns the tree resulting from removing the subtree
  $t_2$ from $t_1$;
\item $t_1 \oplus t_2$ constructs a new tree resulting from merging $t_1$ and
  $t_2$ by identifying nodes reachable via identical paths; and
\item $\colltoarray(F, p)$ constructs a new tree that is the array of all
  $\subtree(t,p)$ for $t\in F$, while $\forest(F,p)$ keeps all $\subtree(t,p)$
  in a set.
\end{inparaenum}

We provide some comments on the semantics of \mquery. Let $t$ be the tree in
Figure~\ref{fig:mongodb-document-tree}.
\begin{compactitem}
\item Match can check both the value of an array, and the value (of a path)
  inside an array.  E.g., $t \models (\text{\small contribs=["OOP", "Simula"]})$
  and $t \models (\text{\small contribs="OOP"})$. Note that the values of
  several paths inside an array can come from different array elements.  E.g.,
  $t \models \text{\small (awards.award="Rosing Prize")} \land \text{\small
   (awards.year=2001)}$.

\item For $P=q/p$, when $p$ is missing in the input tree, then also $q$ is
  missing in the output tree.  E.g.,
  $\project{\text{newPath}/\text{nonExistingPath}}(t) =
  \tree(\text{\lobject\id:\,4\robject})$. Note, however, the difference when
  $P = q/[p]$:
%
%
  $\project{\text{newPath}/[\text{nonExistingPath}]}(t) =
  \tree${\small(\lobject\id:\,4, newPath:\,[\nullvalue{}]\robject)}.

\item When renaming paths inside an array, the array gets ``disassembled''.
  E.g., the result of
  $\project{\text{awsName}/\text{awards.award},\,
   \text{awsYear}/\text{awards.year}}(t)$ is
\begin{lstlisting}
{ _id: 4,
  awsName: [ "Rosing Prize", "Turing Award", "IEEE John von Neumann Medal" ],
  awsYear: [ 1999, 2001, 2001 ] }
\end{lstlisting}

\item In $\group{G}{A}$, when $G$ is empty all input trees are grouped in one
  output tree where the value of \id is \nullvalue, and when $G$ is non-empty,
  i.e., $G = g_1/y_1,\dots,g_n/y_n$, then, each group of input trees corresponds
  to a (possibly empty) subset $Y$ of $\{y_1,\dots, y_n\}$, so that the trees
  agree not only on the respective values reached through all the paths
  $y_j\in Y$, but also on the non-existence of paths not in $Y$.  The group for
  $Y=\emptyset$ has $\object{}$ as the value of \id.

\item The stage $\group{G}{\text{name.first},\, \text{name.last}, \dots}$
  produces two independent arrays of first and last names, so it is impossible
  to reconstruct the original first and last name pairs. In order to keep this
  connection, one should aggregate the common prefix, in this case
  $\group{G}{\text{name},\dots}$.
\end{compactitem}

\smallskip
\noindent
The semantics of an \mquery is obtained by composing (via $\pipeline$) the
answers of its stages.
\begin{definition}
  Let $\q = C \pipeline s_1 \pipeline \cdots \pipeline s_n$ be an \mquery.  The
  \emph{result of evaluating $\q$ over} a \mongodb instance $D$, denoted
  $\ansmongo(\q,D)$, is defined as $F_n$, where $F_0 = D.C$, and for $i\in
  \{1,\dots,n\}$, $F_i = (F_{i-1} \pipeline s_i)$ if $s_i$ is not a lookup
  stage, and $F_i = (F_{i-1} \pipeline s_i[D.C'])$ if $s_i$ is a lookup stage
  referring to an external collection name $C'$.
\end{definition}

\section{Expressivity of \mquery}
\label{sec:expressivity}

In this section we characterize the expressivity of \mquery in terms of
nested relational algebra (NRA), and we do so by by developing translations
between the two languages.

\subsection{Nested Relational View of \mongodb}

We start by defining a nested relational view of \mongodb instances.
In the case of a \mongodb instance with an irregular structure, there is no
natural way to define such a relational view.  This happens either when the
type of a path in a tree is not defined, or when a path has different types in
two trees in the instance.  Therefore, in order to define a schema for the
relational view, which is also independent of the actual \mongodb instances, we
impose on them some form of regularity.  We start by introducing the notion of
\emph{type} of a tree, which is analogous to complex object
types~\cite{Koch06}, and similar to JSON schema~\cite{json-schema-2016}.

\begin{definition}
  Consider JSON values constructed according to the following grammar:\\[1mm]
  \mbox{\qquad}\textup{\begin{tabular}{rcl}
      \meta{Type} &\DEF& \tliteral{} ~\CHOICE~
      \term{\lobject}\meta{List<Key}\term{:}\meta{Type>}\term{\robject}
      ~\CHOICE~ \term{[}\meta{Type}\term{]}
    \end{tabular}}\\[1mm]
   Given such a JSON value $d$, we call the tree $\tree(d)$ a \emph{type}.
  %
  We say that a tree $t$ is \emph{of type} $\tau$ if for every path $p$ we have
  that $t\models \exists p$ implies
  \begin{inparaenum}[\itshape (i)]
  \item $\tau\models \exists p$,
  \item $\type(p,t)=\type(p,\tau)$, and
  \item $\type(p\arrayelement,t)=\type(p\arrayelement,\tau)$.
  \end{inparaenum}
  A forest $F$ is \emph{of type} $\tau$ if all trees in $F$ are of type $\tau$.
  A forest (resp., tree) is \emph{well-typed} if it is of some type.
\end{definition}

We now associate to each type $\tau$ a relation schema $\rschema(\tau)$ in
which, intuitively, attributes correspond to paths, and each nested relation
corresponds to an array in $\tau$.  In the following definition, given paths
$p$ and $q$, we say that $p.q$ is a \textit{simple extension} of $p$ if there
is no strict prefix $q'$ of $q$ such that $\type(p.q',\tau)=\tarray$.

\begin{definition}
  For a type $\tau$, the \emph{relation schema} $\rschema(\tau)$, is defined as
  $R_\tau(\ratt_\tau(\varepsilon))$, where, for a path $p$ in $\tau$,
  $\ratt_\tau(p)$ is the set of simple extensions $p'$ of $p$ such that $p'$ is
  an \emph{atomic attribute} if $\type(p', \tau)= \tliteral$, and %
  $p'$ is a \emph{sub-relation} if $\type(p', \tau)= \tarray$.  In the latter
  case, $p'$ has attributes $\{p'.\literalattr\}$ if
  $\type(p'\arrayelement, \tau)= \tliteral$, and $\ratt_\tau(p')$ otherwise.
  %
\end{definition}
Observe that the names of sub-relations and of atomic attributes in
$\rschema(\tau)$ are given by paths from the root in $\tau$, and therefore are
unique.

Next, we define the relational view of a well-typed forest.  In this view, to
capture the semantics of the missing paths, we introduce the new constant
\missingvalue.

\begin{definition}
  The \emph{relational view} of a well-typed forest $F$, denoted $\rel(F)$, is
  defined as $\{\row_\tau(R_\tau,\varepsilon, t) \mid t \in F\}$, where $\tau$
  is the type of $F$.
  For a relation name $R$ in $\rschema(\tau)$ and a path $p$,
  $\row_\tau(R, p, t)$ is the $R$-tuple
  $\{p.q:\relvalue(p.q,t)\}_{p.q\in\ratt_\tau(p)}$, where when
  $\eval{q}=\emptyset$, $\relvalue(p.q, t)$ is defined as $\missingvalue$,
  otherwise $\relvalue(p.q, t)$ is defined as\\
  \mbox{\quad}\begin{tabular}{ll}
   ~$\mathsf{value}(\subtree(t,q))$,& if $p.q$ is atomic;\\
   $\big\{ (p.q.\literalattr : \mathsf{value}(\subtree(t,q.i))) \mid
   \eval{q.i}\neq\emptyset, \text{ for } i \in I \big\}$,
   & if $\att_\tau(p.q)=\{p.q.\literalattr\}$;\\
   $\big\{ \row_\tau\left(p.q, p.q, \subtree(t,q.i)\right) \mid
   \eval{q.i}\neq\emptyset, \text{ for } i \in I \big\}$,
   & otherwise.
  \end{tabular}
\end{definition}

\begin{example} \label{ex:rel-schema}
  Consider the type $\tau_{\bios}$ for \bios:
\begin{lstlisting}
{ "_id": "literal",
  "awards": [ { "award": "literal", "year": "literal" } ],
  "birth": "literal",
  "contribs": [ "literal" ],
  "name": { "first": "literal", "last": "literal" } }
\end{lstlisting}
Then,  
$\rschema(\tau_\bios)$ is defined as \bios\/\big(\id, awards(awards.award,
awards.year), birth, contribs(contribs.\literalattr), name.first,
name.last\big).
Moreover, for the tree $t$ in Figure~\ref{fig:mongodb-document-tree}, the
relational view $\rel(\{t\})$ is illustrated in Figure~\ref{fig:rel-view}.
\begin{figure*}
  \centering
  \scalebox{0.8}{
    \begin{tabular}{|@{~}c@{~}|@{~}c@{~}|@{~}c@{~}|@{~}c@{~}|@{~}c@{~}|@{~}c@{~}|}
    \hline
    \id &
    \begin{tabular}{@{}l@{}}
      \\[-3.5mm]
      \begin{tabular}{@{}|@{~}P{4.6cm}@{~}|@{~}P{1.6cm}@{~}|@{}}
        \hline
        \multicolumn{2}{|c|}{awards} \\
        \hline
        awards.award & awards.year \\\hline
      \end{tabular}
      \\[-3.5mm]~
    \end{tabular}
    & birth &
    \begin{tabular}{@{}c@{}}
      contribs\\\hline
      contribs.\literalattr
    \end{tabular}
    & name.first & name.last
    \\\hline
    4
    &
    \begin{tabular}{@{}l@{}}
      \\[-3.5mm]
      \begin{tabular}{@{}|@{~}P{4.6cm}@{~}|@{~}P{1.6cm}@{~}|@{}}
        \hline
        Rosing Prize & 1999\\ \hline
        Turing Award & 2001 \\ \hline
        IEEE John von Neumann Medal & 2001\\ \hline
      \end{tabular}
      \\[-3.5mm]~
    \end{tabular}
    &
    1926-08-27
    &
    \begin{tabular}{@{}|c|@{}}
      \hline
      OOP\\ \hline
      Simula\\
      \hline
    \end{tabular}
    &
    Kristen
    &
    Nygaard
    \\\hline
  \end{tabular}}
  \caption{Relational view of the document about Kristen Nygaard}
  \label{fig:rel-view}
\end{figure*}
\qed
\end{example}

To define the relational view of \mongodb instances, we introduce the notion of
\emph{(\mongodb) type constraints}, which are given by a set $\S$ of pairs
$(C,\tau)$, one for each collection name $C$, where $\tau$ is a type.  We say
that a database $D$ \emph{satisfies} the constraints $\S$ if $D.C$ is of type
$\tau$, for each $(C,\tau) \in \S$. For a given $\S$, for each $(C,\tau)\in\S$,
we refer to $\tau$ by $\tau_C$.  Moreover, we assume that in
$\rschema(\tau_C)$, the relation name $R_{\tau_C}$ is actually $C$.

\begin{definition}
  Let $\S$ be a set of type constraints, and $D$ a \mongodb instance satisfying
  $\S$.  The \emph{relational view} $\rdb_{\S}(D)$ of $D$ with respect to $\S$
  is the instance $\{\rel(D.C) \mid (C,\tau)\in\S\}$.
\end{definition}

Finally, we define equivalence between \mqueries and NRA queries.  To this
purpose, we also define equivalence between two kinds of answers: well-typed
forests and nested relations.

\begin{definition}
  A well-typed forest $F$ is \emph{equivalent} to a nested relation $\R$,
  denoted $F \simeq \R$, if $\rel(F)=\R$.
  %
  An \mquery~$\q$ is \emph{equivalent to} an NRA query $Q$ w.r.t.\ type
  constraints~$\S$, denoted $\q\equiv_\S Q$, if
  $\ansmongo(\q,D) \simeq \ansra(Q,\rdb_{\S}(D))$, for each \mongodb instance
  $D$ satisfying $\S$.
\end{definition}

Notice that, the above definition of equivalence between well-typed forests and
nested relations appears to be asymmetric,
since it would in principle allow for nested relations that are not equivalent
to any well-typed forest.  We notice, however, that the \mongodb view of a
nested relation always exists, is well-typed, and can be defined in a
straightforward way.  Therefore, we can consider both translations (between NRA
and \mquery, and vice-versa), as defined on well-typed forests and their
relational views.

\subsection{From NRA to \mquery}
\label{sec:nra2mquery}

We now show that \mupgl captures NRA, while \mupg captures NRA over a single
collection.

In our translation from NRA to \mquery, we have to deal with the fact that an
NRA query in general has a \emph{tree} structure where the leaves are relation
names, while an \mquery contains \emph{one sequence} of stages.  So, we first
show how to ``linearize'' tree-shaped NRA expressions into a \mongodb pipeline.
More precisely, we show that it is possible to combine two \mupg sequences
$\q_1$ and $\q_2$ of stages into a single \mupg sequence $\pipe(\q_1,\q_2)$, so
that the results of $\q_1$ and $\q_2$ can be accessed from the result of
$\pipe(\q_1,\q_2)$ for further processing.
We define $\pipe(\q_1,\q_2)$ as
$\spec \pipeline \subq_1(\q_1) \pipeline \subq_2(\q_2)$.  The idea of $\spec$
is to duplicate each tree $t$ in the input forest to $t_1$ and $t_2$, and to
specialize them with the aim that $t_j \models (\actRel=j)$ and the copy of $t$
is stored in $t_j$ under the key \valuefont{rel$j$}, for $j\in\{1,2\}$.  The
idea of $\subq_j(\q_j)$ is to execute $\q_j$ so that it affects only the trees
with~{\small$(\actRel=j)$}, hence does not interfere with $\q_{3-j}$, and
stores its result in the trees under the key \valuefont{rel$j$}.

We set
$\spec=\project[\noid]{\text{origDoc}/\varepsilon,\,\actRel/[1,2]}
\pipeline \unwind{\actRel} \pipeline
\project{\actRel,\,\{\text{rel}i/(\cond{\actRel=i)}{\text{origDoc}}{\text{dummy}}\}_{i=1,2}}$,
where \valuefont{dummy} is a path that does not exist in any collection.
In this way, we obtain that
$
  \{t\} \pipeline \spec =
  \{\ \tree(\text{\small \lobject\actRel:\,1, rel1: $t$\robject}), \
  \tree(\text{\small \lobject\actRel:\,2, rel2: $t$\robject})\ \}
$, for each tree $t$.

\begin{figure}
  \centering
\scalebox{0.93}{$
  \begin{array}{@{}c@{~}|@{~}l@{~}|@{}c@{~}|@{~}l@{}}
    s&\subq_j(s) & s&\subq_j(s)
    \\\hline
    \match{\varphi} & \match{(\actRel=3-j) \lor \varphi_{[p\to\text{rel}j.p]}}
    &
    ~\group{g/y}{a/b}
    &
    \multirow{4}{*}{\begin{array}[t]{@{}l@{}}
        \group{\text{rel}j.g/\text{rel}j.y,\, \actRel}{\text{rel}j.a/\text{rel}j.b,\, \text{rel}(3-j)} \pipeline{} \\
        \project[\noid]{\substack{\text{rel}(3-j),\, \actRel/\id.\actRel,\, \text{rel}j.a,\qquad~~\\\qquad\qquad\qquad\qquad\text{rel}j.\id.g/\id.\text{rel}j.g}} \pipeline{} \\
        \project{\actRel,\, \{\text{rel}i/\cond{(\actRel=i)}{\text{rel}i}{\text{dummy}}\}_{i=1,2}}  \pipeline{}\\
        \unwind[+]{\text{rel}(3-j)}
      \end{array}}
    \\
    \unwind[+]{p} & \unwind[+]{\text{rel}j.p} &&
    \\
    \unwind{p} & \match{(\actRel=3-j) \lor ((\exists \text{rel}j.p) \land \neg(\text{rel}j.p=[]))} \pipeline \unwind[+]{\text{rel}j.p} &&
    \\
    \project{p,\,q/d} & \project{\substack{\text{rel}(3-j),\, \actRel,\, \text{rel}j.\id,\, \text{rel}j.p,\qquad\qquad\quad~\\\qquad\text{rel}j.q/\cond{(\actRel=j)}{d_{[q'\to\text{rel}j.q']}}{\text{dummy}}}} &&
  \end{array}
$
}
\caption{Subquery $\subq_j(s)$ for stage $s$, where we have detailed only the
 short forms for project and group stages. We use $e_{[p\to q]}$ to denote the
 expression $e$ in which every occurrence of the path $p$ is replaced by the
 path $q$.}
  \label{fig:subquery}
\end{figure}

As for $\subq_j(\q_j)$, $j\in\{1,2\}$, it is defined as
$\subq_j(s_1) \pipeline\cdots\pipeline \subq_j(s_n)$, for
$\q_j = s_1 \pipeline \cdot\cdot\cdot \pipeline s_n$, where $\subq_j$ for
single stages is defined in Figure~\ref{fig:subquery}.
Since the idea of $\subq_j(s)$ is to affect only the trees with
{\small$(\actRel=j)$}, $\subq_j(\mu_\varphi)$, selects all trees with
{\small$(\actRel=3-j)$}, while among the trees with {\small$(\actRel=j)$} it
selects only those that satisfy $\varphi$, in which all original paths $p$ are
replaced by \valuefont{rel$j.p$}.
The unwind stage $\unwind{p}$ cannot be implemented simply by
$\unwind{\text{rel}j.p}$, since all trees with {\small$(\actRel=3-j)$} would be
lost (they do not contain the path \valuefont{rel$j.p$}).  Therefore we rely on
$\unwind[+]{\text{rel}j.p}$, selecting among the trees with
{\small$(\actRel=j)$} only those where the path $\text{rel}j.p$ is present and
its value is not the empty array.
The encoding of the project stage $\project{p,\,q/d}$ needs to make sure that
\valuefont{rel$(3-j)$} and \valuefont{\actRel} are not lost, and that the path
\valuefont{rel$j.q$} is not created in the trees with {\small$(\actRel=3-j)$}
(guaranteed by the conditional expression for $q/d$).
The encoding of the group stage $\group{g/y}{a/b}$
\begin{inparaenum}[\itshape (i)]
\item adds \valuefont{\actRel} to the grouping condition so as to group all
  trees with {\small$(\actRel=3-j)$} in one tree,
\item renames the paths \valuefont{\id.\actRel} and \valuefont{\id.rel$j.g$},
\item normalizes the trees by making sure that the trees with
  {\small$(\actRel=i)$} contain only \valuefont{rel$i$} but not
  \valuefont{rel$(3-i)$}, and finally
\item unwinds the array \valuefont{rel$(3-j)$} where all original trees with
  {\small$(\actRel=3-j)$} have been aggregated.
\end{inparaenum}

\begin{example}
  Consider the sequences of stages
  $\q_1=\match{\text{name.first}=\text{"Kristen"}}\pipeline\project{\text{name}}$
  and $\q_2=\match{\exists\text{awards}}\pipeline\project{\text{awards}}$. Then
  $\pipe(\q_1,\q_2)$ is the following sequence of stages:\\[1mm]
  \mbox{\qquad}$\begin{array}{@{}r@{~}l}
      \spec ~\pipeline~ & \match{(\actRel=2)\lor(\text{rel1.name.first}=\text{"Kristen"})} ~\pipeline~ %
        \project{\text{rel2},\,\actRel,\,\text{rel1}.\id,\,\text{rel1.name}}\pipeline{}\\
      & \match{(\actRel=1)\lor(\exists\text{rel2.awards})} ~\pipeline~ %
        \project{\text{rel1},\,\actRel,\,\text{rel2}.\id,\,\text{rel2.awards}}\\
  \end{array}$\\[1mm]
  Let $t$ be the tree in Figure~\ref{fig:mongodb-document-tree}. The result
  of $\{t\} \pipeline\pipe(\q_1,\q_2)$ consists of two trees:
\begin{lstlisting}
{ "actRel": 1,
  "rel1": {"_id": 4,  "name": { "first": "Kristen", "last": "Nygaard" } } },
{ "actRel": 2,
  "rel2": {"_id": 4,  "awards": [
      { "award": "Rosing Prize", "year": 1999, "by": "Norwegian Data Association" },
      { "award": "Turing Award", "year": 2001, "by": "ACM" },
      { "award": "IEEE John von Neumann Medal", "year": 2001, "by": "IEEE" } ] } }
\end{lstlisting}
%
\vspace{-0.7cm}
\qed
\end{example}

\begin{figure*}
  \centering
\scalebox{0.9}{$
    \begin{array}[t]{c|l}
      Q & \raTomaq(Q)
      \\\hline
      C & \project{\att(C)}
      \\
      \sigma_\psi(Q) &
      \raTomaq(Q) \pipeline
      \project{\att(Q),\, \text{cond}/\psi} \pipeline
      \match{\text{cond}=\mathbf{true}} \pipeline \project{\att(Q)}
      \\
      \pi_{S}(Q) &
      \raTomaq(Q) \pipeline
      \project{S}
      \\
      \nest_{S \to b}(Q) & \raTomaq(Q) ~\pipeline %
      \project{(\att(Q)\setminus S),\, \{b.p/p \,\mid\, p \in S\}}\pipeline
      \group{(\att(Q)\setminus S)}{b} ~\pipeline~
      \project[\noid]{b,\, \{p/\id.p \,\mid\, p \in\att(Q)\setminus S\}}
      \\
      \unnest_a(Q) & \raTomaq(Q) ~\pipeline~
      \unwind{a} 
      \\
      Q_1 \times Q_2 & \pipe(\raTomaq(Q_1),\raTomaq(Q_2)) \pipeline
      \group{}{\text{rel1},\, \text{rel2}} ~\pipeline~
      \unwind{\text{rel1}} ~\pipeline~
      \unwind{\text{rel2}}
      \\
      Q_1\cup Q_2 & \pipe(\raTomaq(Q_1),\raTomaq(Q_2)) \pipeline
      \project{\text{rel1},\,\text{rel2},\,\{\text{p}i/\cond{(\actRel=1)}{\text{rel}1.\text{p}i}{\text{rel}2.\text{p}i}\}_{i=1}^n}\pipeline{}\\[-1mm]
      & \group{\text{p1},\dots,\text{p}n}{} ~\pipeline~
      \project[\noid]{\{\text{p}i/ \id.\text{p}i\}_{i=1}^n}
      \\[1mm]
      Q_1\setminus Q_2 & \pipe(\raTomaq(Q_1),\raTomaq(Q_2)) \pipeline
      \project{\text{rel1},\,\text{rel2},\, \{\text{p}i/\cond{(\actRel=1)}{\text{rel}1.\text{p}i}{\text{rel}2.\text{p}i}\}_{i=1}^n}\pipeline{}\\[-1mm]
      &\group{\text{p1},\dots,\text{p}n}{\text{rel2}} ~\pipeline~
      \match{\text{rel2}=\arrayelement} ~\pipeline~
      \project[\noid]{\{\text{p}i/\id.\text{p}i\}_{i=1}^n}
      \\
    \end{array}
      $
}
  \caption{Translation from NRA to \mupg}
  \label{fig:nra2mq}
\end{figure*}

We start with a singleton set $\S=\{(C,\tau_C)\}$ of type constraints for a
collection name~$C$, and consider an NRA query $Q$ over the relation name $C$
(with schema $\rschema(\tau_C)$).  The translation of $Q$ is the \mupg query
$C \pipeline \raTomaq(Q)$, where $\raTomaq(Q)$ is defined recursively in
Figure~\ref{fig:nra2mq}, where we overload the function $\att$ and assume that
for an NRA query $Q'$, $\att(Q')$ is the attribute set of the schema implied by
$Q'$.
The translation of $Q_1 \times Q_2$ first groups all input trees in one tree,
where all trees $t_i$ that are the answers to $Q_i$ are aggregated in arrays
\valuefont{reli}, and then unwinds these arrays, thus producing all possible
pairs $(t_1,t_2)$.  The translations of $Q_1 \cup Q_2$ and $Q_1 \setminus Q_2$,
where we assume that $\att(Q_i) = \{p_1,\dots,p_n\}$, first create fresh paths
\valuefont{p$i$} in each tree to be used in the grouping condition. Then, in
the case of union it only remains to rename the paths \valuefont{\id.p$i$} back
to \valuefont{p$i$}, while in the case of difference, we also select only those
``tuples'' $(p_1,\dots,p_n)$ that were not present in the answer to $Q_2$.

\begin{theorem}
  \label{thm:nra-to-mupg-correct}
  Let $Q$ be a NRA query over $C$.  Then $C \pipeline \raTomaq(Q) \equiv_\S Q$.
\end{theorem}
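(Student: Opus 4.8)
The plan is to prove the statement by structural induction on the NRA query $Q$, showing simultaneously that the type $\tau'$ of the forest $\ansmongo(C \pipeline \raTomaq(Q), D)$ matches the schema $\att(Q)$ derived from the NRA side, and that $\rel_{\tau'}(\ansmongo(C \pipeline \raTomaq(Q), D)) = \ansra(Q, \rdb_\S(D))$. The base case is $Q = C$: here $\raTomaq(C) = \project{\att(C)}$, and I would check that applying the project stage (which keeps exactly the top-level paths of $\tau_C$, including \id by default) to $D.C$ yields a forest whose relational view is precisely $\rel_{\tau_C}(D.C)$, i.e.\ the relation named $C$ in $\rdb_\S(D)$. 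This amounts to unwinding the definitions of $\rschema$, $\rel_\tau$, and the project semantics from Figure~\ref{fig:mquery-semantics1}.

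For the inductive step I would handle each NRA operator using the corresponding row of Figure~\ref{fig:nra2mq}, relying on the induction hypothesis for the subqueries. The unary cases ($\sigma_\psi$, $\pi_S$, $\nest_{S\to b}$, $\unnest_a$) are the more routine ones: for selection, the trick is that NRA filters $\psi$ are first materialized into a Boolean-valued path \valuefont{cond} via extended projection, then matched against \truevalue, then projected away — so I must check that the \mquery Boolean value definition $d$ corresponding to $\psi$ evaluates on a tree $t$ exactly when $\psi$ holds on the tuple $\row_\tau(\ldots,t)$; here I need to be careful about the \missingvalue constant and the fact (flagged in the "Notes on our Semantics") that \emph{project}'s comparison semantics differs from \emph{match}'s, which is precisely why the construction routes the test through a project-then-match. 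For nest and unnest, I would match $\group{}{\cdot}$ against $\nest$ and $\unwind{\cdot}$ against $\unnest$, using the remark in the preliminaries that $\unnest_a$ drops tuples with empty $a$ (which corresponds to plain $\unwind$, not $\unwind[+]$), and I would verify the surrounding project stages correctly move paths in and out of the $\id$ sub-object produced by group.

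The binary cases ($Q_1\times Q_2$, $Q_1\cup Q_2$, $Q_1\setminus Q_2$) are where the real work lies, because they rest on the correctness of the $\pipe(\q_1,\q_2)$ linearization construction. So the key preliminary lemma I would establish — and I expect this to be the main obstacle — is that for sequences of stages $\q_1,\q_2$ and any input forest $F$, the forest $F \pipeline \pipe(\q_1,\q_2)$ consists exactly of the trees of $F\pipeline\q_1$ tagged with \valuefont{actRel}${}=1$ and wrapped under key \valuefont{rel1}, together with the trees of $F\pipeline\q_2$ tagged \valuefont{actRel}${}=2$ under \valuefont{rel2}. This itself requires an inner induction on the stages of $\q_j$, checking for each stage type that $\subq_j(s)$ applied to a tree with a given \valuefont{actRel} tag acts as $s$ on the \valuefont{rel$j$} sub-tree when the tag is $j$ and as the identity otherwise — the subtlest points being the $\unwind$ case (where a guard match is needed because $\unwind$ on a missing/empty path kills the tree, which would wrongly delete the other relation's copy) and the $\group$ case (where after grouping one must re-split the surviving \valuefont{rel$(3-j)$} payload and re-attach \valuefont{actRel}, since group collapses it into the \id object). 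Once $\pipe$ is established, $\times$ follows by grouping the two tagged copies back together and unwinding both \valuefont{rel1} and \valuefont{rel2} arrays to form the cross product; $\cup$ follows by projecting each copy onto a common attribute set $\{p_1,\dots,p_n\}$ via a conditional that picks \valuefont{rel1.p$i$} or \valuefont{rel2.p$i$} according to \valuefont{actRel}, then grouping on all of $p_1,\dots,p_n$ to deduplicate (exploiting set semantics); and $\setminus$ follows the same pattern but additionally aggregates the \valuefont{rel2} witnesses and keeps only groups where that aggregate is empty. Throughout, I would keep track of the type annotation so that the final $\rel_{\tau'}(\cdot)$ is well-defined, and observe that all constructions are polynomial in $|Q|$, which is what makes the later complexity transfer work.
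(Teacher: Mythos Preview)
Your proposal is correct and matches the paper's approach essentially line for line: the paper proves the theorem by appealing to the semantics of the \mquery stages together with a separate lemma establishing exactly your ``key preliminary lemma'' about $\pipe(\q_1,\q_2)$, and that lemma is proved by the same inner case analysis on stage types (with the invariants you call \emph{acts-as-$s$-on-own-side} and \emph{identity-on-other-side}, which the paper phrases as properties (own), (other), and (clean)). Your identification of the $\unwind$ guard and the post-$\group$ re-splitting as the delicate points is spot on; if anything, your plan is more explicit than the paper's own rather terse write-up.
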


Next, we consider NRA queries across several collections, and show how to
translate them to \mupgl.  Let $\S$ be a set of type constraints, and $Q$ an
NRA query over the schemas for collections named $C_1,\ldots,C_n$, with $n\geq
2$.  Let us take $C_1$ to be the collection over which we evaluate the
generated \mquery.  Then, we first need to ``bring in'' the trees from the
collections $C_2,\ldots,C_n$, which we do in a preparatory phase
$\mathsf{bring}(C_2,\ldots,C_n)$, defined as:
\[
  \begin{array}{@{}l}
  \group{}{\text{coll}1/\epsilon} \pipeline
  \lambda^{\text{dummy} = C_2.\text{dummy}}_{\text{coll}2} \pipeline\cdots\pipeline
  \lambda^{\text{dummy} = C_n.\text{dummy}}_{\text{coll}n} \pipeline
  \project{\text{coll}1,..,\text{coll}n,\, \text{actColl}/[1..n]} \pipeline
  \unwind{\text{actColl}} \pipeline{} \\
  \project{\text{actColl},\,
   \{\text{coll}i/\cond{(\text{actColl}=i)}{\text{coll}i}{\text{dummy}}\}_{i=1}^n}
  \pipeline \unwind[+]{\text{coll}1} \pipeline \cdots \pipeline
  \unwind[+]{\text{coll}n}
\end{array}
\]
Second, we define a function $\raTomaq^\star(Q)$ that differs from
$\raTomaq(Q)$ in the translation of the collection names:
$\raTomaq^\star(C_i) = \match{\text{actColl}=i} \pipeline
\project{\{p/\text{coll}i.p \,\mid\, p \in \att(C_i)\}}$.
Finally, the translation of $Q$ is the \mupgl query
$C_1 \pipeline \mathsf{bring}(C_2,\dots,C_n) \pipeline \raTomaq^\star(Q)$.

\begin{theorem}\label{thm:nra-to-mupgl-correct}
  Let $Q$ be an NRA query over $C_1,\dots,C_n$, and
  $\q=C_1 \pipeline \mathsf{bring}(C_2,\ldots,C_n) \pipeline
  \raTomaq^\star(Q)$.  Then $\q\equiv_\S Q$.  Moreover, the size of $\q$ is
  polynomial in the size of $Q$.
\end{theorem}

Thus, we obtain that \mupgl captures full NRA, and that \mupg captures NRA over
a single collection.
We observe that the above translation serves the purpose of understanding the
expressive power of \mquery, but is likely to produce queries that \mongodb
will not be able to efficiently execute in practice, even on relatively small
database instances.
We also note that the translation from NRA to \mquery works even if we allow
for database instances $D$ such that $D.C$ is not strictly of type $\tau_C$,
but may also contain other paths not in $\tau_C$.

\subsection{From \mquery to NRA}
\label{sec:mquery2nra}

In this section, we aim at defining a translation from \mquery to NRA, and for
this we want to exploit the structure, i.e., the stages of \mqueries.  Hence,
we define a translation $\maqTonra{s}$ from stages $s$ to NRA expressions
such that, for an \mquery $C \pipeline s_1 \pipeline \cdots \pipeline s_n$, the
corresponding NRA query is defined as
$C\circ \maqTonra{s_1} \circ \cdots \circ \maqTonra{s_n}$\footnote{We follow
 the convention that $(f \circ g)(x) = g(f(x))$.}, where we identify the
collection name $C$ with the corresponding relation schema in the relational
view.
However, such translation might not always be possible, since \mquery is
capable of producing non well-typed forests, for which the relational view is
not defined.  This capability is due to value definitions in a project
operator: already a query as simple as
$\project{\text{a}/\cond{\id=1}{[0,1]}{\text{"s"}}}$ produces from the
well-typed forest
$\{\tree(\text{\object{\id:\,1}}),\ \tree(\text{\object{\id:\,2}})\}$ a non
well-typed one:
$\{\tree(\text{\object{\id:\,1, a:\,[0,1]}}), \tree(\text{\object{\id:\,2,
  a:\,"s"}})\}$. Therefore, in order to derive such a translation
$\maqTonra{s}$, we restrict our attention to \mqueries with stages preserving~well-typedness.

\begin{definition}
  %
  Given a type $\tau$ (and a type $\tau'$), a stage $s$ is \emph{well-typed}
  for $\tau$ (and $\tau'$), if for each forest $F$ of type $\tau$ (and each
  forest $F'$ of type $\tau'$), $F \pipeline s$ (resp., $F \pipeline s[F']$
  when $s$ is a lookup stage) is a well-typed forest.
\end{definition}

We observe that the match, unwind, group and lookup stages are always
well-typed, and, given such a stage $s$ and input types $\tau$, $\tau'$, we can
compute the output type $\tau_o$ of~$s$:
\begin{inparaenum}[\itshape (i)]
\item match does not change the input type, i.e., $\tau_o = \tau$,
\item for unwind and group stages $s$ it is obtained by evaluating $s$ over
  $\{\tau\}$, i.e., $\{\tau_o\}=\{\tau\} \pipeline s$, and
\item similarly, the output type for a lookup stage is the single tree in
  $(\{\tau\} \pipeline \lookup{p_1 = C.p_2}{p} [\{\tau'\}])$.
\end{inparaenum}
As for a project stage $s = \project{P}$ and an input type $\tau$, we can check
whether $s$ is well-typed for $\tau$, and if yes, we can compute the output
type $\tau_o$ of $s$, as follows.  For each $p/d \in P$, we compute the type
$\tau_d$ of $d$ with respect to $\tau$; if all $\tau_d$ are defined, then $s$
is well-typed and $\tau_o$ is the type where $\subtree(\tau_o,p)$ coincides
with $\tau_d$ for each $p/d\in P$, and that agrees with $\tau$ on all $p\in P$;
otherwise $s$ is not well-typed. The \emph{type} $\tau_d$ of a value definition
$d$ with respect to a type $\tau$ is defined inductively as follows:
\begin{inparablank}
\item $\tau_v = \tau'$ for a value $v$, if $v$ is of type $\tau'$, and
  undefined otherwise;
\item $\tau_\beta = \tree(\tliteral)$ for a Boolean value definition~$\beta$;
\item $\tau_p = \subtree(\tau, p)$, for a path~$p$;
\item $\tau_{[d_1,\dots,d_n]} = \tree([\tau_{d_1}])$ if
  $\tau_{d_1}=\cdots=\tau_{d_n}$, and undefined otherwise;
\item $\tau_{\cond{c}{d_1}{d_2}}$ is $\tau_{d_1}$ if $c$ is valid, $\tau_{d_2}$
  if $c$ is unsatisfiable, $\tau_{d_1}$ if $c$ is satisfiable and not valid and
  $\tau_{d_1}=\tau_{d_2}$, and undefined otherwise.
\end{inparablank}

Then, given a set $\S$ of type constraints and an \mquery
$\q=C \pipeline s_1 \pipeline \cdots \pipeline s_n$, we can check whether each
stage in $\q$ is well-typed for its input type determined by $\q$ and $\S$.  To
do so, we take the input type for $s_1$ to be $\tau_0$, where
$(C,\tau_0) \in \S$, and we compute sequentially the input type for each stage
$s_i$, as long as this is possible, i.e., all stages preceding it are
well-typed.

The translation $\maqTonra{s}$, for well-typed stages $s$, is quite natural,
although it requires some attention to properly capture
the semantics of \mquery.
It is reported in Appendix~\ref{app:mquery2nra}.

\begin{theorem}
  \label{thm:maq-to-nra-pipeline}
  Let $\S$ be a set of type constraints, $\q$ an \mquery
  $C \pipeline s_1 \pipeline \cdots \pipeline s_m$ in which each stage is
  well-typed for its input type, and
  $Q = C\circ \maqTonra{s_1}\circ \cdots \circ\maqTonra{s_m}$.  Then
  $\q \equiv_\S Q$, moreover, the size of $Q$ is polynomial in the size of $\q$
  and $\S$.
\end{theorem}

A natural question that comes up is in which cases an \mquery can be translated
to NRA even if contains stages that are not well-typed.  E.g., in the example
above, this can happen when the path \valuefont{a} is projected away in the
subsequent stages without being actually used.  We leave this problem for
future work.


\section{Complexity of \mquery}
\label{sec:complexity}

In this section we report results on the complexity of different fragments of
\mquery.
Specifically, we are concerned with the combined and query complexity of the
\emph{Boolean query evaluation} problem, which is the problem of checking
whether the answer to a given query over a given database instance is
non-empty.

Our first result establishes that the full \mupgl and also \mupg are complete
for exponential time with a polynomial number of alternations under \LOGSPACE
reductions \cite{ChKS81,John90}.  That is, have the same complexity as
monad algebra with atomic equality and negation \cite{Koch06}, which however is
strictly less expressive than NRA.

\begin{theorem}\label{lem:mupg-nexptime-complete}
  \mupg and \mupgl are \TAexppoly-complete in combined complexity, and in \ACz
  in data complexity.
\end{theorem}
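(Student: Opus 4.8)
The plan is to split the theorem into three claims: \mupgl lies in \TAexppoly; already \mupg is \TAexppoly-hard under \LOGSPACE reductions; and a fixed \mupgl query has \ACz data complexity. Since $\mupg \subseteq \mupgl$, this settles both languages.

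\textbf{Upper bound.} I would exhibit an alternating Turing machine that, given a query $\q = C\pipeline s_1\pipeline\cdots\pipeline s_m$ and an instance $D$, decides whether $\ansmongo(\q,D)\neq\emptyset$ in exponential time with polynomially many alternations. The first step is a size analysis of the intermediate forests $F_0,\dots,F_m$: a project stage can inflate a tree by a factor bounded by the query size, so after $m$ stages exponential-size trees arise; a group stage collects a whole sub-forest into arrays, so after a group following such projections, trees — and forests — of \emph{doubly} exponential size can arise, and iterating group and unwind pushes up the number of trees as well. Hence the machine must never materialise an $F_i$. The second step is to work with \emph{implicit} representations: a node of a tree of $F_i$ is named by an address of at most exponential length (the logarithm of a doubly-exponential quantity), and the machine evaluates, by recursion on $i$ and by tracing provenance \emph{backwards} through the pipeline, predicates such as ``the node at address $a$ of the $\ell$-th tree of $F_i$ is a literal equal to $v$'' or ``is an array/object''. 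For match and project the recursion step is a deterministic address translation to $F_{i-1}$; for unwind a tree of $F_i$ is a tree of $F_{i-1}$ together with one array index, both guessed existentially; for lookup the external collection is part of the polynomial-size input and is folded in directly. The only place where alternation is needed is group: a tree of $F_i$ is a group keyed by a guessed (at most exponential) value $\kappa$, whose array-valued entries are the array of all matching subtrees of $F_{i-1}$; to certify an entry, for each addressed position one existentially guesses a witnessing tree of $F_{i-1}$, and one \emph{universally} sweeps over all of $F_{i-1}$ to check that no matching tree was omitted. Each group stage then costs $O(1)$ alternations, so $O(m)$ in total, while every object ever touched has at most exponential size; this yields $\mupgl \in \TAexppoly$. (Equivalently, one may first apply the polynomial translation from \mquery to NRA of Theorem~\ref{thm:maq-to-nra-pipeline} and run the same argument on NRA pipelines; this is also how the matching \TAexppoly bound for NRA itself would follow.)

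\textbf{Lower bound.} For hardness I would reduce, in \LOGSPACE, a \TAexppoly-complete problem — the word problem of alternating Turing machines $M$ running in time $2^{n^k}$ with at most $n^k$ alternations — to Boolean query evaluation for \mupg. From $M$ and an input $w$ with $|w|=n$ I build, in \LOGSPACE, a \mupg query $\q_{M,w}$ and an instance $D_{M,w}$, both of polynomial size, such that $\ansmongo(\q_{M,w},D_{M,w})\neq\emptyset$ iff $M$ accepts $w$. The construction has three layers. First, from a small fixed instance, $\Theta(n^k)$ blocks of the form ``group, then project (duplicating the grouped array)'' build a complex object enumerating all $2^{n^k}$ bit strings of length $n^k$; these serve as addresses of tape cells and of time steps, so that configurations of $M$ become exponential-size complex objects. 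Second, existential configuration choices of $M$ are simulated by \emph{unwind} blocks — which multiply the forest, so that non-emptiness of the final result means ``some choice works'' — and universal choices by \emph{group}-then-\emph{match} blocks: group all branches by the choices made so far, then accept only if \emph{no} branch carries a ``reject'' tag, which, using the array-skipping semantics of match, is the criterion $\neg(\mathit{tag}=\falsevalue)$. Since $M$ makes at most $n^k$ alternations, $O(n^k)$ such blocks suffice. Third, validity of the run — each configuration a legal successor of the previous, together with initial and accepting conditions — is a conjunction of local conditions on cells; ``some cell is violated'' is again expressible by a match criterion skipping over the array of cells, and its negation gives the universal check over all cells. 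Correctness then follows by induction on alternation depth, mirroring the definition of acceptance of an alternating machine. (An alternative route is to start from the known \TAexppoly-completeness of monad algebra with atomic equality and negation~\cite{Koch06} and transfer it through the equivalences of Section~\ref{sec:expressiveness}.)

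\textbf{Data complexity and main obstacle.} For a fixed \mupgl query, Theorem~\ref{thm:maq-to-nra-pipeline} produces a fixed NRA query over the relational view of the instance, and testing non-emptiness of a fixed NRA query is first-order expressible over a flat encoding of the data, hence in \ACz. I expect the genuinely delicate part to be the upper bound: coping with intermediate forests that may be doubly exponential while staying inside \emph{singly} exponential alternating time, and in particular getting the implicit address-based encoding and the alternating completeness check at group stages exactly right. The lower-bound reduction is laborious but, once the gadgets above (exponential tableaux via iterated group, existential branching via unwind, universal branching via group-plus-``no-false''-match, local checks via array-skipping match) are identified, is a fairly routine Stockmeyer-style argument.
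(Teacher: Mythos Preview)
Your plan is correct and matches the paper's overall decomposition: hardness by encoding alternating exponential-time Turing machines in \mupg (the paper follows Koch's construction from~\cite{Koch06} very closely, building tapes as depth-$p_1(n)$ nested objects, computing the successor relation via a ``zoom-in'' trick, and then the accepting sets $A_i$ by iterated squaring); membership via a backward alternating procedure through the pipeline with alternation only at group stages; and \ACz data complexity via the polynomial translation to NRA together with~\cite{SuTa97}.

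One genuine technical difference is worth noting in the upper bound. You propose an \emph{address-based} lazy evaluation: name each node of a tree of $F_i$ by an (at most exponential) address and recursively compute its value by unwinding the pipeline. The paper instead does \emph{constraint propagation}: it maintains a set $\psi$ of atomic conditions (of the form $p=v$, $p\neq v$, $\exists p$, $\neg\exists p$) that the sought tree in $F_i$ must satisfy, and for each stage shape rewrites $\psi$ into conditions on $F_{i-1}$, guessing disjuncts where needed; branching and $\exists/\forall$ alternation appear only when processing group (e.g., a condition $a_j=[v_1,\dots,v_k]$ becomes the universal check ``every tree of $F_{i-1}$ has $b_j\in\{v_1,\dots,v_k\}$''). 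Both schemes avoid materialising the doubly-exponential intermediate forests and yield the same $O(m)$ alternation bound. Your framing makes the size analysis very transparent; the paper's constraint-set formulation is closer to Koch's original algorithm and is arguably easier to verify stage-by-stage, since each stage is a small syntactic rewrite of $\psi$ rather than a recursive pointer chase through exponential addresses.
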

\begin{proof}[Proof Sketch]
The proof of the lower bound follows the line of the \TAexppoly-hardness proof
in \cite{Koch06}.
As for the upper bound, we provide an algorithm that follows a strategy based
on starting the alternating computation from the last stage, inspired by a
similar one in \cite{Koch06}.  Let $\q$ be an \mupgl query and $D$ a database
instance.  We check whether there is a tree in $\ansmongo(\q,D)$ using an
alternating Turing machine running in exponential time with polynomially many
alternations.

Intuitively, for a forest $F'$ resulting from applying a stage $s$ in $\q$ to a
previous result $F$, i.e., $F'=F\pipeline s$, in general we need to check
whether there is a tree and/or all trees in $F'$ that satisfy some conditions
(such as, the value of a path $p$ in such a tree should/should not be $v$, or a
path $p$ should/should not exist), without explicitly constructing $F'$.  To do
so, we derive from the conditions on $F'$ suitable conditions to be checked on
$F$.
Such conditions are obtained/guessed from the criteria in match stages, and
Boolean value definitions and conditional value definitions in project stages.
Both branching and alternations happen because of the group stage. For
instance, if $s=\group{}{a_1/b_1,\,a_2/b_2}$ and the conditions on $F'$ contain
$a_1=[]$, then we need to check that there is no tree in $F$ satisfying
$\exists b_1$.  If $s=\group{g/y}{a_1/b_1,\,a_2/b_2}$ and the conditions on
$F'$ contain $\id.g=v$, $a_1\neq[]$ and $a_2\neq[]$, then we need to check
whether in $F$ there is a tree satisfying $y=v$ and $\exists b_1$, and 
a tree satisfying $y=v$ and $\exists b_2$.

The overall computation starts from $F'=\ansmongo(\q,D)$, and propagates the
constraints on the intermediate forests to the previous stages.
The ``depth'' of the checks is given by the number of stages, the branching and
the number of alternations are bounded by the size of $\q$, which give us
\TAexppoly upper bound.
The bound in data complexity can be shown as for NRA, known to be in \ACz
\cite{SuTa97}.
\end{proof}

As a corollary, we obtain a tight bound for the combined complexity of NRA.

\begin{corollary}\label{cor:NRA-complexity}
  NRA is \TAexppoly-complete in combined complexity.
\end{corollary}

Next, we study some of the less expressive fragments of \mquery.  We consider
match to be an essential operator, and we start with the minimal fragment \mq,
for which we show that query answering is tractable and very efficient.
\begin{theorem}\label{lem:mq-logspace-complete}
  \mq is \LOGSPACE-complete in combined complexity.
\end{theorem}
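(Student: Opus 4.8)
The plan is to prove the two directions separately: the \LOGSPACE upper bound is the routine part, and the \LOGSPACE-hardness is where the work lies.

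\textbf{Upper bound.} The first step is to notice that a match pipeline collapses. Since $\match{\varphi}$ only filters, for any forest $F$ we have $F \pipeline \match{\varphi_1} \pipeline \cdots \pipeline \match{\varphi_n} = \{\, t \in F \mid t \models \varphi_1 \land \cdots \land \varphi_n \,\}$, so Boolean evaluation of an \mq query $C \pipeline \match{\varphi_1} \pipeline \cdots \pipeline \match{\varphi_n}$ over $D$ reduces to deciding whether some document $t \in D.C$ satisfies the single criterion $\varphi = \bigwedge_i \varphi_i$. I would then give a \LOGSPACE procedure for this, obtained by composing logarithmic-space subroutines (using that \LOGSPACE is closed under composition): (a) iterate over the documents of $D.C$ with a single pointer; (b) for a fixed $t$ and a path $p$ occurring in $\varphi$, decide whether $\eval{p}=\emptyset$ and whether some node in $\eval{p}$ or in $\eval{p.i}$ (for some index $i$) carries a value equal to a given $v$ --- since $t$ is a tree, a node $y$ lies in $\eval{p}$ exactly when the sequence of key-labels on the branch from $\treeroot(t)$ to $y$, after erasing index-labels, equals $p$, which is tested by walking from $y$ up to the root keeping only the current node and a position counter into $p$, and the value test in addition compares the subtree at the reached node with $\tree(v)$, a tree-equality check whose recursion nesting is bounded by the (query-)size of $v$; (c) evaluate the Boolean formula $\varphi$ on the truth values so obtained for its atoms, which is the Boolean formula value problem, in \NCone and hence in \LOGSPACE. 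Composing (a)--(c) yields the upper bound; the \ACz data-complexity statement is inherited from that of NRA via the data-independent translation of Section~\ref{sec:expressiveness} (or proved directly, $\varphi$ being fixed).

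\textbf{Lower bound.} For \LOGSPACE-hardness I would reduce a canonical \LOGSPACE-complete problem --- the natural candidate being deterministic reachability (the $s$-$t$-connectivity problem in a graph of out-degree at most one, equivalently the order/forest-accessibility problem) --- to Boolean \mq evaluation, under first-order (\ACz) reductions. The idea is to encode the instance into the input database $D$ and to build a polynomially sized (but necessarily long) match criterion whose evaluation over $D$ forces a traversal mirroring the unique deterministic walk of the instance, so that the query answer is non-empty precisely when the target is reached.

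\textbf{Main obstacle.} The hardness direction is the crux: a match pipeline can only select documents and cannot thread state, perform a join, or dereference a value, so essentially all of the computation must be placed into the shape of the encoded document(s) together with the structure of a single, possibly large criterion, while keeping the reduction itself first-order. The upper bound, by contrast, is straightforward once the collapse of consecutive match stages is observed.
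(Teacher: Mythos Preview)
Your upper bound is essentially the paper's argument: collapse the pipeline to a single $\match{\varphi}$, evaluate the atoms in \LOGSPACE on each document, and feed the resulting truth values to a Boolean-formula evaluator (Buss). One technical wrinkle: your justification for the value test --- ``recursion nesting bounded by the (query-)size of $v$'' --- does not by itself give \LOGSPACE, since in combined complexity $|v|$ is part of the input and a recursion of that depth costs $O(|v|\log n)$ space. The paper closes this by citing Lindell's \LOGSPACE tree-isomorphism result; alternatively, because outgoing edge labels at every node are pairwise distinct here, one can test equality of two (sub)trees by iterating over all node pairs and checking in \LOGSPACE whether they sit at the same labeled path from their respective roots and, if so, agree on node label and outgoing edge-label set.

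For the lower bound you have the right source problem (directed forest accessibility, \LOGSPACE-complete under \NCone reductions), but the plan you sketch --- packing the walk into a ``long'' criterion --- is where the real gap lies. A match criterion is a Boolean combination of atoms $p=v$ and $\exists p$ with \emph{fixed, syntactic} paths $p$; it cannot dereference a value and use it as the next step, so there is no evident way to chain $n$ successor lookups inside $\varphi$ without already knowing the walk, which is the problem you are reducing from. The paper's reduction instead exploits a feature of the data model that your proposal does not invoke: the semantics of $\eval{p}$ transparently skips over array indices. Given an out-degree $\leq 1$ graph with two sinks $v,v'$, one reverses all edges, hangs the two resulting in-trees under a fresh object root via key-edges $a$ (to $v$) and $c$ (to $v'$), labels every original (reversed) edge with the index $0$, and attaches a fresh key-edge $b$ below $u$. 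The \emph{constant-size} criterion $\exists\, a.b$ then holds iff some node reachable from the root along $a$ followed by arbitrarily many index edges has an outgoing $b$-edge --- i.e., iff $u$ lies in the subtree hung under $v$, i.e., iff $v$ is reachable from $u$ in the original graph. All of the ``traversal'' is performed by the index-skipping path semantics, not by the criterion; the reduction itself is first-order.
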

\begin{proof}[Proof Sketch]
The lower-bound can be shown by a reduction from the directed forest
accessibility problem, known to be complete for \LOGSPACE under \NCone
reducibility \cite{CoMc87}, to the problem whether $t\models \exists p$, for a
tree $t$ and a path $p$.
The upper-bound follows from the following facts:
\begin{inparaenum}[\itshape (i)]
\item we can check in \LOGSPACE whether $t \models (p=v)$ and whether
  $t \models \exists p$, for a tree $t$, a path $p$, and a value $v$;
\item tree-isomorphism, needed to check equality between the sub-tree reached
  through a path $p$ and a complex value $v$ is in \LOGSPACE \cite{Lind92};
\item the Boolean formula value problem is A\LOGTIME-complete \cite{Buss87},
  and hence in \LOGSPACE.
\end{inparaenum}
\end{proof}

Next, we observe that the project and group operators allow one to create
exponentially large values by duplicating the existing ones. For instance, the
result of $\{\tree(\lobject a{:}1\robject)\} \pipeline s_1
\pipeline\cdots\pipeline s_n$, for $s_1=\cdots=s_n=\project{a.\ell/a,\,a.r/a}$,
is a set consisting of a full binary tree of depth $n$. Nevertheless, without
the unwind operator it is still possible to maintain tractability.

\begin{theorem}\label{lem:mpg-ptime-complete}
  \mp is \PTIME-hard in query complexity and \mpgl is in \PTIME in combined
  complexity.
\end{theorem}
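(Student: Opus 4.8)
The plan is to prove the two halves separately: \PTIME-hardness of \mp in \emph{query} complexity by a reduction from the Circuit Value Problem, and membership of \mpgl in \PTIME in combined complexity by an evaluation algorithm that keeps all intermediate forests in a shared (DAG) representation.

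\emph{Lower bound.} I would reduce from the \PTIME-complete Circuit Value Problem, which is hard already under \LOGSPACE reductions and on which only the query will depend, the database being fixed. Take as database the single document $\tree(\lobject\id{:}0\robject)$ in a collection $C_0$. Given a Boolean circuit with gates $g_1,\dots,g_m$ listed in topological order and output $g_m$, encode it as the pipeline $C_0 \pipeline \rho_1 \pipeline \cdots \pipeline \rho_m \pipeline \match{g_m = \truevalue}$, where $\rho_i = \project{g_1,\dots,g_{i-1},\,g_i/d_i}$ keeps the paths already computed and adds a path $g_i$ with value definition $d_i$: here $d_i$ is the constant $\truevalue$ or $\falsevalue$ if $g_i$ is an input gate, and $d_i$ is the Boolean value definition $g_j\land g_k$, $g_j\lor g_k$, or $\neg g_j$ (viewing the previously computed paths $g_j,g_k$ as value definitions) if $g_i$ is an $\land$-, $\lor$-, or $\neg$-gate. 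By the semantics of project, after $\rho_i$ the path $g_i$ holds $\truevalue$ exactly when gate $g_i$ evaluates to true, so the final match retains the single tree iff the circuit outputs true; hence $\ansmongo$ is non-empty iff the circuit is true. Each $\rho_i$ has size $O(i)$, so the whole query has size $O(m^2)$ and is \LOGSPACE-computable from the circuit.

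\emph{Upper bound.} The point is that, although project and group can build values of exponential size, they do so only by repeatedly duplicating and regrouping subtrees that are already present, so these values can be stored with full sharing. I would evaluate an \mpgl query $\q = C\pipeline s_1\pipeline\cdots\pipeline s_m$ on a database $D$ stage by stage, maintaining each intermediate forest $F_i$ as a DAG with one node per \emph{distinct} subtree occurring in $F_i$, and prove by induction on $i$ the invariant that both the number of trees in $F_i$ and the number of distinct subtrees in $F_i$ are polynomial in $|\q|+|D|$. This holds because \match{} only discards trees; \project{} and \lookup{} output exactly one tree per input tree; \group{} outputs at most $|F_{i-1}|+1$ trees, since every output tree other than the one collecting the trees with all grouping paths missing is witnessed by at least one tree of $F_{i-1}$ falling into its group; and the subtrees freshly created by a stage are, in every case, either the $O(|s_i|)$ ``spine'' nodes named by the paths and value definitions of $s_i$, or arrays whose elements are subtrees already present in $F_{i-1}$ — of which only polynomially many in total get created. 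It then remains to observe that each stage is computable in \PTIME on this representation: path evaluation $\eval{p}$ over a DAG (including the array-skipping built into the semantics) is plain reachability; deep equality of two DAG-represented values — needed for comparisons in \match{}, for bucketing trees by the value of the grouping paths, and for the auxiliary matches in the semantics of \group{} and \lookup{} — is in \PTIME by bottom-up hash-consing on the acyclic DAG; and the tree operations $\subtree$, $\attach$, $\oplus$, $\colltoarray$ used to assemble the outputs run in \PTIME while preserving sharing, the only subtle point being that $\oplus$ is applied in project and group only to polynomially many pieces whose spines form the trie of a prefix-free family of paths, so the merge cannot blow up. Composing the $O(|\q|)$ stages yields a \PTIME algorithm.

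\emph{Main obstacle.} The crux is the upper bound, and within it the size invariant: one must ensure that the interplay of $\oplus$ with the DAG sharing, and of the arrays produced by \group{} and \lookup{}, never secretly builds a super-polynomial DAG; pinning down the ``polynomially many distinct subtrees'' invariant together with the \PTIME-ness of DAG value equality is where the real effort goes, whereas the hardness reduction and the per-stage bookkeeping are routine.
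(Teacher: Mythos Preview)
Your proposal is correct and follows essentially the same approach as the paper: the lower bound is the same Circuit Value reduction via a chain of project stages that compute gate values one by one (the paper restricts to monotone circuits and uses $0/1$ rather than $\truevalue/\falsevalue$, but the idea is identical), and the upper bound is the same DAG-based stage-by-stage evaluation, with the paper phrasing the invariant as ``each $F_i$ grows at most linearly in $|s_i|+|F_{i-1}|$'' rather than your ``polynomially many distinct subtrees.'' Your explicit treatment of deep equality via hash-consing and of the prefix-free spine in $\oplus$ spells out points the paper leaves implicit, but there is no substantive divergence.
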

\begin{proof}[Proof Sketch]
  The lower-bound follows from the fact that we can compute the value of a
  monotone Boolean circuit consisting of assignments to $n$ variables in $n$
  project stages, and in the final match stage we can check whether the output
  variable evaluates to 1.
  For the upper-bound, we notice that it is not necessary to materialize the
  exponentially large trees, instead we can work on their compact
  representations in the form of directed acyclic graphs (DAGs). Thus,
  we can devise an algorithm for which the result of each stage grows at most
  linearly in the size of the stage and its input set of DAGs.  Hence, we can
  evaluate each stage on a structure that is at most polynomial.
\end{proof}

We can identify the unwind operator as one of the sources of complexity, as it
allows one to multiply the number of trees each time it is used in the
pipeline.  Indeed, adding the unwind operator alone causes already loss of
tractability, provided the input tree contains multiple arrays (hence in
combined complexity).

\begin{theorem}\label{thm:mu-complexity}
  \muq is \LOGSPACE-complete in query complexity and \NP-complete in combined
  complexity.
\end{theorem}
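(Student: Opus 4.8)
\emph{Overview.} The statement bundles four claims: membership in \NP and \NP-hardness for the combined complexity of Boolean query evaluation over \muq, and membership in \LOGSPACE and \LOGSPACE-hardness for its query complexity. The plan is to treat them separately, exploiting throughout one structural fact: an \muq pipeline can never \emph{enlarge} a tree. A match stage only discards trees, and an unwind stage $\unwind{p}$ or $\unwind[+]{p}$ replaces the (necessarily unique) first-level array at $p$ by one of its elements, so the output tree is strictly smaller than its parent (or, for $\unwind[+]{p}$, unchanged when $p$ is not a first-level array).

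\emph{Combined complexity, upper bound.} By the above, every tree occurring in an intermediate forest $F_i=F_{i-1}\pipeline s_i$ of a pipeline $C\pipeline s_1\pipeline\cdots\pipeline s_n$ over $D$ is a ``sub-configuration'' of some document of $D.C$, hence of size at most $|D|$. Moreover, from $F\pipeline\match{\varphi}=\{t\in F\mid t\models\varphi\}$ and $F\pipeline\unwind[n]{p}=\bigcup_{t\in F}\unwind[n]{p}(t)$, unrolling the pipeline shows that a tree lies in $\ansmongo(\q,D)$ exactly when it arises along a single branch: a choice of a document $t_0\in D.C$ together with, for each unwind stage, a choice of one index into the array currently sitting at that path (plus a ``pass-through'' alternative for $\unwind[+]{\cdot}$). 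Such a branch is a certificate of size $O(n\log|D|)$, and verifying it amounts to re-simulating the pipeline along that one branch: at each match stage one checks that the carried tree satisfies $\varphi_i$, which is in \LOGSPACE by the argument behind Theorem~\ref{lem:mq-logspace-complete}, and at each unwind stage one checks that the path is a first-level array and the guessed index exists. Hence the problem is in \NP.

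\emph{Combined complexity, lower bound.} I would reduce 3-SAT. Given $\phi=C_1\wedge\cdots\wedge C_m$ over $x_1,\dots,x_n$, let $D$ have a single collection $C$ containing the one document $\object{\id{:}1,\ x_1{:}[\truevalue,\falsevalue],\dots,x_n{:}[\truevalue,\falsevalue]}$, and let $\q = C\pipeline\unwind{x_1}\pipeline\cdots\pipeline\unwind{x_n}\pipeline\match{\psi}$, where $\psi$ is $\phi$ with each positive literal $x_i$ replaced by $(x_i=\truevalue)$ and each negative literal $\neg x_i$ replaced by $(x_i=\falsevalue)$. Each $x_i$ stays a first-level array until it is unwound, so after the $n$ unwind stages the forest is exactly $\{\object{\id{:}1,\ x_1{:}b_1,\dots,x_n{:}b_n}\mid b_i\in\{\truevalue,\falsevalue\}\}$, one tree per truth assignment, and the match retains those satisfying $\phi$. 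Hence $\ansmongo(\q,D)\neq\emptyset$ iff $\phi$ is satisfiable, and the construction is clearly computable in \LOGSPACE. With the upper bound this yields \NP-completeness. The reduction crucially uses several independent arrays in the input document, which is precisely the regime in which unwind already destroys tractability.

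\emph{Query complexity.} For the upper bound, fix $D$. Since intermediate trees only shrink and no new labels are ever introduced, the set of trees reachable from $D.C$ by iterated unwinding is finite, of size $M=M(D)$ bounded by $D$ alone, each of bounded size. A \LOGSPACE machine then scans $s_1,\dots,s_n$ left to right while keeping the current forest explicitly as a subset of these $M$ trees ($O(1)$ space) together with an $O(\log n)$-bit stage counter: a match stage filters the subset, evaluating each $t\models\varphi_i$ by the \LOGSPACE routine of Theorem~\ref{lem:mq-logspace-complete}, and an unwind stage replaces each retained tree by its constantly many images; the answer is non-empty iff the final subset is. This is a much-simplified variant of the compact-representation idea of Theorem~\ref{lem:mpg-ptime-complete}. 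For the matching \LOGSPACE lower bound I would use that \mq~$\subseteq$~\muq and give a \LOGSPACE-hardness reduction in the spirit of Theorem~\ref{lem:mq-logspace-complete} (a reduction from directed forest accessibility that encodes the instance into the query over a fixed database), so that \muq is \LOGSPACE-complete in query complexity.

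\emph{Main obstacle.} The delicate part is the \NP-hardness: a priori it is not clear that match and unwind alone --- with no join, projection, or grouping --- can express an \NP-hard problem. The construction works only because unwinding several parallel arrays realizes a full Cartesian product of binary choices, while a single match criterion is a full Boolean combination of equality atoms; getting these two ingredients to line up with the set-semantics of forests is the one point that needs care. Everything else reduces to the ``trees only shrink'' observation together with the per-stage primitives already established for \mq.
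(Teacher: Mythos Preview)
Your proposal is correct and matches the paper's approach on all four counts: the \NP-hardness via SAT with parallel $[0,1]$ arrays and successive unwinds is exactly the paper's reduction (Lemma~\ref{lem:mu-np-hard}); the \NP\ upper bound by guessing a document and one array index per unwind is precisely how the paper (via Theorem~\ref{lem:mupl-np-complete}/Lemma~\ref{lem:mupl-in-np}) treats unwind, and your ``trees only shrink'' observation is the reason DAGs are unnecessary in the \muq\ case; and your \LOGSPACE\ upper bound in query complexity via a constant-size space of trees reachable from a fixed $D$ is the paper's own argument. The one caveat worth flagging is the \LOGSPACE\ \emph{lower} bound in query complexity: you propose adapting the DFA reduction of Theorem~\ref{lem:mq-logspace-complete} by encoding the instance in the \emph{query} over a fixed database, but that reduction actually encodes the graph in the \emph{tree} and uses a constant query $\exists a.b$, so the adaptation is not immediate---the paper's own sketch is equally silent on this direction, so you are no less complete than the paper here.
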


\begin{proof}[Proof Sketch]
For the \LOGSPACE upper-bound, we observe that the number of times the unwind
operation can actually multiply the number of trees is bounded by the number of
arrays that are present in the input tree, and hence by a constant.  Hence, we
can both compute the result of the unwind stages, and evaluate the match
conditions in \LOGSPACE in the size of the query.
The \NP lower-bound results from a straightforward encoding of the Boolean
satisfiability problem: we start from an input forest containing $n$ arrays
$[0,1]$, then we generate with $n$ unwind stages all $2^n$ assignments, and
finally we check with a match stage whether there is a satisfying one.
The \NP upper-bound follows from the next theorem.
\end{proof}

Adding project and lookup does not increase the combined complexity, but does
increase the query complexity, since they allow for creating multiple arrays
from a fixed input tree.

\begin{theorem}\label{lem:mupl-np-complete}
  \mup and \mul are \NP-hard in query complexity, and \mupl is in \NP in
  combined complexity.
\end{theorem}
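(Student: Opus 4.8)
The plan is to prove the two \NP-hardness claims by explicit reductions from propositional satisfiability, and to prove \mupl~$\in$~\NP by a guess-and-check argument built on the DAG representation behind Theorem~\ref{lem:mpg-ptime-complete}.

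\textbf{Hardness of \mup and \mul in query complexity.} For both fragments I would reuse the SAT encoding already sketched for \muq in Theorem~\ref{thm:mu-complexity}: keep the database fixed, let only the query depend on the input formula $\varphi=\bigwedge_j C_j$ over variables $x_1,\dots,x_n$, arrange that after a chain of $n$ unwind stages the pipeline holds exactly one tree per truth assignment (with a designated path recording the value of $x_i$), and finish with a single match stage $\match{\varphi'}$, where $\varphi'$ replaces each positive literal $x_i$ in $\varphi$ by a path-equals-$1$ atom and each negative literal $\neg x_i$ by a path-equals-$0$ atom, so that by the match semantics on literal-valued paths the surviving forest is non-empty iff $\varphi$ is satisfiable. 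The fragments differ only in how the $n$ two-element ``registers'' are built from a fixed instance. For \mup I start from a fixed singleton collection and prepend the single project stage $\project{x_1/[0,1],\dots,x_n/[0,1]}$ before the unwinds, so $x_i$ ends up holding the literal $0$ or $1$. For \mul, which lacks project, I instead use a fixed two-document auxiliary collection $C$ whose documents agree on a key $c$ but disagree on a Boolean key $b$, and for each $i$ I insert the pair $\lookup{j=C.c}{x_i}\pipeline\unwind{x_i}$: the lookup attaches under $x_i$ the array of both documents of $C$ (both match the join value), the unwind branches on it, and after $n$ rounds there are again $2^n$ trees, the $i$-th bit recorded in $x_i.b$. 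In both cases the query is polynomial in $|\varphi|$, so \mup and \mul are \NP-hard already in query complexity, hence so is \mupl.

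\textbf{\mupl in \NP.} For the upper bound I would observe that, in the absence of group, every tree in an intermediate forest $F_i=F_{i-1}\pipeline s_i$ descends from a \emph{single} tree of $F_{i-1}$: match and project output at most one tree per input tree, lookup exactly one, and unwind one tree per element of the first-level array being unwound. Hence each tree of $\ansmongo(\q,D)$ is completely determined by the choice of an initial document $t_0\in D.C$ together with, at each unwind stage, the index of the selected array element; the derivation $t_0,\dots,t_m$ is then recomputed deterministically from $D$. Moreover — and this is the point that genuinely uses the absence of group — all arrays produced during evaluation have length bounded polynomially in $|D|+|\q|$: project can only build arrays of query-bounded length or copy existing ones, and lookup yields arrays no longer than the external collection, whereas it is precisely the $\colltoarray$ inside the group stage, applied to a possibly exponential forest, that would break this. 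Consequently each unwind index, and the number of unwind stages, are polynomially bounded, so the derivation is a polynomial-size witness; the \NP algorithm guesses $t_0$ and the unwind indices, recomputes $t_1,\dots,t_m$ stage by stage (rejecting if some match stage discards the tracked tree), and accepts iff $t_m$ is produced, which happens on some run exactly when $\ansmongo(\q,D)\neq\emptyset$.

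The step I expect to be the real obstacle is keeping the intermediate trees polynomially representable, since even a chain of project stages can inflate a single tree to exponential size (as in the $\project{a.\ell/a,\,a.r/a}$ example). I would handle this exactly as for Theorem~\ref{lem:mpg-ptime-complete}: represent every $t_i$ as a DAG with shared subtrees, so that a project stage enlarges the DAG by only $O(|s_i|)$ nodes and a lookup stage by at most $|D|$ subtrees copied from the external collection, while match and unwind do not enlarge it; over the $m\le|\q|$ stages this keeps each $t_i$ of size polynomial in $|D|+|\q|$. It then remains to verify — routine, given the \LOGSPACE facts from the proof of Theorem~\ref{lem:mq-logspace-complete} ($t\models\varphi$, array membership, tree-isomorphism for deep equality) and the explicit stage semantics — that each stage step and the evaluation of Boolean and conditional value definitions on DAG-represented trees are feasible in polynomial time. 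A polynomial witness together with polynomial-time verification gives \mupl~$\in$~\NP, matching the \NP-hardness of \mup.
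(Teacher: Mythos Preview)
Your proposal is correct and follows essentially the same approach as the paper: SAT reductions that use project (respectively, lookup) to manufacture the $n$ two-element arrays before the chain of unwinds, and for the upper bound the DAG representation from Theorem~\ref{lem:mpg-ptime-complete} combined with guessing, at each unwind stage, which array element to keep. The only cosmetic differences are that the paper tracks the whole set of DAGs (one per input document) rather than guessing a single initial document, and for \mul it joins on a nonexistent \valuefont{dummy} path (so every external document matches) rather than on a shared key value; your explicit argument that array lengths stay polynomial is subsumed in the paper by the fact that the DAG itself is polynomial, hence so is the out-degree of the array node being unwound.
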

\begin{proof}[Proof Sketch]
The proof of the lower-bound is analogous to the one for the \NP lower-bound in
Theorem~\ref{thm:mu-complexity}, except that now we can use either project or
lookup to generate the forest with $n$ arrays $[0,1]$.
For the upper-bound, we extend the idea of using DAGs as compact
representations of trees. We only specify how to evaluate an unwind stage:
instead of creating a separate DAG for each element of the array, we guess an
element of the array and produce at most one DAG for each input DAG.  This is
sufficient, since without group, we can evaluate each original tree
independently of the other ones.
\end{proof}

In the presence of unwind, group provides another source of complexity, since
in \mug we can generate doubly exponentially large trees, analogously to monad
algebra \cite{Koch06}. Let $t_0 = \tree(\lobject \id:\lobject
x:0\robject\robject)$ and $t_1 = \tree(\lobject \id:\lobject
x:1\robject\robject)$. The result of applying the \mug query $s_1 \pipeline
\cdots \pipeline s_n$, where %
$s_i = \group{}{x/\id.x} \pipeline%
\group{x.l/x,\, x.r/x}{} \pipeline %
\unwind{\id.x.l} \pipeline \unwind{\id.x.r}$, to $\{t_0,t_1\}$ is a forest
containing $2^{2^n}$ trees, each encoding one $2^n$-bit value.
    
Below we show that already \mug queries are \PSPACE-hard.

\begin{theorem}\label{lem:mug-pspace-hard}
  \mug is \PSPACE-hard in query complexity.
\end{theorem}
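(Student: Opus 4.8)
The plan is to prove \PSPACE-hardness by a \LOGSPACE reduction from the validity problem for quantified Boolean formulas (TQBF), which is \PSPACE-complete. Since we work in query complexity, I would fix the database instance $D$ once and for all to consist of a single document $\tree(\lobject \id{:}0,\, a{:}[0,1]\robject)$. Given a QBF $\Phi = Q_1 x_1\cdots Q_m x_m\, \psi(x_1,\dots,x_m)$, I would construct in logarithmic space a \mug query $\q_\Phi$ of size polynomial in $|\Phi|$ such that $\ansmongo(\q_\Phi, D)\neq\emptyset$ iff $\Phi$ is valid.

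The query $\q_\Phi$ proceeds in three phases. First (\emph{generation}): since \mug has no project stage, I would replicate the array $[0,1]$ with a single group stage $\group{a_1/a,\dots,a_m/a}{}$, which turns $D$ into the singleton forest $\{\tree(\lobject\id{:}\lobject a_1{:}[0,1],\dots,a_m{:}[0,1]\robject\robject)\}$ — this works precisely because $\forest(\{D\},a)$ is a singleton and so the group semantics forces a single output group with every key bound to the whole array; then $m$ unwind stages $\unwind{\id.a_1}\pipeline\cdots\pipeline\unwind{\id.a_m}$ produce one tree per truth assignment to $x_1,\dots,x_m$. Second (\emph{filtering}): a single match stage $\match{f(\psi)}$, where $f$ rewrites each literal $x_i$ (resp.\ $\lnot x_i$) as $(\id.a_i{=}1)$ (resp.\ $(\id.a_i{=}0)$) and leaves $\land,\lor$ unchanged, keeps exactly the trees encoding satisfying assignments of $\psi$. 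Third (\emph{quantifier elimination}): I would peel off the quantifiers from $Q_m$ down to $Q_1$. An existential $Q_i$ is handled by $\group{a_1/\id.a_1,\dots,a_{i-1}/\id.a_{i-1}}{}$, whose nonemptiness requirement $(F\pipeline\match\varphi)\neq\emptyset$ in the semantics of the group stage performs exactly the relational projection onto $x_1,\dots,x_{i-1}$ with duplicate elimination. A universal $Q_i$ is handled by $\group{a_1/\id.a_1,\dots,a_{i-1}/\id.a_{i-1}}{w/\id.a_i}\pipeline\match{(w{=}0)\land(w{=}1)}$, which aggregates into $w$ all values of $x_i$ attached to a given partial assignment and then retains it iff both $0$ and $1$ occur in $w$, i.e.\ iff both extensions survived. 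The outermost quantifier is treated specially: for $Q_1=\exists$ the query stops after the stage for $Q_2$, since then $\ansmongo(\q_\Phi,D)$ is nonempty iff $\Phi$ is valid; for $Q_1=\forall$ one appends $\group{}{w/\id.a_1}\pipeline\match{(w{=}0)\land(w{=}1)}$, which fails on an empty aggregate and hence never produces a false positive.

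Correctness is an induction over the quantifier prefix: after the stages for $Q_m,\dots,Q_{i+1}$, the $\id$-values occurring in the current forest are exactly the partial assignments $(v_1,\dots,v_i)$ for which $Q_{i+1}x_{i+1}\cdots Q_m x_m\,\psi(v_1,\dots,v_i,x_{i+1},\dots,x_m)$ holds (possible leftover bookkeeping keys such as $w$ are at the top level and are dropped by the next group stage, so they are irrelevant). The inductive step uses the case analysis on subsets $I$ of grouping paths together with the $(F\pipeline\match\varphi)\neq\emptyset$ requirement for the existential step, and additionally the match-semantics fact that $t\models(w{=}v)$ holds when $v$ occurs as an array element at $w$, for the universal step. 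Counting stages gives $1$ (replication) $+\,m$ (unwind) $+\,1$ (filter) $+\,O(m)$ (quantifier elimination), each of size $O(m+|\psi|)$, so $|\q_\Phi|$ is polynomial and the map $\Phi\mapsto\q_\Phi$ is \LOGSPACE-computable.

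I expect the main obstacle to be realizing the quantifier-elimination gadgets within the restricted toolkit of \mug: \mug lacks project, so even generating all $2^m$ assignments from a fixed instance hinges on the specific way group replicates a value; and \mug has no operator that is obviously ``universal'', so the crux is the observation that a group stage followed by a match can simulate $\forall$, by aggregating all children of a partial assignment into an array and then testing, deep inside a match criterion, that this array contains both $0$ and $1$. A secondary point requiring care is ensuring that $\group{}{\cdot}$ over an empty forest (which may yield a spurious singleton under our set semantics) cannot create a false positive; this is why every terminal stage is a match that fails on an empty aggregate, and why for $Q_1=\exists$ no final group is used at all.
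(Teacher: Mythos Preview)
Your proposal is correct and follows essentially the same route as the paper: a reduction from QBF validity in which a single group stage replicates the fixed array $[0,1]$ into $m$ copies, $m$ unwinds enumerate all assignments, a match stage filters by the matrix, and quantifiers are eliminated inside-out by a group (collecting the innermost surviving variable) followed by a match. The only differences are cosmetic: the paper uses the criterion $\text{val}=[0,1]$ for the universal test where you use $(w{=}0)\land(w{=}1)$ (equivalent under the array-aware match semantics), and the paper handles the existential case uniformly via an aggregation followed by $\match{\text{val}\neq[]}$ rather than your bare group plus special treatment of the outermost quantifier.
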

\begin{proof}
  Proof by reduction from the validity problem of QBF.  Let $\varphi$ be a
  quantified Boolean formula over the variables $x_1, \dots, x_n$ of the form
  $\mathsf{Q}_1 x_1 \mathsf{Q}_2 x_2 \dots \mathsf{Q}_n x_n.\psi$, for
  $\mathsf{Q}_i \in \{\exists,\forall\}$. We construct a forest $F$ and an
  \mug query $\q$ such that $F \pipeline \q$ is non-empty iff $\varphi$ is
  valid.

  $F$ contains a single tree $d$ of the form \object{x:\,[0,1]},
  and $\q$ is as follows:\\[1mm]
  \renewcommand{\id}{\texttt{\_\!\!\;id}\xspace}
  \mbox{\qquad}$\begin{array}{l}
    \group{\text{x1}/\text{x},~\dots,~\text{x}n/\text{x}}{}\pipeline
     \unwind{\id.\text{x1}} \pipeline\dots\pipeline \unwind{\id.\text{x}n}
    \pipeline \match{\psi'}\pipeline{}
    \\[-1mm]
    \group{\text{x1}/\text{\id.x1},~\dots,~\text{x}(n-1)/\text{\id.x}(n-1)}{\text{val}/\text{x}n}\pipeline
    \match{\textit{qua}_n(\text{val})}\pipeline{}
    \\[-1mm]
     \cdots \\[-1mm]
    \group{\text{x1}/\text{\id.x1}}{\text{val}/\text{x}2}\pipeline{}
    \match{\textit{qua}_2(\text{val})}\pipeline{}
    \\[-1mm]
    \group{}{\text{val}/\text{x}1}\pipeline{}
    \match{\textit{qua}_1(\text{val})}
  \end{array}$\\[1mm]
  \noindent where $\psi'$ is the criterion with occurrences of a variable $x_i$
  in $\psi$ encoded by the path \id.x$i$, $\textit{qua}_i(\text{val})$ is the
  expression $(\text{val}=[0,1])$ if $\mathsf{Q}_i$ is $\forall$, and the
  expression $(\text{val} \neq [])$ if $\mathsf{Q}_i$ is $\exists$.

  The query $\q$ consists of two logical parts. In the first one we create $n$
  arrays [0,1], unwind each of them, thus creating all possible $2^n$ variable
  assignments and then filter only the satisfying ones. In the second part, for
  each quantifier $\mathsf{Q}_ix_i$, we filter the assignments to the variables
  $x_1,\dots,x_{i-1}$ satisfying the formula
  $\mathsf{Q}_ix_i\ldots \mathsf{Q}_nx_n.\psi$ by using group.
\end{proof}




\section{Conclusions and Future Work}
\label{sec:conclusions}

We carried out a first formal investigation of \mongodb, a widely used NoSQL
database system, for which different vendors now claim compatibility.
We provided a formalization of the \mongodb data model, and of \mquery, a core
fragment of the \mongodb query language.  We studied the expressivity of
\mquery, showing the equivalence between its well-typed fragment (i.e., queries
composed of well-typed stages) and NRA, by developing compact translations in
both directions.  We further investigated the computational complexity of
significant fragments of \mquery, obtaining several (tight) bounds in combined
complexity.
As a byproduct, we have also established a tight complexity bound for NRA.

We have carried out our investigation on a real-world data model and query
language, of a widely adopted document database still lacking a proper
formalization, as opposed to studying a possibly abstract formalism not derived
from a real system, as done in recent proposals
\cite{hidders17jlogic,bourhis17json}.  Our work provides a better understanding
of the semantic and computational properties of \mongodb, hence, we believe
that
it will have a strong impact, both on the design and implementation and
on the usage of the system, since \mongodb is still under active development
(cf.\ the discussion in Appendix~\ref{sec:mquery-vs-mongodb}).

Our work still leaves several interesting theoretical questions for
investigation.  Various complexity results are still open, including the
precise complexity of \mug, and an understanding of \mquery under bag
semantics, which is the one adopted by \mongodb, and in the presence of lists
that represent arrays.  We have not addressed the problem of when an \mquery is
translatable to NRA even when not every stage is well typed, which could lead
to a more relaxed notion of well-typedeness.  With the latest v3.4, \mongodb
has been extended with a \emph{graph-lookup} stage in a pipeline, allowing for
a recursive search on a collection, and it is of interest to understand how
extending \mquery with this feature affects its formal and computational
properties.


We are currently working on applying the results presented here, to provide
high-level access to \mongodb data sources by relying on the standard
ontology-based data access (OBDA) paradigm \cite{PLCD*08}.  Specifically, we
are connecting the intermediate layer of an OBDA system \cite{CCKK*17}, which
generates relational queries, with a \mongodb backend, and we exploit for this
the translation from NRA to \mquery \cite{BCCRX16}.  In this context, it is of
particular importance to generate queries that can be efficiently executed by
the backend, and hence optimize the translation techniques, so as to ensure
scalability for complex queries over large collections.

\iftechreport
\else
\clearpage
\fi

\medskip
\noindent
\textbf{Acknowledgements.}  ~ We thank Christoph Koch and Dan Suciu for helpful
clarifications on nested relational algebra, and Henrik Ingo for information
about \mongodb.  We also thank Martin Rezk for his participation to initial
work on the topic of the paper.

\bibliographystyle{plain}
\bibliography{main-bib}

\clearpage
\appendix
\newenvironment{theoremnum}[1]{\par\smallskip\noindent\textbf{\textcolor{darkgray}{\ensuremath{\blacktriangleright}}~\textsf{Theorem~#1}.}\hspace*{0.3em}\em}{\smallskip}

\newenvironment{lemmanum}[1]{\par\smallskip\noindent\textbf{\textcolor{darkgray}{\ensuremath{\blacktriangleright}}~\textsf{Lemma~#1}.}\hspace*{0.3em}\em}{\smallskip}

\newenvironment{corollarynum}[1]{\par\smallskip\noindent\textbf{\textcolor{darkgray}{\ensuremath{\blacktriangleright}}~\textsf{Corollary~#1}.}\hspace*{0.3em}\em}{\smallskip}

\section{Semantics of Nested Relational Algebra}
\label{sec:semantics-nra}

\subsection{Extended projection}

Let $R$ be a relation schema. We define extended projection $\pi_P(R)$ in
detail, where $P$ is a set containing attributes of $R$ and elements of the
form $b/e$, where $b$ is a fresh attribute name and $e$ is an expression
defined according to the grammar:
\def\subrel{\mathsf{subrel}}
\def\evalu{\mathsf{eval}}
\def\tup{\mathsf{tup}}
$$
\begin{array}{r@{~}c@{~}l}
  e &\DEF& a \mid c \mid f \mid \cond{f}{e}{e} \mid \subrel(t,\dots,t)
  \\
  f &\DEF& \truevalue
  \mid \falsevalue
  \mid a = a
  \mid a = c
  \mid \neg f
  \mid f \land f
  \mid f \lor f
  \\
  t &\DEF& \{b{:}e,\dots,b{:}e\}\\
\end{array}
$$
Here, $a \in \att(R)$, $c$ is a constant atomic value, $f$ is an expression
that evaluates to a Boolean value, $b$ is a fresh attribute name, $t$ is a
tuple definition, and $\subrel(t_1,\dots,t_n)$ is a relation definition, which
constructs a relation from the tuples $t_1,\dots,t_n$, where all $t_i$ are
required to be of the same schema.

Let $r$ be an $R$-tuple. We define the evaluation $\evalu(e, r)$ of $e$ over
$r$ inductively as follows:
\begin{itemize}
\item $\evalu(a, r) = v$ where $a{:}v \in r$.
\item $\evalu(c, r) = c$.
\item $\evalu((e_1=e_2), r)$ is \truevalue if $\evalu(e_1, r)=\evalu(e_2,r)$, and
  \falsevalue otherwise.
\item $\evalu((e_1=c), r)$ is \truevalue if $\evalu(e_1, r)=c$, and
  \falsevalue otherwise.
\item $\evalu((\neg f), r) = \neg \evalu(f, r)$.
\item $\evalu((f_1 \land f_2), r) = \evalu(f_1, r) \land \evalu(f_2,r)$.
\item $\evalu((f_1 \lor f_2), r) = \evalu(f_1, r) \lor \evalu(f_2,r)$.
\item $\evalu(\cond{f}{e_1}{e_2}, r)$ is $\evalu(e_1,r)$ if $\evalu(f,r) =
  \truevalue$, and $\evalu(e_2,r)$ otherwise.
\item $\evalu(\subrel(t_1,\dots,t_n), r) = \{\tup(t_1,r), \dots, \tup(t_n,r)\}$
  where $\tup(\{b_1{:}e_1,\dots,b_n{:}e_n\}, r) =
  \{b_1{:}\evalu(e_1,r),\dots,b_n{:}\evalu(e_n,r)\}$.
\end{itemize}

Then, given a relation instance $\R$ of schema $R$,
$\pi_{a_1,\dots,a_n,b_1/e_1,\dots,b_m/e_m}(\R)$ is the relation
$$\begin{array}{l}
  \Big\{
  \{a_{1}{:}\evalu(a_1,r),\ldots,a_n{:}\evalu(a_n,r), b_1{:}\evalu(e_1,r),\dots,b_m{:}\evalu(e_m,r) \} \mid r \in \R
  \Big\}.
\end{array}$$

We observe that the result of extended projection can be computed in \LOGSPACE.\nb{for Diego: Check!}

\subsection{Nest}

The \emph{nest} operator $\nest_{\{a_1, \ldots, a_n\}\rightarrow b}(R)$ results
in a schema with attributes $(\att(R)\setminus\{a_1,\ldots,a_n\}) \cup
\{b(a_1,\ldots,a_n)\}$.  Let $\R$ be a relation instance of schema $R(\{ a_1,
\ldots, a_m \})$ and $n\leq m$.  Then $\nest_{\{a_1, \ldots, a_n\}\rightarrow
  b}(\R)$ is the relation
$$\begin{array}{l}
  \Big\{
  \{a_{n+1}{:}v_{n+1},\ldots,a_m{:}v_m, b{:}\big(\pi_{a_1, \ldots, a_n}
  (\sigma_{a_{n+1}=v_{n+1},\ldots,a_m=v_m} (\R))\big) \} \mid{} \\
  \hspace{7cm} \{ a_{n+1}{:}v_{n+1},\ldots, a_m{:}v_m \} \in \pi_{a_{n+1},\ldots,a_m}(\R)
  \Big\}.
\end{array}$$

\subsection{Unnest}

The \emph{unnest} operator $\unnest_a(R)$ results in a schema with attributes
$(\att(R)\setminus\{a\}) \cup \att(a)$.  Let $\R$ be a relation instance of
schema $R(\{ a_1, \ldots, a_n, a \})$.  Then $\unnest_{a}(\R)$ is the relation
$$\Big\{%
\{a_1{:}v_1,\ldots,a_{n}{:}v_{n}, b_1{:}u_1,\ldots,b_k{:}u_k\} \mid \{
a_1{:}v_1,\ldots,a_n{:}v_n, a{:}v \} \in \R, \{ b_1{:}u_1,\ldots,b_k{:}u_k \}
\in v%
\Big\}.$$

\section{Examples of \mongodb Queries}
\label{sec:mongodb-queries-examples}

\mongodb provides two main query mechanisms.  The basic form of query is a
\emph{find} query, which allows one to filter out documents according to some
(Boolean) criteria and to return, for each document passing the filter, a tree
containing a subset of the key-value pairs in the document.  Specifically, a
find query has two components, where the first one is a \emph{criterion} for
selecting documents, and the second one is a \emph{projection condition}.

\begin{example}
  The following \mongodb find query selects from the \texttt{bios} collection
  the documents talking about scientists whose first name is Kristen, and for
  each document only returns the full name and the date of birth.
\begin{lstlisting}
  db.bios.find(
      {"name.first": {$eq: "Kristen"}},
      {"name": true, "birth": true}
  )
\end{lstlisting}
\noindent%
When applied to the document in Figure~\ref{fig:mongodb-document}, it returns the
following tree:
\begin{lstlisting}
  {   "_id": 4,
      "birth": "1926-08-27",
      "name": { "first": "Kristen", "last": "Nygaard" }
  }
\end{lstlisting}
Observe that by default the document identifier is included in the answer of
the query.
\qedempty
\end{example}
Note that with a find query we can either obtain the original documents as they
are, or we can modify them by specifying in the projection condition only a
subset of the keys, thus retaining in the answer only the corresponding
key-value pairs.  However, we cannot change the shape of the individual pairs.

A more powerful querying mechanism is provided by the \emph{aggregation
  framework}, in which a query consists of a pipeline of \emph{stages}, each
transforming a forest into a new forest.  We call this transformation pipeline
an \emph{\mquery}.  One of the main differences with find queries is that
\mquery can manipulate the shape of the trees.

\begin{example}
  The following \mquery essentially does the same as the
  previous find query, but now it flattens the complex object \texttt{name}
  into two key-value pairs: \\
$ \bios \pipeline \match{\text{name.first=``Kristen''}} \pipeline
\project{\text{birth, firstName}/\text{name.first, lastName}/\text{name.last}}$
\begin{lstlisting}
db.bios.aggregate([
    {$match: {"name.first": {$eq: "Kristen"}}},
    {$project: {
        "birth": true, "firstName": "$name.first", "lastName": "$name.last" } }
])
\end{lstlisting}
So the document from our running example will be transformed into the following
tree:
\begin{lstlisting}
{   "_id" : 4,
    "birth": "1926-08-27",
    "firstName": "Kristen",
    "lastName": "Nygaard"
}
\end{lstlisting}
\qedempty
\end{example}

We note that the unwind operator creates a new document for every element in
the array. Thus, unwinding \valuefont{awards} (once) in the document in our
running example will output 3 documents, only one of which satisfies the
subsequent selection stages.
In the example below we illustrate how the match operator interacts with arrays.
\begin{example}
  Consider the following query consisting of a match stage with two conditions
  on keys inside the \valuefont{awards} array: \\
$\bios \pipeline \match{\text{awards.year}=1999 \wedge \text{awards.award}=\text{``Turing Award''}}$
\begin{lstlisting}
db.bios.aggregate([
    {$match: {"awards.year": {$eq: 1999},
              "awards.award": {$eq: "Turing Award"} }}
])
\end{lstlisting}
The query returns all the persons that have received an award in 1999, and the
Turing award in a possibly different year. Observe that it does not impose that
one array element must satisfy all the conditions.  This query retrieves the
document of our running example because Kristen Nygaard received an award (the
Rosing Prize) in 1999 in addition to the Turing Award (in 2001).
\qedempty
\end{example}

\begin{example}
To ensure that the two previous conditions are satisfied by the same array
element, the standard solution in the \mquery fragment consists in flattening
the \valuefont{awards} array before the match stage, as follows: \\
$\bios \pipeline \unwind{\text{awards}} \pipeline \match{\text{awards.year}=1999
  \wedge \text{awards.award}=\text{``Turing Award''}}$
\begin{lstlisting}
db.bios.aggregate([
   {$unwind: "$awards"},
   {$match: {"awards.year": {$eq: 1999},
             "awards.award": {$eq: "Turing Award"} }}
])
\end{lstlisting}

An alternative solution to flattening is merging the two conditions into an
object equality:\\
$\bios \pipeline \match{\text{awards}=\{\text{"year": }1999,
  \text{ "award": "Turing Award"} \}} $
\begin{lstlisting}
db.bios.aggregate([
   {$project: {"awards.award": true,
               "awards.year": true }}, 
   {$match: {"awards": {$eq: {"award" : "Turing Award",
                              "year" : 1999}} }}
])
\end{lstlisting}
Note that the object equality requires to remove non-compared keys from the
array elements before the match stage.  These two queries return no result in
our running example.
\qedempty
\end{example}

In the following examples, we illustrate two cases of non well-typed project stages.
\begin{example}
  Consider the following query which creates the array \valuefont{fields} out
  of an object, a literal and an array: \\
$\bios \pipeline \project{\text{fields/}[\text{name, birth, awards}]}$
\begin{lstlisting}
db.bios.aggregate([
    {$project: {"fields": ["$name", "$birth", "$awards"] }}, 
])
\end{lstlisting}
When applied to our running example, the resulting document contains a
non well-typed array.
\begin{lstlisting}
{ "_id" : 4,
  "fields" : [ 
      {"first": "Kristen", "last": "Nygaard"}, 
      "1926-08-27", 
      [ {"award": "Rosing Prize", "year": 1999, "by": "Norwegian Data Association"}, 
        {"award": "Turing Award", "year": 2001, "by": "ACM"}, 
        {"award": "IEEE John von Neumann Medal", "year": 2001, "by": "IEEE"} ] ]
}
\end{lstlisting}
\qedempty
\end{example}

In the remaining of this section, we introduce a second document in the
\valuefont{bios} collection, as depicted by Figure \ref{fig:second-document}.
\begin{figure}[t]
  \centering
\begin{lstlisting}
{ "_id": 6,
  "awards": [
    { "award": "Award for the Advancement of Free Software", "year": 2001, "by": "FSF" },
    { "award": "NLUUG Award", "year": 2003, "by": "NLUUG" } ],
  "birth": "1956-01-31",
  "contribs": [ "Python" ],
  "name": { "first": "Guido", "last": "van Rossum" } }
\end{lstlisting}
  \caption{Second \mongodb document in the \texttt{bios} collection}
  \label{fig:second-document}
\end{figure}

\begin{example}
In the query below, the project operator assigns an array or an object
to the \valuefont{value} field according to the document id:\\
$\bios \pipeline \project{\text{value/}\cond{\text{\_id}=4}{\text{awards}}{\text{name}}}$
\begin{lstlisting}
db.bios.aggregate([
    {$project: {"value": {$cond: {if: {$eq: ["$_id", 4]},
                                  then: "$awards",
                                  else: "$name"}} }}, 
])  
\end{lstlisting}
When applied to the two documents of the \valuefont{bios} collection, it
produces a non well-typed forest.
\begin{lstlisting}
{   "_id": 4,
    "value": [ 
        { "award": "Rosing Prize", "year": 1999, "by": "Norwegian Data Association" },
        { "award": "Turing Award", "year": 2001, "by": "ACM" }, 
        { "award": "IEEE John von Neumann Medal", "year": 2001, "by": "IEEE" } ]
}

{  "_id": 6,
    "value": {"first" : "Guido", "last" : "van Rossum" }
}  
\end{lstlisting}
\qedempty
\end{example}
Finally, in the examples below we illustrate the group stage, which combines different
documents into one.
\begin{example}
  The following query returns for each year all scientists that received an
  award in that year: \\
$\bios \pipeline \unwind{\text{awards}} \pipeline \group{\text{year/awards.year}}{\text{names/name}} $
\begin{lstlisting}
db.bios.aggregate([
    {$unwind: "$awards"},
    {$group: {
        _id: {"year": "$awards.year"}, "names": {$addToSet: "$name"} }},
])
\end{lstlisting}
Running this query over the \valuefont{bios} collection produces the following output:
\begin{lstlisting}
{   "_id": { "year": 2003 },
    "names": [
        { "first": "Guido", "last": "van Rossum" } ]
},
{   "_id": { "year": 2001 },
    "names": [
        { "first": "Kristen", "last": "Nygaard" },
        { "first": "Guido", "last": "van Rossum" } ]
},
{   "_id": { "year": 1999 },
    "names": [
        { "first": "Kristen", "last": "Nygaard" } ]
}
\end{lstlisting}
\qedempty
\end{example}

\begin{example}
 Consider the following query which groups the persons according to their date
 of death:\\
$\bios \pipeline \group{\text{death}}{\text{names/name}} $
  \begin{lstlisting}
db.bios.aggregate([
    {$group: { 
        "_id": "$death",
        "names": {$addToSet: "$name"} }}
])
\end{lstlisting}
When executing over the \valuefont{bios} collection, it produces the following output:
  \begin{lstlisting}
{   "_id": "2002-08-10",
    "names" : [ 
        { "first": "Kristen", "last": "Nygaard" } ]
}

{   "_id": null,
    "names": [ 
        { "first": "Guido", "last": "van Rossum" } ]
}    
  \end{lstlisting}
Since the \valuefont{death} path is not present in the document about Guido van
Rossum, the latter is grouped in the document where \valuefont{\_id} is \nullvalue.
\qedempty
\end{example}

\begin{example} 
The query below considers two grouping paths: \valuefont{death} and
\valuefont{citizenship}. Note that the latter path is absent in the
\valuefont{bios} collection.
\\
$\bios \pipeline \group{\text{death,\,citizenship}}{\text{names/name}} $
  \begin{lstlisting}
db.bios.aggregate([
    {$group: { 
        "_id": {"death": "$death", "citizenship": "$citizenship"},
        "names": {$addToSet: "$name"} }}
])
  \end{lstlisting}
Executing this query over the \valuefont{bios} collection produces the
following result:
  \begin{lstlisting}
{   "_id": { "death": "2002-08-10" },
    "names" : [ 
        { "first": "Kristen", "last": "Nygaard" } ]
}

{   "_id": {},
    "names" : [ 
        { "first": "Guido", "last": "van Rossum" } ]
}
\end{lstlisting}
In this case, missing grouping paths do not appear in the resulting
  documents. Consequently, the entry about Guido van Rossum is
  grouped in the document where \valuefont{\_id} equals \valuefont{\{\}}.
\qedempty
\end{example}

We conclude with a complex query performing a join within a document. 
\begin{example}
  Consider the \mquery \\
$
\begin{array}{@{}l}
 \bios \pipeline \project{\text{name, award1/awards, award2/awards}}
         \pipeline \unwind{\text{award1}} \pipeline \unwind{\text{award2}} \pipeline \\
         \project{\text{name, award1, award2,
  twoInOneYear/(award1.year=award2.year}\,\wedge\,\text{award1.award}\neq
  \text{award2.award})} \pipeline \\
\match{\text{twoInOneYear=}\truevalue} \pipeline \\
  \project{\text{firstName/name.first,\,lastName/name.last,\,awardName1/award1.award,\,awardName2/award2.award,\,year/award1.year}}
\end{array}
$
\begin{lstlisting}
db.bios.aggregate([
  {$project: { "name": true,
    "award1": "$awards", "award2": "$awards" } },
  {$unwind: "$award1"},
  {$unwind: "$award2"},
  {$project: {
    "name": true, "award1": true, "award2": true,
    "twoInOneYear": { $and: [
      {$eq: ["$award1.year", "$award2.year"]},
      {$ne: ["$award1.award", "$award2.award"]} ]} }},
  {$match: { "twoInOneYear": true } },
  {$project: { "firstName": "$name.first",
    "lastName": "$name.last" ,
    "awardName1": "$award1.award",
    "awardName2": "$award2.award",
    "year": "$award1.year" } },
])
\end{lstlisting}%

It consists of 6 stages and retrieves all persons who received two awards in
one year.
The first stage keeps the complex object \valuefont{name}, creates two copies
of the array \valuefont{awards}, and projects away all other paths.  The
second and third stages flatten (unwind) the two copies (\valuefont{award1} and
\valuefont{award2}) of the array of awards (which intuitively creates a
cross-product).  The fourth step compares awards pairwise and creates a new key
(\valuefont{twoInOneYear}) whose value is true if the scientist has two awards
in one year.  The fifth one selects the documents of interest (those where
\valuefont{twoInOneYear} is true), and the final stage renames the selected keys.

By applying the query to the document in Figure~\ref{fig:mongodb-document}, we
obtain:
\begin{lstlisting}
{   "_id": 4,
    "firstName": "Kristen",
    "lastName": "Nygaard",
    "awardName1": "IEEE John von Neumann Medal",
    "awardName2": "Turing Award",
    "year": 2001
}
\end{lstlisting}
\qedempty
\end{example}

\section{Syntax and Semantics of \mquery}
\label{sec:mquery-vs-mongodb}

\begin{figure*}[htb]
  \centering
  \scalebox{0.8}{
    \begin{tabular}{l@{\quad}l}
      \multicolumn{2}{@{}c}{\begin{tabular}{@{}r@{~}c@{~}l@{}}
          \meta{MAQ} &\DEF& \meta{Collection}\vf{.aggregate([} \meta{List$^+$<}\meta{Stage}\meta{>} \vf{])}
          \\
          \meta{Stage} &\DEF&  \vf{\{\$match:~\{}\meta{Criterion}\vf{\}\}}
          ~\CHOICE~ \vf{\{\$unwind:~\{}\meta{UnwindExpr}\vf{\}\}}
          ~\CHOICE~ \vf{\{\$project:~\{}\meta{Projection}\vf{\}\}}\\
          &\CHOICE& \vf{\{\$group:~\{}\meta{GroupExpr}\vf{\}\}}
          ~\CHOICE~ \vf{\{\$lookup:~\{}\meta{LookupExpr}\vf{\}\}}\\
        \end{tabular}}\\
      \hline
      \begin{tabular}{@{}r@{~}c@{~}l@{}}
        \meta{Path} &\DEF& \meta{Key} \CHOICE~ \meta{Key}.\meta{Path}
        \\
        \meta{PathRef} &\DEF& \vf{\$}\meta{Path} ~\CHOICE~ \vf{\$\$ROOT}\\
        \meta{Lop} &\DEF& \vf{\$and} ~\CHOICE~ \vf{\$or} ~\CHOICE~ \vf{\$nor}\\
        \meta{Cop} &\DEF& \vf{\$eq} ~\CHOICE~ \vf{\$gt} ~\CHOICE~ \vf{\$lt}\\
        &\CHOICE& \vf{\$ne} ~\CHOICE~ \vf{\$gte} ~\CHOICE~ \vf{\$lte} \\
        \meta{Bop} &\DEF& \meta{Lop} \CHOICE~ \meta{Cop} \\
        \meta{Boolean} &\DEF& \truevalue ~\CHOICE~ \falsevalue\\
        \meta{Criterion} &\DEF& \meta{Path}\vf{:} \meta{Condition} \\
        &\CHOICE& \meta{Lop}\vf{:} \vf{[} \meta{List$^+$<}\vf{\{}\meta{Criterion}\vf{\}}\meta{>} \vf{]}
        \\
        \meta{Condition} &\DEF& \vf{\{}\meta{Cop}\vf{:} \meta{Value}\vf{\}} \\
        &\CHOICE& \vf{\{\$not:} \meta{Condition}\vf{\}}\\
        &\CHOICE& \vf{\{\$exists:} \meta{Boolean}\vf{\}}\\
        \meta{Projection} &\DEF& \meta{List$^+$<ProjectionElem>} \\
        \meta{ProjectionElem} &\DEF& \id\vf{:} \falsevalue\\
        &\CHOICE& \meta{Path}\vf{:} \truevalue \\
        &\CHOICE&\meta{Path}\vf{:} \meta{ValueDef}
        \\
      \end{tabular}
      &
      \begin{tabular}{@{}r@{~}c@{~}l@{}}
        \meta{ValueDef} &\DEF& \meta{PathRef} \\
        &\CHOICE& \vf{\{\$literal:} \meta{Value}\vf{\}}\\
        &\CHOICE& \vf{[}\meta{List<ValueDef>}\vf{]}\\
        &\CHOICE& \vf{\{}\meta{Bop}\vf{:} \vf{[}\meta{List<ValueDef>}\vf{]\}}\\
        &\CHOICE& \vf{\{\$not:} \meta{ValueDef}\vf{\}}
        \\
        &\CHOICE& \vf{\{\$cond:} \vf{\{}
        \begin{tabular}[t]{@{}l}
          \vf{if:} \meta{ValueDef}\vf{,}\\
          \vf{then:} \meta{ValueDef}\vf{,} \\
          \vf{else:} \meta{ValueDef} \vf{\}\}}
        \end{tabular}
        \\
        \meta{GroupExpr} &\DEF& \id\vf{:} \meta{GroupCondition}\vf{,} \\
        &&\meta{List<\meta{Key}}\vf{:} \vf{\{\$addToSet:} \meta{\meta{PathRef}}\vf{\}}\meta{>}\\
        \meta{GroupCondition} &\DEF& \nullvalue
          ~~\CHOICE~ \vf{\{}\meta{List<Path}\vf{:} \meta{PathRef>}\vf{\}}\\
        \meta{UnwindExpr} &\DEF& \vf{path:} \meta{PathRef}\vf{,} \\
        &&\vf{\footnotesize preserveNullAndEmptyArrays:} \meta{Boolean}
        \\
        \meta{LookupExpr} &\DEF& \vf{from:} \meta{Collection}\vf{,} \\
        && \vf{localField:} \meta{Path}\vf{,} \\
        && \vf{foreignField:} \meta{Path}\vf{,} \\
        && \vf{as:} \meta{Path}\\
      \end{tabular}
    \end{tabular}
  }
  \caption{The \mquery grammar}
  \label{fig:mongodb-query-syntax}
\end{figure*}

We provide the actual syntax of \mquery in
Figure~\ref{fig:mongodb-query-syntax}.
A \meta{Path} (which in \mongodb terminology is actually called a ``field''), is
a non-empty concatenation of \meta{Key}s, where elements for \meta{Key} are from
the set~$K$.  Elements for \meta{Value} are defined according to the grammar in
Figure~\ref{fig:syntax-bson}. \meta{Collection} is a collection name, that is, a
non-empty string.
The empty path, which can be used in a path reference, is denoted in \mongodb by
the string \vf{\$\$ROOT}.  In the following, a \emph{path} is either the
empty path or an element constructed according to \meta{Path}.
We assume that a projection $p_1{:}d_1,\dots,p_n{:}d_n$ is such that there are
no $i\neq j$ where $p_i$ is a prefix of $p_j$.  By default the \id key is kept
in a projection, and to project it away the projection must contain an element
\id\vf{:} \falsevalue.
The comparison operators used in a value definition \meta{ValueDef} accept only
arrays of length~2.
We observe that, with respect to the official \mongodb syntax, we have
removed/introduced some syntactic sugar.  In particular, for \meta{Criterion} we
disallow expressions of the form \vf{"name.first":\,"john"}.  Instead
they can be expressed as \vf{"name.first":\,\{\$eq:\,"john"\}}.
Moreover, we allow for the use of \vf{\$nor} in \meta{ValueDef}, as it
can be expressed using \vf{\$not} and \vf{\$and}.

\subsection{Notes on our \mquery algebra}
\begin{itemize}
\item The grouping condition \nullvalue in the grammar is given by the empty
  sequence $G$ in the algebra.
\item \mongodb can interpret any value definition as a Boolean expression, in
  particular, one can use $p$, $v$, and $[d_1,\ldots,d_n]$ as atomic Boolean
  value definitions.  Specifically, $t\models p$ for a path $p$, and
  $t\models v$ for a value $v$, hold whenever $v$ (resp., the ``value'' of $p$
  in $t$) is not \nullvalue, \falsevalue, or 0, while
  $t\models[d_1,\ldots,d_n]$ always holds.  In our algebra instead, we consider
  as atomic Boolean value definitions only $p=p$, $p=v$, and $\exists p$.

\item We observe that in the \mongodb grammar there is no explicit operator to
  check the existence of a path in a Boolean value definition. Nevertheless, we
  included $\exists p$ as an atomic Boolean value definition $\beta$ in our
  algebra since it can be expressed using a conditional value definition as
  follows:
  \[\cond{p}{\truevalue}{\cond{(\neg(p=\nullvalue)\land \neg(p=\falsevalue)
      \land \neg(p=0))}{\truevalue}{\falsevalue}}.\]
\item For simplicity, the only comparison operator that we kept in the algebra
  is equality.  Adding also order comparison would not affect any of the
  results on expressivity and complexity presented in the paper.
\end{itemize}

\subsection{Semantics: Tree operations}
\label{sec:tree-operations}

In the following, let $t=(N,E,\lnode,\ledge)$ be a tree.
Below, when we mention reachability, we mean reachability along the edge
relation.
\begin{description}
\item[subtree] \emph{the subtree of $t$ rooted at $x$ and induced by $M$}, for
  $n \in M$ and $M \subseteq N$, denoted $\subtree(t,x,M)$, is defined as
  $(N',E|_{N' \times N'},\lnode|_{N'},\ledge|_{E'})$ where $N'$ is the subset of
  nodes in $M$ reachable from $x$ through nodes in $M$.
  We write $\subtree(t,M)$ as abbreviation for $\subtree(t,\treeroot(t),M)$.

  \smallskip%
  For a path $p$ with $|\eval{p}|=1$, \emph{the subtree $\subtree(t,p)$ of $t$
    hanging from $p$} is defined as $\subtree(t,r_p,N')$ where $\{r_p\} =
  \eval{p}$, and $N'$ are the nodes reachable from $r_p$ via~$E$.
  For a path $p$ with $|\eval{p}|=0$, $\subtree(t,p)$ is defined as
  $\tree(\nullvalue)$.

\item[attach] The \emph{tree $\attach(k_1\ldots k_n, t)$ constructed by inserting the
    path $k_1 \ldots k_n$ on top of the tree $t$}, for $n \geq 1$, is defined as
  $(N',E',\lnode',\ledge')$, where
  \begin{itemize}
  \item $N' = N \cup \{x_0, x_1,\dots,x_{n-1}\} $, for fresh
    $x_0,\dots,x_{n-1}$.
  \item $E' = E \cup \{(x_0,x_1),(x_1,x_2),\dots,(x_{n-1},\treeroot(t))\}$,
  \item $\lnode' = \lnode \cup \{(x_0,\objectlabel),\dots,(x_{n-1},\objectlabel)\}$,
  \item $\ledge' = \ledge \cup \{((x_0,x_1),k_1), \dots, ((x_{n-2},x_{n-1}),k_{n-1}),
    ((x_{n-1},\treeroot(t)),k_n)\}$.
  \end{itemize}

\item[intersection] Let $t_1$ and $t_2$ be trees.  The function
  $\commonpath{t_1}{t_2}$ returns the set of pairs of nodes $(x_n,y_n) \in N^1
  \times N^2$ reachable along identical paths in $t_1$ and $t_2$, that is, such
  that there exist $(x_0,x_1), \dots, (x_{n-1},x_n)$ in $E^1$, for
  $x_0=\treeroot(t_1)$, and $(y_0,y_1), \dots, (y_{n-1},y_n)$ in $E^2$, for
  $y_0=\treeroot(t_2)$, with $\lnode^1(x_i) = \lnode^2(y_i)$ and
  $\ledge^1(x_{i-1},x_i) = \ledge^2(y_{i-1},y_i)$, for $1\leq i\leq n$.

\item[merge] Let $t_1, t_2$ be trees $(N^j,E^j,\lnode^j,\ledge^j)$, $j=1,2$, such
  that $N^1 \cap N^2 = \emptyset$, and for each path $p$ leading to a leaf in
  $t_2$, i.e., $t_2 \models (p = v)$ for some literal value $v$, we have that
  $t_1 \not\models \exists p$ and the other way around. Then the \emph{tree
    $t_1 \oplus t_2$ resulting from merging $t_1$ and~$t_2$} is defined as
  $(N,E,\lnode,\ledge)$, where
  \begin{itemize}
  \item $N = N^1 \cup {N^2}'$, for ${N^2}' = N^2\setminus \{x_2 \mid (x_1,x_2)
    \in \commonpath{t_1}{t_2}\}$
  \item $E=E^1 \cup (E^2 \cap ({N^2}' \times {N^2}')) \cup
    ((\commonpath{t_1}{t_2}) \circ E^2)$
  \item $\lnode = \lnode^1 \cup \lnode^2|_{{N^2}'}$
  \item $\ledge = \ledge^1 \cup \ledge^2|_{{N^2}' \times {N^2}'} \cup %
    \{((x_1,y_2),\ell) \mid \ledge^2(y_1,y_2)=\ell, (x_1,y_1) \in
    \commonpath{t_1}{t_2}\}$
  \end{itemize}

\item[minus] $t_1 \setminus t_2$ is $\subtree(t_1,N')$ where $N' = N_1
  \setminus N_2$.

\item[array] Let $\{t_1, \dots, t_n\}$, $n\geq 0$, be a forest and $p$ a
  path. The operator $\colltoarray(\{t_1, \dots,t_n\}, p)$ creates the tree
  encoding the array of the values of the path $p$ in the trees $t_1, \dots,
  t_n$. Let $t_j^p = \subtree(t_j, p)$ with $(N^j,E^j,\lnode^j,\ledge^j)$ where all
  $N^j$ are mutually disjoint, and $r_j = \treeroot(t_j^p)$. Then,
  $\colltoarray(\{t_1,\dots,t_n\}, p)$ is the tree $(N,E,\lnode,\ledge)$ where
  \begin{itemize}
  \item $N = \left(\bigcup_{j=1}^n N^j\right) \cup \{v_0\}$,
  \item $E = \left(\bigcup_{j=1}^n E^j\right) \cup \{(v_0,r_1),\dots,(v_0,r_n)\}$,
  \item $\lnode = \big(\bigcup_{j=1}^n \lnode^j\big) \cup \{(v_0, \arraylabel)\}$,
  \item $\ledge = \left(\bigcup_{j=1}^n \ledge^j\right) \cup
    \{((v_0,r_1),0),\dots,((v_0,r_n),n-1)\}$.
  \end{itemize}
\end{description}

We also define $\subtree(t,p)$ for paths $p$ such that $|\eval{p}| > 1$.  In
this case it returns the tree encoding the array of all subtrees hanging from
$p$. Formally, $\subtree(t,p) = \colltoarray(\{t_1, \dots,t_n\}, \varepsilon)$,
where $\{r_1, \dots, r_n\} = \eval{p}$, $N_j$ the set of nodes reachable from
$r_j$ via $E$, and $t_j = \subtree(t,r_j,N_j)$.
We observe that the definition of the $\colltoarray$ operator is recursive as
it uses the generalized $\subtree$ operator.

\subsection{Notes on our Semantics}

We conclude this section by discussing some of the features in which our
semantics differs from the current version of the \mongodb system.  The reason
for this divergence is that with respect to these features, the behavior of
\mongodb might be considered counterintuitive,
or even as an inconsistency in the semantics of operators.

\begin{description}
\item[Comparison of values.]  In our semantics, we employ the classical
  semantics for ``deep'' equality of non-literal values, which differs from the
  actual semantics exhibited by \mongodb based on comparing the binary
  representation of
  values\footnote{\url{https://docs.mongodb.org/manual/reference/bson-types/\#comparison-sort-order}}.

\item[Group.] In \mongodb, the group operator behaves
  differently when grouping by one path and when grouping by multiple paths.  In
  the former case \missingvalue is treated as \nullvalue, while in the latter
  case it is treated differently. More specifically, when grouping by one path
  (e.g., $\group{g/y}{...}$), \mongodb puts the trees with $y=\nullvalue$ and
  those where $y$ is missing into the same group with
  $\id=\lobject g:\nullvalue\robject$. On the contrary, when grouping with
  multiple paths (e.g., $\group{g_1/y_1,\ldots,g_2/y_2}{\ldots}$), the trees
  with all $y_i$ missing are put into a separate group with
  $\id = \lobject\robject$.

\item[Comparing value and path.] The criteria in \emph{match} and Boolean value
  definitions in \emph{project} behave differently in \mongodb.  For instance,
  when comparing a path $p$ of type \tarray with a value $v$ using equality,
  match checks
  \begin{inparaenum}[\itshape (1)]
  \item whether $v$ is exactly the array value of $p$, or
  \item whether $v$ is an element inside the array value of $p$.
  \end{inparaenum}
  Instead, \emph{project} only checks condition~\textit{(1)}. More generally,
  for \emph{match}, $t\models (p=v)$ if there is a node $x$ in $\eval{p}$ or in
  $\eval{p.i}$ for some $i\in I$ such that $\mathsf{value}(x,t)=v$, but for
  \emph{project}, $t\models (p=v)$ if $\tree(v)$ coincides with $\subtree(t,p)$.

  In our semantics \emph{project} checks both conditions~\textit{(1)} and
  \textit{(2)}, also when comparing the values of two paths.

\item[Null and missing values.]  In \mongodb for \emph{match}, $(p=\nullvalue)$
  holds
  \begin{inparaenum}[\itshape (a)]
  \item when $p$ exists and its value is \nullvalue, or
  \item when $p$ is missing.
  \end{inparaenum}
  Instead, for \emph{project}, $(p=\nullvalue)$ holds only for~(a).

  In our semantics, we systematically distinguish the cases (a) and (b).
\end{description}

\section{Nested Relational Algebra to \mquery}
\label{sec:RA2MDB}

\begin{lemma}\label{lem:pipeline-properties}
  The result of $\pipe(\q_1, \q_2)$ contains the result of $\q_i$ in the trees
  with $\text{actRel}=i$ under the key \valuefont{rel$i$}.
\end{lemma}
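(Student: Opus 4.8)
The plan is to give the statement a precise form and then prove it by induction on the lengths of $\q_1$ and $\q_2$. Write $\pipe(\q_1,\q_2)=B\pipeline\subq_1(\q_1)\pipeline\subq_2(\q_2)$, where $B$ is the pipeline of the three opening stages. I would then show: for every input forest $F$, the trees of $F\pipeline\pipe(\q_1,\q_2)$ satisfying $\text{actRel}=i$ are exactly $\{\attach(\valuefont{rel}i,u)\oplus\tree(\object{\text{actRel}:i})\mid u\in F\pipeline\q_i\}$, i.e.\ the result of $\q_i$ on $F$ re-nested under the key $\valuefont{rel}i$ and tagged by $\text{actRel}$. First I would dispatch the base case: unfolding the semantics of $\project{}$, $\unwind{}$, $\project{}$ (the first two stages install $\valuefont{origDoc}$ as a verbatim copy of $t$ together with $\text{actRel}=[1,2]$ and then unwind it; the third turns $\valuefont{origDoc}$ into $\valuefont{rel}1$ on the $\text{actRel}=1$ copy and into $\valuefont{rel}2$ on the $\text{actRel}=2$ copy via the $\cond{}{}{}$ guard) gives $F\pipeline B=F_0^1\uplus F_0^2$ with $F_0^j=\{\attach(\valuefont{rel}j,t)\oplus\tree(\object{\text{actRel}:j})\mid t\in F\}$, modulo an inessential top-level $\id$ carried over from $t$.

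The core of the proof will be an inductive invariant about how a lifted sub-pipeline acts on such a two-halved forest. For a sequence $s_1\pipeline\cdots\pipeline s_k$ put $\subq_j(s_1\pipeline\cdots\pipeline s_k)=\subq_j(s_1)\pipeline\cdots\pipeline\subq_j(s_k)$. I would prove: if $G=G^1\uplus G^2$ with every tree of $G^\ell$ satisfying $\text{actRel}=\ell$, and $G^j=\{\attach(\valuefont{rel}j,u)\oplus\tree(\object{\text{actRel}:j})\mid u\in H\}$ (so the $j$-half has all its payload under $\valuefont{rel}j$ and carries no $\valuefont{rel}(3{-}j)$), then $G\pipeline\subq_j(\q)={G'}^1\uplus{G'}^2$ with ${G'}^{3-j}=G^{3-j}$ and ${G'}^{j}=\{\attach(\valuefont{rel}j,u')\oplus\tree(\object{\text{actRel}:j})\mid u'\in H\pipeline\q\}$ — all modulo the inessential $\id$, which every $\subq_j$ stage either preserves ($\match{}$, $\unwind{}$, $\project{}$ keep $\id$ by default) or overwrites ($\group{}{}$ rebuilds it). This goes by induction on $k$, the step being a case analysis of $s_k$ against the definition of $\subq_j$ and the semantics of Figure~\ref{fig:mquery-semantics1}. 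For $\match{\varphi}$, $\unwind[+]{p}$ and $\project{P}$ it is routine: every path of the stage gets mechanically prefixed by $\valuefont{rel}j$, so on $G^j$ the stage acts just as $s_k$ acts on $H$, while on $G^{3-j}$ the leading disjunct $(\text{actRel}\neq j)$ (resp.\ the $\cond{\text{actRel}=j}{\cdot}{\text{dummy}}$ wrapper together with the absence of $\valuefont{rel}j$-paths) makes the stage the identity; plain $\unwind{p}$ reduces to $\unwind[+]{p}$ once its guard match has removed exactly the trees $\unwind{\valuefont{rel}j.p}$ would drop, and on $G^{3-j}$ the guard $(\exists\valuefont{rel}j.p)$ fails, giving identity again.

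The hard part will be the group case. For $s_k=\group{g_1/y_1,\ldots,g_n/y_n}{a_1/b_1,\ldots,a_m/b_m}$, $\subq_j$ groups by the tuple $(\valuefont{rel}j.y_1,\ldots,\valuefont{rel}j.y_n,\text{actRel})$, accumulating $\valuefont{rel}j.a_i/\valuefont{rel}j.b_i$ and also $\valuefont{rel}(3{-}j)$, then runs a $\project[\noid]{}$ that rebuilds the $\id$, a $\cond{}{}{}$-guarded $\project{}$, and a trailing $\unwind[+]{\valuefont{rel}(3{-}j)}$. Since $\text{actRel}$ is constant on each half ($1$ on $G^1$, $2$ on $G^2$), this extra key never merges a $G^1$-tree with a $G^2$-tree, so I can treat the halves separately. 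On $G^j$: the paths $\valuefont{rel}j.y_i$ partition $H$ exactly as the $y_i$ do under $\group{G}{A}$; the rebuilding $\project[\noid]{\ldots,\text{actRel}/\id.\text{actRel},\,\valuefont{rel}j.\id.g_i/\id.\valuefont{rel}j.g_i,\ldots}$ reconstructs the $\id$-shape of $\group{G}{A}$ nested under $\valuefont{rel}j$; the accumulator $\valuefont{rel}(3{-}j)$ (missing throughout $G^j$) collapses to a list of $\nullvalue$s that the $\cond{}{}{}$-project discards, and $\unwind[+]{\valuefont{rel}(3{-}j)}$ is then the identity — so ${G'}^j$ carries $H\pipeline\group{G}{A}$ under $\valuefont{rel}j$, and no $\valuefont{rel}(3{-}j)$ survives, preserving the invariant's shape. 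On $G^{3-j}$: every $\valuefont{rel}j.y_i$ is missing, so all of $G^{3-j}$ falls into the single group with $\id{=}\object{\text{actRel}:3{-}j}$ whose accumulator $\valuefont{rel}(3{-}j)$ is the array of the $G^{3-j}$-trees' payloads; the $\cond{}{}{}$-project drops the spurious $\valuefont{rel}j$-subtree and keeps that array under $\valuefont{rel}(3{-}j)$, and $\unwind[+]{\valuefont{rel}(3{-}j)}$ splits it back one by one — so ${G'}^{3-j}=G^{3-j}$.

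Finally I would conclude by iterating the invariant: applied with $j{=}1$, $\q{=}\q_1$, $G{=}F_0$ (hypotheses from the base case, $H{=}F$) it yields that $F_0\pipeline\subq_1(\q_1)$ nests $F\pipeline\q_1$ under $\valuefont{rel}1$ in its $\text{actRel}=1$ trees and leaves $F_0^2$ untouched; the $\text{actRel}=1$ part again satisfies the hypotheses for $j{=}1$, hence each $\subq_2(\cdot)$ stage is the identity on it, while $F_0^2$ satisfies them for $j{=}2$ with $H{=}F$, so a second application with $j{=}2$, $\q{=}\q_2$ gives that the $\text{actRel}=2$ trees now carry $F\pipeline\q_2$ under $\valuefont{rel}2$; composing with the base case gives the claim. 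The main obstacle, as indicated, is the group step, where one must show at once that $\subq_j(\group{G}{A})$ reproduces $\group{G}{A}$ with the correct $\id$-shape under $\valuefont{rel}j$ and that it is the identity on the $\text{actRel}\neq j$ trees despite those being globally merged and then re-split — this needs careful tracking, through the multi-disjunct group semantics, of the paths ($\valuefont{rel}j.y_i$ and $\valuefont{rel}(3{-}j)$) that are missing on the ``wrong'' side. A secondary nuisance is the bookkeeping for degenerate cases (empty grouping, grouping paths missing in every tree, unwinding a scalar) where the reconstructed $\id$ can differ in shape from $\group{G}{A}$'s; these never occur on the type-consistent (relational-view) forests to which $\pipe$ is actually applied, so they do not affect the use of the lemma.
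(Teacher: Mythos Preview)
Your proposal is correct and follows essentially the same approach as the paper: analyze the three opening stages to obtain the two-halved forest, then prove by a stage-by-stage case analysis that each $\subq_j(s)$ acts as $s$ on the $j$-half (under $\valuefont{rel}j$) and as the identity on the other half, with group being the delicate case. The paper phrases the invariant as three properties it calls (clean), (own), and (other), which are exactly your ``$G'^{3-j}=G^{3-j}$'' and ``$G'^j$ carries $H\pipeline\q$ under $\valuefont{rel}j$ with no $\valuefont{rel}(3{-}j)$''; your write-up is somewhat more explicitly inductive and more careful about the $\id$ bookkeeping, but the structure and the case analysis are the same.
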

\begin{proof}
  Let $F$ be a forest, and $F_0$ the result of evaluating of the first 3 stages
  in $\pipe(\q_1, \q_2)$ over $F$.  Then $F_0$ satisfies the
  property:
  \begin{itemize}
  \item[$(\star)$] for each tree $t$ in $F_0$, if $t\models(\text{actRel}=1)$,
    then $t\models \exists\text{rel}1 \land \neg\exists\text{rel}2$, and if
    $t\models(\text{actRel}=2)$, then
    $t\models \exists\text{rel}2 \land \neg\exists\text{rel}1$.
  \end{itemize}
  Moreover, for each tree $t \in F$, there are exactly two trees $t_1$ and
  $t_2$ in $F_0$ such that $t_1\models(\text{actRel}=1)$,
  $\subtree(t_1,\text{rel}1)$ coincides with $t$, and
  $t_2\models(\text{actRel}=2)$, $\subtree(t_2,\text{rel}2)$ coincides with
  $t$.  These follow from the semantics of conditional value definition and of
  $\project{p/q}$ when $q$ is missing from the input trees.

  Let $F_1 = F_0 \pipeline \subq_1(\q_1)$. We prove that
  \begin{description}
  \item[(clean)] $F_1$ satisfies $(\star)$,
  \item[(own)] $(F_1\pipeline\match{\text{actRel}=1})$, coincides with
    $F\pipeline\q_1$, and
  \item[(other)] $(F_1\pipeline\match{\text{actRel}=2})$ coincides with
    $(F_{0}\pipeline\match{\text{actRel}=2})$, which coincides with $F$ (i.e.,
    the ``other'' trees are not affected).
  \end{description}

  It is sufficient to prove the above for the case of $\q_1$ being a single
  stage pipeline $s$.  Consider the following cases:
  \begin{itemize}
  \item $s$ is a match stage $\match\varphi$. Then $\subq_1(\q_1) =
    \match{(\text{actRel}=2) \lor \varphi_{[p/\text{rel}1.p]}}$. Since match
    does not alter the structure of the trees, $F_1$ satisfies $(\star)$.

    Let $t \in (F_{0} \pipeline \match{(\text{actRel}=2) \lor
      \varphi_{[p/\text{rel}1.p]}} \pipeline\match{(\text{actRel}=1)})$. Then by
    the properties of match, it follows that $t \in (F_{0} \pipeline
    \match{(\text{actRel}=1)} \pipeline \match{\varphi_{[p/\text{rel}1.p]}})$. By
    assumption, $(F_{0} \pipeline \match{(\text{actRel}=1)})$ coincides with $F$,
    therefore we obtain that $t$ is in $F \pipeline \q_1$ (up to proper
    renaming). Similarly, in the other direction, when $t \in (F
    \pipeline\q_1)$, we derive that $t \in
    (F_1\pipeline\match{(\text{actRel}=1)})$.

    Since the query $\match{(\text{actRel}=2) \lor
      \varphi_{[p/\text{rel}1.p]}} \pipeline\match{(\text{actRel}=2)}$ is
    equivalent to the query $\match{(\text{actRel}=2)}$, we obtain that the
    forest $(F_{0} \pipeline \match{(\text{actRel}=2) \lor
      \varphi_{[p/\text{rel}1.p]}} \pipeline\match{(\text{actRel}=2)})$ coincides
    with $(F_{0} \pipeline \match{(\text{actRel}=2)})$.

  \item $s$ is an unwind stage $\unwind[+]{p}$. Then $\subq_1(\q_1) =
    \unwind[+]{\text{rel}1.p}$. First, $\subq_1(\q_1)$ does not affect the trees
    with $\text{actRel}=2$ because there does not exist the path
    $\valuefont{rel}1.p$, and $\subq_1(\q_1)$ will preserve all such trees as
    they are. Second, the trees that contain the path $\valuefont{rel}1.p$
    (hence, with $\text{actRel}=1$), will be affected in exactly the same way
    as the trees in $F$ would be affected by $\q_1$. Finally, since unwind does
    not affect other paths than $p$, we have that $F_1$ satisfies the clean
    specialization property.

  \item $s$ is an unwind stage $\unwind{p}$.  Then
    $\subq_1(\q_1) = \match{(\text{actRel}=2) \lor ((\exists \text{rel}1.p)
      \land (\text{rel}1.p\neq []))} \pipeline \unwind[+]{\text{rel}1.p}$.
    Again, $\subq_1(\q_1)$ does not affect the trees with $\text{actRel}=2$
    because they will all pass the match stage and the subsequent unwind will
    preserve them as they are. Second, we note that evaluating $\q_1$ over $F$
    will remove trees where path $p$ does not exist, or $p$ exists and its
    value is empty array. This is done by $\subq_1(\q_1)$ in the match
    stage. The subsequent unwind acts as the unwind above. Again, we have that
    $F_1$ satisfies the clean specialization property.

  \item $s$ is a project stage $\project{p,\, q/d}$. Then,

    $\subq_1(\q_1) = \project{\text{rel}2, ~\text{actRel},\, \text{rel}1.\id,
      ~\text{rel}1.p,
      ~\text{rel}1.q=(\text{actRel}=1)/d_{[p'/\text{rel}1.p']}/\text{dummy}}$.

    It is easy to see that (clean) and (other) are satisfied. As for (own), the
    trees with $\text{actRel}=1$ will keep the paths $\text{rel}1.\id$,
    $\text{rel}1.p$ and the value of the path $\text{rel}1.q$ will be defined
    by $d$. Hence, (own) also holds.

  \item $s$ is a group stage $\group{g/y}{a/b}$. Then

    $\subq_1(\q_1) =
    \begin{array}[t]{@{}l}
      \group{\text{rel}1.g/\text{rel}1.y,
      ~\text{actRel}}{\text{rel}1.a/\text{rel}1.b,
      ~\text{rel}2} \pipeline{} \\[2mm]
      \project{\text{rel}2,\, \text{actRel}/\id.\text{actRel}, ~\text{rel}1.a, ~\text{rel}1.\id.g/ \id.\text{rel}1.g} \pipeline{} \\[1mm]
      \project{\text{actRel},\,\{\text{rel}i=(\text{actRel}=i)/\text{rel}i/\text{dummy}\}_{i=1,2}}  \pipeline{} \\
      \unwind[+]{\text{rel}2}
    \end{array}$

    The result of the first stage is $n+1$ trees where
    \begin{itemize}
    \item one tree originates from all trees with $\text{actRel}=2$, the value of
      $\text{rel}2$ is the array of all such $\text{rel}2$ and $\text{rel}1.a$ is an
      empty array.
    \item $n$ is the number of different values $v_1, \dots, v_n$ of
      $\text{rel}1.y$ in all trees with $\text{actRel}=1$, and each of the $n$
      trees originates from a subset of the trees with $\text{actRel}=1$ and
      $\text{rel}1.y = v_i$, the value of $\text{rel}2$ is the empty array, the
      value of $\text{rel}1.a$ is all $\text{rel}1.b$ in this subset of trees,
      and the value of $\text{rel}1.g$ is $v_i$.
    \end{itemize}
    The result of the second stage is $n+1$ trees where some paths in \id are
    renamed. The result of the third stage is a forest satisfying the clean
    specialization property. In the forth stage, the array $\text{rel}2$ is unwinded,
    hence the trees with $\text{actRel}=2$ are brought in the original shape. It
    is easy to see that all properties are satisfied.
  \end{itemize}

  Since the translation is symmetric, we have also that $F_2 = F_1 \pipeline
  \subq_2(\q_2)$ satisfies the corresponding properties (clean), (own) and
  (other).
\end{proof}

\begin{theoremnum}{\ref{thm:nra-to-mupg-correct}}
  Let $Q$ be a NRA query over $C$.  Then $C \pipeline \raTomaq(Q) \equiv_\S Q$.
\end{theoremnum}

\begin{proof}[Proof Sketch]
Follows from the definition of $\rschema_\tau(C)$,
Lemma~\ref{lem:pipeline-properties} and the semantics of \mquery stages.
\end{proof}

\begin{example}
  Consider the following NRA queries over $\rschema(\tau_\bios)$, where
  \valuefont{fn} stands for name.first, \valuefont{ln} for name.last,
  \valuefont{an} for awards.award, and \valuefont{ay} for awards.year:
  \[
  \begin{array}{@{}l}
    Q = \pi_{\text{fn},\, \text{ln},\, \text{an},\, \text{ay}}(\unnest_{\text{awards}}(\bios))
    \\
    Q' = \sigma_{(\text{rel}1.\text{ay}=\text{rel}2.\text{ay}) \land ((\text{rel}1.\text{fn}\neq \text{rel}2.\text{fn}) \lor (\text{rel}1.\text{ln}\neq \text{rel}2.\text{ln}))}(Q \times Q)
  \end{array}
  \]
  Thus, $Q'$ asks for a pair computer scientists that received an award in the
  same year. We illustrate some steps of $\raTomaq$:
  \begin{itemize}
  \item $\raTomaq(Q) = \begin{array}[t]{@{}l}
      \project{\id,\, \text{awards},\, \text{birth},\, \text{contribs},\, \text{fn},\, \text{ln}} \pipeline 
      \unwind{\text{awards}} \pipeline \project{\id,\, \text{an},\, \text{ay},\, \text{birth},\, \text{contribs},\, \text{fn},\, \text{ln}} \pipeline{} \\
      \project{\text{fn},\, \text{ln},\, \text{an},\, \text{ay}}
      \\
    \end{array}$
  \item $\subq_1(\raTomaq(Q)) = \begin{array}[t]{@{}l}
      \project{\substack{\text{rel}2,\, \text{actRel},\, \text{rel}1.\id,\, \text{rel}1.\text{awards},\, \text{rel}1.\text{birth},\, \text{rel}1.\text{contribs},\, \text{rel}1.\text{fn},\, \text{rel}1.\text{ln}}} \pipeline{}\\
      \match{(\text{actRel}\neq 1) \lor ((\exists \text{rel}1.\text{awards}) \land (\text{rel}1.\text{awards}\neq []))} \pipeline
      \unwind[+]{\text{awards}} \pipeline{}\\
      \project{\substack{\text{rel}2,\, \text{actRel},\, \text{rel}1.\id,\, \text{rel}1.\text{an},\, \text{rel}1.\text{ay},\, \text{rel}1.\text{birth},\, \text{rel}1.\text{contribs},\, \text{rel}1.\text{fn},\, \text{rel}1.\text{ln}}} \pipeline{} \\
      \project{\text{rel}2,\, \text{actRel},\, \text{rel}1.\text{fn},\,
        \text{rel}1.\text{ln},\, \text{rel}1.\text{an},\,
        \text{rel}1.\text{ay}}
      \\
    \end{array}$
  \item $\raTomaq(Q') = \begin{array}[t]{@{}l}
      \pipe(\raTomaq(Q),\raTomaq(Q)) \pipeline
      \group{}{\text{rel}1,\, \text{rel}2} \pipeline \unwind{\text{rel}1} \pipeline \unwind{\text{rel}2} \pipeline{}\\
      \project{\substack{\text{rel}1.\text{fn},\, \text{rel}1.\text{ln},\, \text{rel}1.\text{an},\, \text{rel}1.\text{ay}, \text{rel}2.\text{fn},\, \text{rel}2.\text{ln},\, \text{rel}2.\text{an},\, \text{rel}2.\text{ay},~~~\\ \text{cond}/((\text{rel}1.\text{ay}=\text{rel}2.\text{ay}) \land ((\text{rel}1.\text{fn}\neq\text{rel}2.\text{fn}) \lor (\text{rel}1.\text{ln}=\text{rel}2.\text{ln})))}} \pipeline
      \match{(\text{cond}=\truevalue)} \pipeline{}\\
      \project{\substack{\text{rel}1.\text{fn},\, \text{rel}1.\text{ln},\, \text{rel}1.\text{an},\, \text{rel}1.\text{ay}, \text{rel}2.\text{fn},\, \text{rel}2.\text{ln},\, \text{rel}2.\text{an},\, \text{rel}2.\text{ay}}}\hfill\qedfull \\
    \end{array}$
  \end{itemize}
\end{example}

\begin{theoremnum}{\ref{thm:nra-to-mupgl-correct}}
  Let $Q$ be an NRA query over $C_1,\dots,C_n$, and
  $\q=C_1 \pipeline \mathsf{bring}(C_2,\ldots,C_n) \pipeline
  \raTomaq^\star(Q)$.  Then $\q\equiv_\S Q$.  Moreover, the size of $\q$ is
  polynomial in the size of $Q$.
\end{theoremnum}

\begin{proof}[Proof Sketch]
The correctness of the translation follows from
Theorem~\ref{thm:nra-to-mupg-correct}, considering the form of
$\mathsf{bring}(C_2,\ldots,C_n)$.  As for the size of $\q$, it suffices to
observe that in the inductive definition of the translation, at each step the
translation is called recursively on each subquery at most once, and moreover a
linear number of stages, each of linear size, are added to the resulting
pipeline.
\end{proof}

\section{MQuery to Nested Relational Algebra}
\label{app:mquery2nra}

In this section, we show how to translate \mqueries composed of well-typed
stages to NRAs.  First, given a set $\S$ of constraints, and a well-typed
\mquery stage $s$, we define an NRA query $\maqTonra{s}$.  Then, for an \mquery
$C \pipeline s_1 \pipeline \cdots \pipeline s_n$, the corresponding NRA query
is defined as $(\maqTonra{s_1} \circ \cdots \circ
\maqTonra{s_n})(C)$\footnote{We follow the convention that $(f \circ g)(x) =
  g(f(x))$.}.
Below we assume that the input to $\maqTonra{s}$ is a query $\Rin$ with the
associated attributes $\att(\Rin)$, and $\tau$ is the type corresponding to the
schema of $\Rin$.

\subsection{Match}

We assume match criteria $\varphi$ to be in negation normal form, that is,
negation appears directly in front of the atoms of the form $(p=v)$ and
$\exists p$.
We say that a path $p$ is \emph{nested in $\tau$} if $\type(p',\tau)=\tarray$
for some strict prefix $p'$ of $p$.

Let $\tau$ be the input type. We now define the translation of
$\match{\varphi}$ with respect to $\tau$. It is done in 3 steps, progressing
from quasi-atomic criteria to the most general ones.

\smallskip\noindent\textbf{\textsf{Step 1.}}  We first introduce the
translation for so-called \emph{simple filter} criteria. A criterion $\varphi$
is called a \emph{simple filter} criterion, if $\varphi$ is conjunction- and
disjunction-free, and, for a non-nested path $p$, is either of the form
$\exists p$, $\lnot (\exists p)$, or of the form $(p=v)$, $\neg(p=v)$ and if
$\type(p,\tau)=\tarray$, then $v$ is an array of values.

Next, we define an auxiliary function $f_\tau(\varphi)$, whose goal is to
translate simple filter criteria properly also when $p$ is not an attribute
name in $\rschema(\tau)$.

For $\varphi=(p=v)$, we need to check whether $p$ and $v$ are ``compatible''
with respect to $\tau$, that is, whether $v$ is of type $\subtree(p,\tau)$.
When they are incompatible, we set $f_\tau(p=v) = \falsevalue$; otherwise if
$v$ is of type $\subtree(p,\tau)$, we define $f_\tau(p=v)$ as\\[1mm]
$\begin{cases}
  (p=v), &\text{if }\type(p,\tau)=\tliteral\\
  (p=\rel(F))\text{ where }F=\{\attach(p, \tree(v_i))\}_{i=1}^n\text{ for } v =
  [v_1,\ldots, v_n], &\text{if }\type(p,\tau)=\tarray\\
  \bigwedge\{ (p'=v') \mid (p':v')\in \row_\tau(R_\tau, \epsilon, \attach(p,
  v)),\ v'\neq \missingvalue\},
  &\text{if }\type(p,\tau)=\tobject.
\end{cases}$\\[1mm]

For $\varphi=\exists p$, $f_\tau(\varphi)$ is defined as follows (we write
$q\neq v$ as a shortcut for $\neg(q=v)$):\\[1mm]
$\begin{cases}
  (p\neq\missingvalue), &\text{if }\type(p,\tau)\in\{\tliteral,\tarray\}\\
  \bigvee\{ (p.q\neq\missingvalue) \mid p.q\in \ratt_\tau(\epsilon)\}, &\text{if
  } \type(p,\tau)=\tobject.
\end{cases}$\\[2mm]

For $\varphi=\lnot\varphi'$, we set $f_\tau(\varphi)=\lnot(f_\tau(\varphi'))$.
Now, for a simple filter criterion $\varphi$, we translate $\match{\varphi}$
simply as $\sigma_{f_\tau(\varphi)}$.

\ignore{\nb{old}We assume match criteria $\varphi$ to be in negation normal form, that is,
negation appears directly in front of the atoms of the form $(p=v)$ and $\exists
p$.
We say that a path $p$ is \emph{nested in $\tau$} if $\type(p',\tau)=\tarray$
for some strict prefix $p'$ of $p$.

We first introduce the translation for conjunction- and disjunction-free
criteria $\varphi$ according to the type $\tau$, that is when $\varphi$ is of
the form $(p=v)$, $\exists p$, $\neg(p=v)$, or $\lnot\exists p$. To this
purpose, we define an auxiliary function $f(\varphi)$, whose goal is to
translate the criteria properly also when $p$ is not an attribute name in
$\rschema(\tau)$.

For $\varphi=(p=v)$, we need to check whether $p$ and $v$ are ``compatible''
with respect to $\tau$, that is, whether $v$ is of type $\subtree(p,\tau)$ or
of type $\subtree(p.0,\tau)$.
When they are incompatible, we set $f(p=v) = \falsevalue$; otherwise if $v$ is
of type $\subtree(p,\tau)$, we define $f(p=v)$ as
\begin{itemize}
\item $(p=v)$, if $\type(p,\tau)=\tliteral$;
\item $(p=\rel(F))$ where $F=\{\attach(p, \tree(v_i))\}_{i=1}^n$ and $\tau$ is the type of $F$, if
  $\type(p,\tau)=\tarray$ and $v = [v_1,\ldots, v_n]$;
\item
  $\bigwedge_{\substack{(p':v')\in \row_\tau(R_\tau, \epsilon, \attach(p, v)),\
    v'\neq \missingvalue}} (p'=v')$,
  if \mbox{$\type(p,\tau)=\tobject$};
\end{itemize}
and if $v$ is of type $\subtree(p.0,\tau)$ (hence, $\type(p,\tau)=\tarray$), we
define $f(p=v)$ as
\begin{itemize}
\item $(v \in p)$, if $\type(p[], \tau)=\tliteral$;
\item \mbox{$(\row_\tau(R_\tau, \epsilon, \attach(p, v)) {\in} p)$, if
   $\type(p[], \tau)=\tobject$.}
\end{itemize}

For $\varphi=\lnot(p=v)$, we set $f(\varphi)=\lnot(f(p=v))$.

For $\varphi=\exists p$, $f(\varphi)$ is defined as follows (we write $q\neq v$
as a shortcut for $\neg(q=v)$):
\begin{itemize}[$\bullet$]\itemsep-\parsep
\item $(p\neq\missingvalue)$, if $\type(p,\tau)\in\{\tliteral,\tarray\}$;
\item $\bigvee_{p.q\in \ratt_\tau(\epsilon)} (p.q\neq\missingvalue)$, if
  \mbox{$\type(p,\tau)=\tobject$}.
\end{itemize}

For $\varphi=\neg\exists p$, we set $f(\varphi)=\neg ( f(\exists p))$.

If the path $p$ in $\varphi$ is not nested in $\tau$, we translate
$\match{\varphi}$ as $\sigma_{f(\varphi)}$, i.e.,
$\maqTonra{\match{\varphi}}=\sigma_{f(\varphi)}$.\nb{end of old}
}

\begin{lemma}
  Let $\varphi$ be a simple filter and $r{=}\row_\tau(R_\tau, \epsilon, t)$.
  Then for any tree $t$ of $\tau$, we have $t\models \varphi$ if and only if
  $\evalu(f_\tau(\varphi), r) = \truevalue$.
\end{lemma}

\begin{proof}[Proof Sketch]
  We start with a simple filter $\varphi=(p=v)$. When $p$ and $v$ are
  incompatible with respect to $\tau$, it is easy to see that
  $t \not\models (p=v)$, and
  $\sigma_{f_{\tau}(\varphi)} \rel(\{t\}) = \emptyset$. Below we assume $p$ and
  $v$ are compatible.

  \begin{itemize}
  \item If $\type(p,\tau)=\tliteral$, then the path $p$ corresponds to a top
    level atomic attribute in the relation $\R$. When $t\models \varphi$, there
    is a node $x\in \eval{p}$ and $\mathsf{value}(x,t)=v$. According to the
    construction of $r$, $t \models \varphi$ iff $(p:v) \in r$.
  \item If $\type(p,\tau)=\tarray$, then the path $p$ corresponds to a top
    level relational attribute in $r$. Checking $p=v$ amounts to check
    $(p{:}\rel(F)) \in r$.
  \item If $\type(p,\tau)=\tobject$, then the path $p$ leads to a subtree of
    $\tau$. To check $p=v$, we have to check all paths $p'$ with prefix
    $p$ such that
    $\type(p',\tau)=\tliteral$, which corresponds to multiple attributes with
    prefix $p$ in r.
  \end{itemize}

  For $\varphi=\exists p$, evaluating $\varphi$ corresponds to checking whether
  the corresponding value in $r$ is $\missingvalue$. When
  $\type(p,\tau)\in\{\tliteral,\tarray\}$, it checks the single attribute $p$
  in $r$; when $\type(p,\tau)=\tobject$, it checks all attributes with prefix
  $p$.

  Finally, the result holds for $\varphi=\neg \varphi'$ trivially because
  $ t \models \varphi $ iff $ t \not\models \varphi'$ iff
  $ \evalu(f_\tau,\varphi')=\falsevalue$ iff $ \evalu(f_\tau,\varphi)=\truevalue$.
\end{proof}

\smallskip\noindent\textbf{\textsf{Step 2.}}  Consider now the case of a
criterion $\varphi$ of the form $p=v$ or $\lnot(p=v)$ that is not a simple
filter criterion, and for simplicity, assume that the level of nesting of $p$
is 1, i.e., there is only one (possibly non-strict) prefix $q$ of $p$ with
$\type(q,\tau) = \tarray$. We call $q$ the \emph{parent relation} of $p$. Then,
$q$ is a sub-relation of $R_{\tau}$, and $p$ is a prefix of some path in
$\ratt_{\tau}(q)$.  To check the condition on $p$ according to the semantics of
match, we need to be able to access the contents of the sub-relation $q$ by
unnesting it, but to return the original (i.e., nested) relation $q$.  So
before actually doing a selection, we apply several preparatory phases.
\begin{itemize}
\item
  $\mathsf{AddID}=\pi_{\att(Q),\, \mathit{id}.\att(Q)/\att(Q)} \circ
  \nest_{\mathit{id}.\att(Q) \to \mathit{ID}}$
  creates an identifier for each tuple (required for negative $\varphi$, for
  which we need to unnest and then to nest back):

\item $\mathsf{AddDup}_{\varphi}=\pi_{\att(Q),\, \mathit{ID},\, q'/q}$ creates
  a copy $q'$ of the sub-relation $q$;

\item $\mathsf{Prep}_{\varphi}$ does proper preprocessing of the new attribute
  $q'$
  \begin{itemize}[--]
  \item $\mathsf{Prep}_{\varphi} = \unnest_{q'} \circ \pi_{\att(Q),\,
      \mathit{ID},\, \{a'/a\mid a\in\att(q')\}}$ for positive $\varphi$,
  \item
    $\mathsf{Prep}_{\varphi} = \unnest_{q'} \circ \pi_{\att(\Rin),\,
      \mathit{ID},\, \mathit{res}/f'(\varphi')} \circ \nest_{\{\mathit{res}\}
      \rightarrow \mathit{cond}}$ for $\varphi$ of the form $\neg\varphi'$.
  \end{itemize}
\end{itemize}

Let $\tau'$ be the type resulting from unwinding $q$ in $\tau$, i.e.,
$\{\tau'\} = \{\tau\}\pipeline \unwind{q}$.
We define $f'(\varphi)$ as $\falsevalue$ when $p$ and $v$ are incompatible with
respect to $\tau$; as $f_{\tau'}(\varphi)[p \to p.\literalattr']$ if $p=q$,
and as $f_{\tau'}(\varphi)[p\to p']$ otherwise, for positive $\varphi$; and as
$(\mathit{cond}=\{(\mathit{res}:\falsevalue)\})$, for negative $\varphi$.
Then, we apply selection with the condition  $f'(\varphi)$, and
finally, project away the auxiliary columns $q'$ and $\mathit{ID}$.
More precisely,
\[
  \maqTonra{\match{\varphi}} = \mathsf{AddID} \circ \mathsf{AddDup}_{\varphi}
  \circ \mathsf{Prep}_{\varphi} \circ \sigma_{f'(\varphi)} \circ \pi_{\att(\Rin)}.
\]
This translation can be extended to the case of multiple levels of nesting, and
we omit the details, which are tedious but straightforward.

\begin{lemma}
  Given a criterion $\varphi$ of the form $p=v$ or $\lnot(p=v)$ that is not a
  simple filter criterion, and such that the level of nesting of $p$ is 1,
  then for any tree $t$ of $\tau$, we have $t\models \varphi$ if and only if
  $\maqTonra{\match{\varphi}}(\rel(\{t\})) = \rel(\{t\})$.
\end{lemma}

\begin{proof}[Proof Sketch]
  We observe that $\mathsf{AddID}$ adds one relational attribute named
  $\mathit{ID}$ which has the value $\{r\}$ where $r=\row_\tau(R_\tau,
  \epsilon, t))$.

  For $\varphi=(p=v)$, the translation follows a ``standard'' approach
  \cite{Colby89} for accessing nested attributes: it unnests the sub-relation
  of the nested attributes, then checks the condition over the flattened
  sub-relation. The final project ensures that we return a subset of the
  original relation.

  For $\varphi=\lnot(p=v)$, we first observe that $t\models \lnot(p=v)$ iff for
  each $x \in \eval{p}$ we have that $\mathsf{value}(x,t)\neq v$. Therefore, we
  have to check the condition over all rows in the flattened sub-relation. To
  this purpose, $\mathsf{Prep}_{\varphi}$ nests the result of evaluating the
  condition in an attribute $\mathit{cond}$, while $\mathit{ID}$ ensures that
  exactly the original tuples have been reconstructed after nesting. If
  $t\models \lnot(p=v)$, then \textit{all} values in $\mathit{res}$ are false,
  and the filter $(\mathit{cond}=\{(\mathit{res}:\falsevalue)\})$ evaluates to
  true. Otherwise if $t\not\models \lnot(p=v)$, then \textit{some} value in
  $\mathit{res}$ is true, and the filter
  $(\mathit{cond}=\{(\mathit{res}:\falsevalue)\})$ evaluates to false.
\end{proof}

\smallskip\noindent\textbf{\textsf{Step 3.}}  Now we deal with arbitrary
criteria $\varphi$. Let $\alpha_1,\dots,\alpha_n$ be all positive non-simple
filter literals in $\varphi$, $\beta_1,\dots,\beta_m$ all negative non-simple
filter literals in $\varphi$, and $\delta_1,\dots,\delta_k$ all simple filter
literals in $\varphi$. For each non-simple filter literal over a path $p$, we
need to create a separate duplicate of the parent relation $q$ of $p$. So below
we assume that
$\mathsf{AddDup}_{\alpha_1,\dots,\alpha_n,\beta_1,\dots,\beta_n}$ creates a new
column named uniquely for each literal $\alpha_i$ and $\beta_j$, and that
$\mathsf{Prep}_{\beta_j}$ projects also all these new columns and gives unique
names to the sub-relations $\mathit{cond}$ (and projects them as well). We set
$f'(\delta_i)=f(\delta_i)$ and let $f'(\varphi)$ be the result of replacing in
$\varphi$ each literal $\ell$ by $f'(\ell)$ (respecting the unique names of the
attributes and sub-relations for each literal over a nested path).
Then $\maqTonra{\match{\varphi}}$ is the query
\[
\begin{array}{l}
  \mathsf{AddID} \circ
  \mathsf{AddDup}_{\alpha_1,\dots,\alpha_n,\beta_1,\dots,\beta_m} \circ
  \mathsf{Prep}_{\beta_1} \circ \dots \circ \mathsf{Prep}_{\beta_m} \circ
  \mathsf{Prep}_{\alpha_1} \circ \dots \circ \mathsf{Prep}_{\alpha_n} \circ{}\\
  \sigma_{f'(\varphi)} \circ \pi_{\att(\Rin)}.
\end{array}
\]

\begin{lemma}
  \label{lem:maq2nra:match:3}
  Given an arbitrary criterion $\varphi$, for any tree $t$ of type $\tau$, we
  have that $t\models \varphi$ if and only if
  $\maqTonra{\match{\varphi}}(\rel(\{t\})) = \rel(\{t\})$.
\end{lemma}

\begin{proof}[Proof Sketch]
  The operators $\textsf{Prep}_{\beta_i}$ and $\textsf{Prep}_{\alpha_j}$
  evaluate $\beta_i$ and $\alpha_j$ separately and $f'(\varphi)$ combines the
  results of evaluation of $\alpha_j$, $\beta_i$, and $\delta_k$.
\end{proof}

\begin{lemma}
\label{lem:maq2nra:match}
  Let $F$ be a forest of type $\tau$ and $s=\match{\varphi}$ a match
  stage of \mquery, then
  $F \pipeline s \simeq \maqTonra{s} (\rel(F)) $ .
\end{lemma}

\begin{proof}
  The result follows directly from Lemma~\ref{lem:maq2nra:match:3}.
\end{proof}

\subsection{Unwind}
The unwind operator $\unwind{p}$ can be translated to unnest in NRA:
$\maqTonra{\unwind{p}} = \unnest_{p}$.
To deal with $\unwind[+]{p}$, we first replace empty sub-relations $p$ with the
relation consisting of one tuple
$\{a_1:\missingvalue, \dots, a_n:\missingvalue\}$, where
$\{a_1,\dots,a_n\} = \ratt_\tau(p)$, and then apply the normal unnest. Hence,
$\maqTonra{\unwind[+]{p}}$ is defined as
\[\pi_{\att(\Rin)\setminus\{p\},\,
    p/\cond{(p=\subrel())}{\subrel(\{a_1:\missingvalue, \dots,
      a_n:\missingvalue)\})}{p}} \circ \unnest_{p}.\]

\begin{lemma}\label{lem:maq2nra:unwind}
  Let $F$ be a forest of type $\tau$ and $s=\unwind[n]{p}$ an unwind stage of
  \mquery, then $F \pipeline s \simeq \maqTonra{s} (\rel_\tau(F)) $ .
\end{lemma}

\begin{proof}
  Straightforward considering the semantics of unwind and unnest.
\end{proof}

\subsection{Project}

We consider a well-typed project stage $\project{P}$ with an input type $\tau$.

Let $p$ be a path in $\tau$. We define function $\projattr_\tau(p)$.
\begin{itemize}
\item If $\type(p,\tau)\in\{\tliteral,\tarray\}$, then $\projattr_\tau(p) =
  \{p\}$.
\item If $\type(p, \tau)=\tobject$, then $\projattr_\tau(p)=\{ p.p' \mid p.p'
  \in \ratt_{\tau}(q)\}$, where $q$ is the longest prefix of $p$ such that
  $\type(q,\tau)=\tarray$, when $p$ is nested, and $q=\epsilon$, if $p$ is not
  nested.
\end{itemize}

\def\projval{\mathsf{val}}
\def\projexp{\mathsf{exp}}
\def\bool{\mathsf{bool}}

To define $\projattr_{\tau}(q/d)$, we first define function $\projval_{\tau}(d)$ for value
definitions $d$ returning a set of pairs $(p,v)$ where $p$ is a path, and $v$
is a literal or array value, a path, a conditional expression, or an expression
of the form $\bool(\beta)$ for a Boolean value definition. In an array value,
the leafs of the trees might have paths or $\bool(\beta)$ as values.
Below we assume that for a value definition $d'$,
 $\tau_{d'}$
is the type of $d'$ with respect to $\tau$, as defined in
Section~\ref{sec:mquery2nra}. 
\begin{itemize}
\item $\projval_{\tau}(p)$ for a path $p$ is defined as
  \begin{itemize}
  \item $\{(\varepsilon,p)\}$, if $\type(p,\tau)\in\{\tliteral,\tarray\}$,
  \item $\{(r,p.r) \mid p.r \in \ratt_\tau(p) \}$, if $\type(p,\tau)=\tobject$.
  \end{itemize}

\item $\projval_{\tau}(v)$ for a constant value $v$ is defined as
  \begin{itemize}
  \item $\{(\varepsilon,v)\}$ if $v$ is of literal type,
  \item $\{(p, \subtree(\tree(v), p)) \mid p \in \ratt_{\tau'}(\varepsilon)\}$
    if $v$ is of object type $\tau'$.
  \end{itemize}

\item $\projval_{\tau}([d_1,\dots,d_n]) = \{(\varepsilon,[t_1,\dots,t_n])\}$,
  where $t_i = \bigoplus_{(p,v)\in \projval_{\tau}(d_i)} \attach(p,\tree(v))$ .

\item $\projval_{\tau}(\beta)$ for a Boolean value definition $\beta$ is defined as
  $\{(\varepsilon, \bool(\beta))\}$.

\item $\projval_{\tau}(\cond{c}{d_1}{d_2})$ is 
  \begin{itemize}
  \item $\{(\varepsilon,\cond{c}{v_1}{v_2})\}$, for $(\varepsilon,v_i) \in
    \projval_{\tau}(d_i)$, if $\tau_{d_1}$ is a literal or array type; and
  \item
    $\{(p,\cond{c}{v_1'}{v_2'}) \mid p \in \ratt_{\tau_{d_1}}(\varepsilon),
    v_i' = \subtree(t_i, p)\}$, where\\
    $t_i = \bigoplus_{(q,v)\in \projval_{\tau}(d_i)}\attach(q, \tree(v))$, if
    $\tau_{d_1}$ is an object type.
  \end{itemize}
\end{itemize}

We extend the function $\projattr_\tau$ to expressions $q/d$ as follows:
\[\projattr_{\tau}(q/d)=\{ q.p/\projexp(v) \mid (p,v) \in \projval_{\tau}(d)\},\]
where
\begin{itemize}
\item $\projexp(p) = p$ for a path $p$,
\item $\projexp(v) = v$ for a literal value $v$,
\item $\projexp(\bool(\beta)) = f'(\beta)$ for a Boolean value definition $\beta$.
\item $\projexp(\cond{c}{v_1}{v_2}) =\cond{f'(c)}{\projexp(v_1)}{\projexp(v_2)}$,
\item $\projexp([t_1,\dots,t_n]) = \subrel(\projexp(t_1),\dots,\projexp(t_n))$,
\item $\projexp(\tree(\object{k_1{:}v_1,\dots,k_n{:}v_n})) = \{k_1{:}v_1,\dots,k_n{:}v_n\}$,
\end{itemize}
Here, for a Boolean value definition $\beta$, we use the function $f'$ defined
in the translation of match. An expression $p_1=p_2$ is called a \emph{simple
  filter} if both $p_1$ and $p_2$ are non-nested paths such that
$\type(p_1,\tau) = \type(p_2,\tau)$.

Now, $\maqTonra{\project{P}}$ is the query
\[
  \mathsf{AddID} \circ
  \mathsf{AddDup}_{\alpha_1,\dots,\alpha_n,\beta_1,\dots,\beta_m} \circ
  \mathsf{Prep}_{\beta_1} \circ \dots \circ \mathsf{Prep}_{\beta_m} \circ
  \mathsf{Prep}_{\alpha_1} \circ \dots \circ \mathsf{Prep}_{\alpha_n} \circ
  \pi_{\projattr_{\tau}(P)},
\]
where $\alpha_1,\dots,\alpha_n$ are all positive non-simple filter literals,
$\beta_1,\dots,\beta_m$ all negative non-simple filter literals, and
$\delta_1,\dots,\delta_k$ all simple filter literals in Boolean value
definitions in $P$, and $\projattr_{\tau}(P) = \bigcup_{p \in P}\projattr_{\tau}(p) \cup
\bigcup_{q/d \in P}\projattr_{\tau}(q/d)$ (respecting the unique names of the
attributes and sub-relation for each literal over a nested path).

\begin{lemma}\label{lem:maq2nra:project}
  Let $F$ be a forest of type $\tau$ and $s=\project{P}$ a project stage
  of \mquery, then $F \pipeline s \simeq \maqTonra{s} (\rel_\tau(F))$.
\end{lemma}

\begin{proof}[Proof Sketch]
  The operators $\mathsf{AddID}, \mathsf{AddDup}, \mathsf{Prep}$ evaluate
  all the non-simple literal filters used in $d$. Then for any $p/d \in P$, the
  function $\projval_{\tau}(d)$ splits $d$ into multiple definitions $p_i:v_i$, which
  guarantees that each $p_i/v_i$ can be translated by the $\projexp$ function
  into an equivalent extended NRA project expression.
\end{proof}

\subsection{Group}

To translate the group operator $\group{G}{a_1/b_1,..,a_m/b_m}$,  we
\begin{inparaenum}[\itshape (i)]
\item rename attributes according to $G$ and project only the attributes $b_i$;
\item nest the attributes $\{b_1,\dots,b_m\}$ into $\mathit{acc}$, and create m
  copies of $\mathit{acc}$ (for each $b_i$);
\item finally, for each $i=1,\dots,m$, we intuitively project the column $b_i$
  from the relation $\mathit{acc}_i$. Since it is a sub-relation, we first
  unnest it, project only $b_i$, and then nest $b_i$ into sub-relation $a_i$.
\end{inparaenum}
For simplicity, we only show the translation when all types of paths in $G$ and
$b_i$'s are either \tliteral or \tarray. In this case,
$\maqTonra{\group{G}{a_1/b_1,..,a_m/b_m}}$ is defined as
\[\begin{array}{l}
  \pi_{\mathsf{id}(G), \, b_1, \ldots, b_m} \circ{}\\ %
  \nest_{\{b_1,\ldots, b_m\} \rightarrow \mathit{acc}} \circ
  \pi_{\mathsf{idAtt}(G), \, \mathit{acc}_1/\mathit{acc}, \ldots, \mathit{acc}_m/\mathit{acc}} \circ{}\\ %
  \unnest_{\mathit{acc}_1} \circ \pi_{\mathsf{idAtt}(G), \, b_1, \, \mathit{acc}_2, \ldots, \mathit{acc}_m} \circ \nest_{\{b_1\} \rightarrow a_1} \circ{}\\
  \unnest_{\mathit{acc}_2} \circ \pi_{\mathsf{idAtt}(G), \, a_1, \, b_2, \, \mathit{acc}_3, \ldots, \mathit{acc}_m} \circ \nest_{\{b_2\} \rightarrow a_2}\circ{}\\
  \cdots \\
  \unnest_{\mathit{acc}_m} \circ \pi_{\mathsf{idAtt}(G), \, a_1, \ldots, a_{m-1}, \, b_m} \circ \nest_{\{b_m\} \rightarrow a_m}\\
\end{array}\]
where $\mathsf{id}(G) = \id.g_1/y_1,\dots,\id.g_m/y_m$ and
$\mathsf{idAtt}(G)=\id.g_1, \ldots, \id.g_n$ if $G = g_1/y_1,\dots,g_m/y_m$, and
$\mathsf{id}(G) = \id/\nullvalue$ and $\mathsf{idAtt}(G)=\id$, if $G$ is
empty. This translation can be extended to the case in which some types are
\tobject by using the $\projattr_\tau$ function defined above.  We omit the
details.

\begin{lemma}\label{lem:maq2nra:group}
  Let $F$ be a forest of type $\tau$ and
  $s=\group{G}{a_1/b_1, \ldots, a_m/b_m}$ a group stage of \mquery,
  then $F \pipeline s \simeq \maqTonra{s} (\rel_\tau(F))$.
\end{lemma}

\begin{proof}
  Straightforward considering the semantics of group and nest.
\end{proof}




\subsection{Lookup}

For a lookup operator $\lambda^{p_1=C'.p_2}_{p}$, we assume that $C'$ is of
type $\tau'$.  To translate lookup, we first compute a subquery $Q_1$ that
extends each tuple $r$ in the input relation with a Boolean attribute noMatch
that encodes whether there exists at least one matching tuple in the relation
$C'$:
\[Q_1 = \begin{array}[t]{@{}l}
  (\Rin \times C') \circ{}\\
  \pi_{\text{rel1}.\att(\Rin),\  \text{cond}/(\text{rel1}.\projattr_\tau(p_1)=\text{rel2}.\projattr_{\tau'}(p_2))} \circ{}\\
  \nest_{\{\text{cond}\}\to \text{er}} \circ{}\\
  \pi_{\att(\Rin)/\text{rel1}.\att(\Rin),\ \text{noMatch}/(\text{er}=\subrel(\{\text{cond}:\falsevalue\})}\end{array}\]
where $\text{rel1}.\projattr(p_1)=\text{rel2}.\projattr(p_2)$ is an
abbreviation for a conjunction of multiple equality conditions, if
$\subtree(\tau, p_1)$ coincides with $\subtree(\tau',p_2)$, and \falsevalue
otherwise.  Then, we
\begin{inparaenum}[\itshape (i)]
\item cross-product the result of $Q_1$ with $C'$ again,
\item from all possible pairs $(r,r') \in Q_1 \times C'$ we select only those
  for which either there is no match, or the joining condition is satisfied,
\item we nest the attributes from $C'$ into $p$ (to capture the behavior of
  lookup that stores all matching trees in an array), and finally
\item for all tuples $r \in Q_1$, for which there are no matching tuples, we
  replace the value of $p$ by the empty relation.
\end{inparaenum}
More precisely, we define $\maqTonra{\lambda^{p_1=C'.p_2}_{p}}(\Rin, C')$ as
\[\begin{array}{@{}l}
  (Q_1 \times C') \circ{}\\
  \sigma_{\text{noMatch}=\truevalue\ \lor\ \text{rel1}.\projattr_\tau(p_1)=\text{rel2}.\projattr_{\tau'}(p_2)}\circ{}\\
  \nest_{\text{rel}2.\att(C') \rightarrow p} \circ{}\\
  \pi_{\att(Q)/\text{rel}1.\att(Q),\, p/\cond{(\text{noMatch}=\truevalue)}{\subrel()}{p}}\\
\end{array}\]
%

\begin{lemma}\label{lem:maq2nra:lookup}
  Let $F$ be a forest of type $\tau$, $F'$ a forest of type $\tau'$,
  and $s=\lambda^{p_1=C'.p_2}_{p}$. Then
  $F \pipeline s[F'] \simeq \maqTonra{s} (\rel_\tau(F),
  \rel_{\tau'}(F'))$.
\end{lemma}

\begin{proof}
  Straightforward considering the semantics of lookup.
\end{proof}

\begin{example}
  Consider a collection of type $\tau_\bios$ in Example~\ref{ex:rel-schema}.
  First, we provide the translation of some atomic criteria:

  {\small\begin{itemize}
    \item $f$(name.first="Kristen") $=$ (name.first="Kristen")
    \item $f$(name=\lobject first: "Kristen"\robject) $=$ \falsevalue since
      according to $\tau_\bios$, the object under the key \valuefont{name}
      should contain also the key \valuefont{last}.
    \item \mbox{$f$(name=\lobject first:\,"Kristen", last:\,"Nygaard"\robject)\,$=$\,((name.first="Kristen")\,$\wedge$\,(name.last="Nygaard"))}
    \item for $\varphi=(\text{contribs=["OOP", "Simula"]})$, $f(\varphi)$
      computes a comparison between a sub-relation name and a relation value:
      (\text{contribs}=\{(contribs.\literalattr:\,"OOP"),
      (contribs.\literalattr:\,"Simula")\})
    \end{itemize}
  }

  Second, we provide the translation of match stages for a criterion about a
  nested path and for a complex criterion:
  \[
\begin{array}{l}
  \maqTonra{\match{\text{contribs="OOP"}}} =
       \begin{array}[t]{@{}l}
         \pi_{\att(\bios),\, \text{contribs1/contribs}} \circ
         \unnest_{\text{contribs1}} \circ{}\\ %
         \pi_{\substack{\att(\bios),\,
         \text{contribs.lit1}/\text{contribs.lit}}} \circ{} \\
         \sigma_{(\text{contribs.lit1 = "OOP"})} \circ \pi_{\att(\bios)}
       \end{array}
\\
  \maqTonra{\match{(\text{awards.year}=2001)}} =
       \begin{array}[t]{@{}l}
         \pi_{\att(\bios),\, \text{awards1/awards}} \circ
         \unnest_{\text{awards1}} \circ{}\\ %
         \pi_{\substack{\att(\bios),\, \text{awards.award1}/\text{awards.award},\,
         \text{awards.year1}/\text{awards.year}}} \circ{}\\
         \sigma_{(\text{awards.year1} = 2001)} \circ \pi_{\att(\bios)}
       \end{array}
       \\
       \maqTonra{\match{((\text{awards.year} \neq  1999) \,\lor\, (\text{awards.year} = 2000))}} = \\
       \quad\quad \begin{array}[t]{@{}l}
         \mathsf{AddID} \circ
         \pi_{\att(\bios),\, \mathit{ID},\, \text{awards1}/\text{awards},\, \text{awards2}/\text{awards}} \circ{}\\
         \unnest_{\text{awards1}} \circ %
         \pi_{\att(\bios),\, \mathit{ID},\, \text{awards2},\, \text{res}/(\text{awards.year}=1999)} \circ
         \nest_{\{\text{res}\} \rightarrow \text{cond}} \circ{}\\ %
         \unnest_{\text{awards2}} \circ
         \pi_{\substack{\att(\bios),\, \mathit{ID},\, \text{cond},\, \text{awards.award2}/\text{awards.award},\,
                    \text{awards.year2}/\text{awards.year}}} \circ{}\\
         \sigma_{(\text{cond}=\{(\text{res}: \falsevalue)\}) \lor (\text{awards.year2}=2000)} \circ%
         \pi_{\att(\bios)}
       \end{array}
     \end{array}
  \]
  where for the first two stage we omitted the creation of the identifier
  column.  Finally, we provide the translation of a group stage:
  \[
    \begin{array}[b]{l}
      \maqTonra{\group{\text{year/awards.year}}{\text{persons/name}}} = \\
      \quad\quad
      \begin{array}[b]{@{}l}
        \pi_{\text{\_id.year/awards.year,
          persons.first/name.first, persons.last/name.last}}  \circ{} \\
        \nest_{\{\text{persons.first, persons.last}\} \rightarrow \text{persons}}\\
      \end{array}
      \hspace{3.2cm}\qed
    \end{array}
  \]
\end{example}

\begin{theorem}\label{thm:maq-to-nra-stage}
  Let $F$ be a forest of type $\tau$ and $s$ a stage of \mquery, then
  $F \pipeline s \simeq \maqTonra{s} (\rel_\tau(F)) $ if $s$ is not a lookup
  stage; otherwise
  $F \pipeline s[F'] \simeq \maqTonra{s} (\rel_\tau(F), \rel_{\tau'}(F'))$ for a forest $F'$ of type $\tau'$.
\end{theorem}

\begin{proof}
  This follows from Lemmas~\ref{lem:maq2nra:match} to \ref{lem:maq2nra:lookup}.
\end{proof}

\begin{theoremnum}{\ref{thm:maq-to-nra-pipeline}}
  Let $\S$ be a set of type constraints, $\q$ an \mquery
  $C \pipeline s_1 \pipeline \cdots \pipeline s_m$ in which each stage is
  well-typed for its input type, and
  $Q = C\circ \maqTonra{s_1}\circ \cdots \circ\maqTonra{s_m}$.  Then
  $\q \equiv_\S Q$, moreover, the size of $Q$ is polynomial in the size of $\q$
  and $\S$. 
\end{theoremnum}

\begin{proof}
  Let $D$ be a \mongodb database satisfying a set $\S$ of constraints,
  $ F_i= \ansmongo(C \pipeline s_1 \pipeline \cdots \pipeline s_i, D)$, and
  $\R_i=\ansra(C\circ \maqTonra{s_1}\circ \cdots \circ\maqTonra{s_i},
  \rdb_{\S}(D))$. We can easily show $F_i\simeq \R_i$, for all $i\in[1..m]$ by
  induction on $i$.
  %
  
  As for the size of $Q$, it is easy to see that for each stage $s$, the size
  of $\maqTonra{s}$ is polynomial in the size of $s$ and $\S$. Moreover, the
  size of $Q$ is linear in the sum of the sizes of $\maqTonra{s_i}$, for
  $i=1,\dots,m$.
\end{proof}



\section{Complexity of MQuery}

\begin{theoremnum}{\ref{lem:mupg-nexptime-complete}}
  \mupg and \mupgl are \TAexppoly-complete in combined complexity
  and in \ACz in data complexity.
\end{theoremnum}

\begin{lemma}
  \label{lem:TAexppoly-lower-bound}
  \mupg is \TAexppoly-hard in combined complexity.
\end{lemma}
\begin{proof}
  We adapt the proof of \TAexppoly-hardness from \cite{Koch06}.

  Let $M = (\Sigma, Q, \delta, q_0, F)$ be an alternating Turing machine that
  runs in time $2^{p_1(n)}$ with $p_2(n)$ alternations on inputs of size $n$,
  where $\Sigma$ the tape alphabet, $Q$ is the set of states partitioned into
  existential $Q_\exists$ and universal $Q_\forall$ states,
  $\delta: Q \times \Sigma \times \{1,2\} \to Q \times \Sigma \times
  \{-1,0,+1\}$
  the transition function, which for a state $q$ and symbol $s$ gives two
  instructions $\delta(q,s,1)$ and $\delta(q,s,2)$, $q_0$ the initial state and
  $F \subseteq Q$ the set of accepting states.

  Following Koch, we simulate the computation of $M$ in \mupg.  Each run of $M$
  is a tree of configurations of depth bounded by $p_2(n) \cdot 2^{p_1(n)}$,
  and each configuration consists of a tape of length bounded by $2^{p_1(n)}$,
  a current state and a position marker on the tape. We construct an \mupg $\q$
  and a forest $F$ such that $F \pipeline \q$ is non-empty iff $M$ accepts its
  input. $F$ consists of a single document containing the key-value pair \id: 1.

  \begin{itemize}
  \item The tape of a configuration is modeled as a nested object of nested
    depth $p_1(n)$ and with $2^{p_1(n)}$ leaves.  The position of the head on
    the tape is represented by an extended tape alphabet
    $\Sigma' = \Sigma \cup \{\bar{s} \mid s \in \Sigma\}$. That is, the symbol
    $\bar{s}$ in a tape cell indicates that the cell stores symbol $s$ and it
    is the current position of the head. The following is a valid tape:
    \begin{center}
      \valuefont{"tape": \{"l": \{"l": "$\bar{\texttt{0}}$", "r": "0"\}, "r":
        \{"l": "\#", "r": "\#"\}\}}.
    \end{center}
    We can compute the set of all $m^{2^{p_1(n)}}$ tapes (including non-valid
    ones) by the query:
    \[\left.
      \begin{array}{@{}r} \mathsf{Tapes} = \project{\text{tape}/[\Sigma']}
        \pipeline \project{l/\text{tape},\, r/\text{tape}} \pipeline %
        \unwind{l} \pipeline \unwind{r} \pipeline%
        \project{\text{tape}.l/l,\, \text{tape}.r/r} \pipeline%
        \group{}{\text{tape}} \pipeline{}\\
        \dots \mbox{\hspace{3cm}}\\
        \project{l/\text{tape},\, r/\text{tape}} \pipeline %
        \unwind{l} \pipeline \unwind{r} \pipeline%
        \project{\text{tape}.l/l,\, \text{tape}.r/r} \pipeline%
        \group{}{\text{tape}}  \mbox{\hspace{0.34cm}}\\
      \end{array}\right\}p_1(n) \text{ times}
    \]
    Where for a set $S$, the value definition $[S]$ means a constant array
    consisting of all elements in $S$ (we view everything as strings). The
    result of this query (on F) is a single document containing an array of
    all possible tapes under the key \valuefont{tape}.

  \item In turn, a configuration is a pair, consisting of a tape and a
    state. We can compute all possible (including non-valid ones)
    configurations by the following \mupg:
    \[\mathsf{Configs} = \mathsf{Tapes} \pipeline %
    \project{\text{tape},\, \text{state}/[Q]} \pipeline%
    \unwind{\text{tape}} \pipeline \unwind{\text{state}} \pipeline%
    \project{\text{c.tape}/\text{tape},\, \text{c.state}/\text{state}}\]
    The result of $\mathsf{Configs}$ is a set of trees, each containing one
    possible configuration under the key \valuefont{c}.

  \item Next, we are going to construct a query that computes the pairs of
    configurations $c_1$ and $c_2$ such that $c_2$ is a possible immediate
    successor of $c_1$ according to $\delta$ (also including pairs of non-valid
    configurations).
    First, we create all pairs of configurations $c_1$ and $c_2$, and make
    working copies $w_1$ and $w_2$ of the tapes.
    \[\mathsf{Prepare\text{-}succ} = \mathsf{Configs} \pipeline %
    \group{}{c1/c,\, c2/c} \pipeline \unwind{c1} \pipeline \unwind{c2}
    \pipeline%
    \project{\text{succ}/\{c1/c1,\, c2/c2\},\, w1/c1.\text{tape},\, w2/c2.\text{tape}}\]

    Second, to check that $c_1$ is a possible successor of $c_2$, we verify
    that $w_1$ and $w_2$ differ at at most two consecutive tape positions. The
    tapes are of exponential length, but we can find these two positions by
    doing a number of checks that is equal to the depth of the value encoding a
    tape minus 1.  Namely, we iteratively compare the halves of the working
    copies, and in the next step the working copies become the halves which are
    not equal (see \cite{Koch06} for more details):
    \begin{itemize}
    \item If $w_1.l = w_2.l$ (the left halves of the tapes are equal), we
      replace $w_1$ by $w_1.r$ and $w_2$ by $w_2.r$
    \item If $w_1.r = w_2.r$ (the right halves of the tapes are equal), we
      replace $w_1$ by $w_1.l$ and $w_2$ by $w_2.l$
    \item If $w_1.l.l = w_2.l.l$ and $w_1.r.r = w_2.r.r$ (the left and the
      right quarters of the tapes are equal, so the difference should be in the
      ``inner'' part of the tree), we replace $w_1.l$ by $w_1.l.r$, $w_1.r$ by
      $w_1.r.l$ and $w_2.l$ by $w_2.l.r$, $w_2.r$ by $w_2.r.l$
    \end{itemize}
    We implement zooming-in by the query $\mathsf{Zoom\text{-}in} ={}$
    \[
    \begin{array}[t]{@{}l}
      \project{\substack{\text{succ},\,
      w1/\cond{(w1.l=w2.l)}{w1.r}{%
      \cond{(w1.r=w2.r)}{w1.l}{%
      \cond{((w1.l.l=w2.l.l) \land
      (w1.r.r=w2.r.r))}{\{l/w1.l.r,r/w1.r.l\}}{\nullvalue}}}\\
      \quad w2/\cond{(w1.l=w2.l)}{w2.r}{%
      \cond{(w1.r=w2.r)}{w2.l}{%
      \cond{((w1.l.l=w2.l.l) \land
      (w1.r.r=w2.r.r))}{\{l/w2.l.r,r/w2.r.l\}}{\nullvalue}}}}}
      \pipeline{} \\
      \match{\neg(w1=\nullvalue)}
    \end{array}
    \]

    After finding the two positions where the tapes differ, we check that the
    head is over one of these positions.
    \[\mathsf{Head} = \match{\bigvee_{s\in\Sigma} ((w1.l=\bar{s}) \lor (w1.r=\bar{s}))}\]

    Then, we check that the difference is according to the transition function
    $\delta$. Let criterion $\varphi_\delta$ be the disjunction of the
    following formulas $\varphi_{q,s,q',z,l}$, for each instruction
    $\delta(q,s,i) = (q',z,l)$:
    \[
    \begin{array}{r@{}l}
      \varphi_{q,s,q',z,0} ={}
      &
        \begin{array}[t]{@{}l}
          (\text{succ}.c1.state=q) \land (\text{succ}.c2.state=q') \land{}\\
          ((w1.l=\bar{s}) \land (w2.l=\bar{z}) \land
          \bigvee_{b \in \Sigma} ((w1.r=b) \land (w2.r=b)) \lor{}\\
          ~(w1.r=\bar{s}) \land (w2.r=\bar{z}) \land
          \bigvee_{b \in \Sigma} ((w1.l=b) \land (w2.l=b)))\\
        \end{array}
      \\
      \varphi_{q,s,q',z,+1} ={}
      &
        \begin{array}[t]{@{}l}
          (\text{succ}.c1.state=q) \land (\text{succ}.c2.state=q') \land{}\\
          (w1.l=\bar{s}) \land (w2.l=z) \land \bigvee_{b \in \Sigma}
          ((w1.r=b) \land (w2.r=\bar{b}))
        \end{array}
      \\
      \varphi_{q,s,q',z,-1} ={}
      &
        \begin{array}[t]{@{}l}
          (\text{succ}.c1.state=q) \land (\text{succ}.c2.state=q') \land{}\\
          (w1.r=\bar{s}) \land (w2.r=z) \land \bigvee_{b \in \Sigma}
          ((w1.l=b) \land (w2.l=\bar{b}))
        \end{array}
      \\
    \end{array}
    \]
    Finally, the query $\mathsf{Succ}$ that computes pairs of successor
    configurations is:
    \[\mathsf{Succ}=\mathsf{Prepare\text{-}succ}\pipeline %
    \underbrace{\mathsf{Zoom\text{-}in}\pipeline\cdots\pipeline
      \mathsf{Zoom\text{-}in}}_{p_1(n)-1 \text{ times}}%
    \pipeline~ \mathsf{Head} \pipeline \match{\varphi_\delta}
    \pipeline{} \project{\text{succ}}
    \]

  \item To encode alternations, we first need to compute computation paths of
    length up to $2^{p_1(n)}$ that we represent by pairs $(c_1,c_2)$: $c_2$ is
    reachable from $c_1$ in at most $2^{p_1(n)}$ steps, moreover if the state
    of $c_1$ is existential, then each of the intermediate configurations
    before reaching $c_2$ must be existential, and likewise if the state of
    $c_1$ is universal. We implement ``at most'' by means of the ``stay
    transitions'' $(c,c)$ added to $\mathsf{Succ}$.  We compute these
    computation paths iteratively:
    \[
    \begin{array}{l}
      \mathsf{CP}_0 = \mathsf{Succ}\\
      \mathsf{CP}_{i+1} =
      \begin{array}[t]{@{}l}
        \mathsf{CP}_i \pipeline \group{}{s1/\text{succ},\, s2/\text{succ}}
        \pipeline \unwind{s1} \pipeline \unwind{s2} \pipeline%
        \match{s1.c2=s2.c1} \pipeline{}\\ %
        \match{(s1.c1.\text{state}\in Q_\exists) \leftrightarrow (s2.c1.\text{state}\in Q_\exists)} \pipeline%
        \project{\text{succ}/\{c1/s1.c1,\, c2/s2.c2\}}
      \end{array}
    \end{array}
    \]
    where $(p\in A)$, for a set $A$, is a shortcut for $\bigvee_{a\in A}
    (p=a)$, and $\varphi_1 \leftrightarrow \varphi_2$ is a shortcut for $(\neg
    \varphi_1 \lor \varphi_2) \land (\neg \varphi_2 \lor \varphi_1)$.

    We can now compute the sets $A_i$ of configurations that lead to an
    accepting state in $i$ alternations:
    \[
    \begin{array}{l}
      A_1 =
      \begin{array}[t]{@{}l}
      \mathsf{CP}_{p_1(n)} \pipeline%
      \group{}{\text{cp}/\text{succ}} \pipeline %
      \project{\text{cp},\, s/\text{cp}} \pipeline \unwind{s} \pipeline%
      \match{(s.c2.\text{state} \in F) \land (s.c1.\text{state} \in Q_\exists)} \pipeline{}\\%
      \project{\text{cp},\, a/s.c1} \pipeline%
      \group{\text{cp}}{a} \pipeline \project{\text{cp}/\id.\text{cp},\, a}
      \end{array}
      \\
      A_{i+1} =
      \begin{array}[t]{@{}l}
        A_i \pipeline %
      \project{\text{cp},\, a,\, s/\text{cp}} \pipeline \unwind{s} \pipeline
      \unwind{a} \pipeline%
      \project{\text{cp},\, s,\, \text{inAi}/(s.c2=a)} \pipeline%
      \group{\text{cp},\, s}{\text{inAi}} \pipeline%
      \match{\text{inAi}=[\falsevalue]} \pipeline{}\\%
      \project{\text{cp}/\id.\text{cp},\, s/\id.s} \pipeline%
      \match{(s.c1.\text{state}\in Q_\exists) \leftrightarrow (s.c1.\text{state}\in Q_\exists)}\pipeline%
      \project{\text{cp},\, a/s.c1} \pipeline%
      \group{\text{cp}}{a} \pipeline \project{\text{cp}/\id.\text{cp},\, a}
      \end{array}
    \end{array}
    \]

    \item Finally, we check that the initial computation is in $A_{p_2(n)}$.
    The initial configuration has a tape, where the input string $w$ of
    length $n$ is padded with $2^{p_1(n)}-n$ \#-symbols. Let $v_w$ be the
    nested value of depth $\lceil \log_2 n \rceil$ representing $w$ padded with
    $2^{\lceil \log_2 n \rceil} -n$ \#-symbols (it can be computed in
    \LOGSPACE). Then the initial configuration can be computed, and checked
    whether in $A_{p_2(n)}$ as follows:
    \[
    \begin{array}{l}
      C_0\mathsf{in}A_{p_2(n)} =
      \begin{array}[t]{@{}l}
        A_{p_2(n)} \pipeline \project{a,\, \text{tape}.l/v_w,\, \text{tape}.r/\text{"\#"}} \pipeline{}\\ %
        \underbrace{%
        \project{a,\, \text{tape}.l,\, \text{tape}.r/\{l/\text{tape}.r,\, r/\text{tape}.r\}}\pipeline\cdots\pipeline\project{a,\, \text{tape}.l,\, \text{tape}.r/\{l/\text{tape}.r,\, r/\text{tape}.r\}}%
        }_{\lceil\log_2n\rceil \text{ times}} \pipeline{} \\%
        \underbrace{%
        \project{a,\, \text{tape}.l/\text{tape},\, \text{tape}.r/\{l/\text{tape}.r ,\, r/\text{tape}.r\}} %
        \pipeline \cdots \pipeline%
        \project{a,\, \text{tape}.l/\text{tape},\, \text{tape}.r/\{l/\text{tape}.r ,\, r/\text{tape}.r\}} %
        }_{p_1(n)-\lceil\log_2n\rceil-1 \text{ times}} \pipeline{}\\%
        \project{a,\, c0.\text{tape}/\text{tape},\,
        c0.\text{state}/\text{"q0"}}\pipeline%
        \unwind{a} \pipeline \match{a=c0}
      \end{array}
    \end{array}
  \]
  (where in project $\text{tape}.r/\{l/\text{tape}.r ,\, r/\text{tape}.r\}$
  can be seen as a shortcut for
    $\text{tape}.r.l/\text{tape}.r ,\, \text{tape}.r.r/\text{tape}.r$. In fact,
    it is possible to write such a value definition in \mongodb. We will use
    this syntax for brevity also in what follows.)
    The part of the query that computes the initial configuration takes $v_w$,
    pads it with \#-symbols so as to have \valuefont{tape} having the value of
    depth $\lceil \log_2 n \rceil + 1$ where \valuefont{tape.r} consists
    entirely of \#'s. Then in the second line it increases the depth of the
    \valuefont{tape} value to $p_1(n)$ by iteratively assigning the previous
    value of \valuefont{tape} to \valuefont{tape.l} and duplicating the value
    of \valuefont{tape.r} to \valuefont{tape.r.l} and \valuefont{tape.r.r}. See
    \cite{Koch06} for more details.
  \end{itemize}
    Thus we obtain that $\{\tree(\lobject \id:1\robject)\} \pipeline
    C_0\mathsf{in}A_{p_2(n)}$ is non-empty iff $M$ accepts $w$.
\end{proof}

\begin{lemma}
  \mupgl is in \TAexppoly in combined complexity.
\end{lemma}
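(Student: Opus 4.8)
The plan is to build an alternating Turing machine that, given an \mupgl query $\q = C \pipeline s_1 \pipeline \cdots \pipeline s_n$ and an instance $D$, decides whether $\ansmongo(\q,D)\neq\emptyset$ in exponential time using only $O(|\q|)$ alternations, following the strategy sketched in the main text and adapting the technique of \cite{Koch06}. The key point is that the intermediate forests $F_i = F_{i-1}\pipeline s_i$ are never materialised (they can be doubly-exponentially large, as the \mug example above shows); instead the machine sweeps the pipeline \emph{backwards}. At a generic point it maintains a small collection of \emph{local constraints}, each being a Boolean combination of atoms of the form $\exists p$, $(p=v)$, $(p=p')$, together with a tag reading either ``some tree of the current forest satisfies this'' or ``every tree of the current forest satisfies this''; all values $v$ appearing in atoms are sub-values of the query text, hence of polynomial size. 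The backward sweep bottoms out at $F_0 = D.C$ and at the external collections $D.C'$ used by lookup stages, which are of polynomial size, so the accumulated constraints can there be discharged in deterministic polynomial time, using tree-isomorphism \cite{Lind92} for atoms $(p=p')$, value comparison for $(p=v)$, path evaluation for $\exists p$, and a Boolean-formula evaluation for the propositional structure.

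The core of the proof is the case analysis describing, for each stage $s_i$, how a tagged local constraint on $F_i$ is rewritten into tagged local constraints on $F_{i-1}$, so as to mirror exactly the clauses of Figure~\ref{fig:mquery-semantics1}. \emph{Match.} For $s_i=\match{\varphi}$ one conjoins $\varphi$ (in negation normal form) to an ``exists''-tagged constraint and uses $\neg\varphi$ when relativising an ``every''-tagged one; no quantifier is introduced. \emph{Unwind.} For $s_i=\unwind{p}$ (or $\unwind[+]{p}$) one replaces, in every atom mentioning $p$, the prefix $p$ by $p.j$ for an array index $j$ ranging existentially (in an ``exists'' context) or universally (in an ``every'' context) over the valid indices of that first-level array; the extra ``keep $t$'' branch of $\unwind[+]{p}$ is handled likewise; again no alternation. \emph{Project.} For $s_i=\project[\mathit{ni}]{P}$ one substitutes each projected path by its value definition, turning $(q=v)$ with $q$ defined by $d$ into $d=v$, which is decomposed structurally; conditional value definitions are resolved by an existential branch (in an ``exists'' context) or written out as a Boolean combination (in an ``every'' context); paths absent from the output become $\neg\exists p$; the \id column is treated according to $\project{P}$ versus $\project[\noid]{P}$; again no alternation. \emph{Lookup.} For $s_i=\lookup{p_1=C'.p_2}{p}$ every tree of $F_{i-1}$ yields exactly one tree of $F_i$, with the lookup array at $p$ equal to $\colltoarray(D.C'\pipeline\match{p_2=v_1},\epsilon)$ and $v_1=\mathsf{value}(\subtree(t,p_1))$; atoms on paths under $p$ asserting the existence of a matching element become existential statements over $D.C'$ constrained by $p_2=v_1$, whereas ``$p=[]$'' or universal statements on the members of $p$ become \emph{universal} statements over $D.C'$ --- here a quantifier may flip; the remaining atoms become constraints on $t$ that must hold together with $p_1=v_1$. \emph{Group.} Finally, for $s_i=\group{G}{A}$ a tree of $F_i$ corresponds to a group key value, its \id encodes that key, and each aggregate array $a_j$ is the array of the $b_j$-values over all trees of $F_{i-1}$ sharing that key. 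Because the key may itself be huge, the machine does not guess it: it existentially picks a \emph{witness tree} $t^*\in F_{i-1}$ and lets the key be $\mathsf{value}(\subtree(t^*,y))$ implicitly; atoms on \id become key-equality constraints against $t^*$; atoms forcing $a_j\neq[]$ (or a member of $a_j$ with some property) become existential statements ``some tree of $F_{i-1}$ has the same key as $t^*$ and $\exists b_j$ (resp.\ and the property)''; atoms forcing $a_j=[]$ (or a universal property of the members of $a_j$) become \emph{universal} statements ``every tree of $F_{i-1}$ whose key equals that of $t^*$ satisfies $\dots$'' --- this is the source of the alternation. The key-equality ``$y=$ key of $t^*$'' is carried symbolically and discharged only at $F_0$. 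The special groups (\id${}=\nullvalue$ when $G$ is empty, \id${}=\lobject\robject$ when all $y_i$ are missing) are treated by their own, quantifier-bounded reductions, mirroring the corresponding lines of Figure~\ref{fig:mquery-semantics1}.

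Correctness of the reduction follows by induction on $n-i$, matching each clause above to the defining equation of the corresponding stage. For the resource bounds I would argue as follows. Each backward step enlarges the total size of the maintained constraints by at most a polynomial factor in $|\q|$ (array decompositions and the Boolean rewriting of conditionals; conditionals in ``exists'' contexts are handled by branching, avoiding even that growth), so after $n$ steps the constraints have size at most $2^{\mathrm{poly}(|\q|)}$, and the final deterministic check against $D$ runs in time exponential in $|\q|$ and polynomial in $|D|$; the existential guessing of array indices and of the syntactic shape of witnesses fits within this exponential bound. Moreover, the only stages that can flip the quantifier tag, hence add an alternation, are group and lookup, and each contributes $O(1)$ alternations on any root-to-leaf path of the computation tree; since a pipeline has at most $|\q|$ such stages, the number of alternations is $O(|\q|)$. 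Hence the whole procedure is an alternating exponential-time computation with polynomially many alternations, giving the \TAexppoly bound; the $\ACz$ data-complexity statement then follows, as in the main-text sketch, from the known $\ACz$ data complexity of NRA via the polynomial translation of Theorem~\ref{thm:maq-to-nra-pipeline}.

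The main obstacle is the group stage, in two respects. One must first get the reduction semantically right --- faithfully reproducing the treatment of missing keys, of \nullvalue, and of the difference between grouping by one path and by several --- while at no point inspecting more than one witness tree per group. More delicate is that group keys and aggregate arrays can be (doubly-)exponentially large, so the argument must be phrased entirely in terms of constraint-guarded existential and universal statements about trees (and tuples of trees, linked by key-equality constraints) of the \emph{previous} forest, and one has to verify that this reformulation neither blows the constraint size past single-exponential nor the alternation count past polynomial; the delicate bookkeeping is that constraints about aggregate arrays introduce quantifiers that must be ``charged'' to the group (or lookup) stage that created the array, not to the stage currently being processed.
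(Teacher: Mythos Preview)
Your proposal is correct and follows essentially the same approach as the paper: a backward sweep of the pipeline maintaining a set $\psi$ of atomic conditions together with an exists/forall tag, with a stage-by-stage case analysis and a final deterministic discharge against the explicit collections. Two points where you diverge are worth noting.

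First, lookup. The paper explicitly states that ``both branching and alternations occur only because of the group stages''; lookup does \emph{not} cost an alternation. The reason is that the external collection $D.C'$ is part of the explicit input, so whatever universal or existential check over its trees is needed can be performed deterministically at the moment the lookup stage is processed: the paper scans $D.C'$, extracts the relevant values of $p_2$, and turns them into fresh atoms $p_1=v''$ or $p_1\neq v_i$ in $\psi$. This also keeps every value occurring in $\psi$ polynomial (it comes either from the query text or from the database), and in particular avoids your symbolic $v_1=\mathsf{value}(\subtree(t,p_1))$. Your attribution of a possible quantifier flip to lookup is harmless for the final bound but unnecessary.

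Second, group. The paper does not introduce your witness-tree and symbolic key-equality machinery. It simply renames any explicit conditions $\id.g_i=v$ already present in $\psi$ to $y_i=v$, and carries these into every spawned sub-check on an aggregate $a_j$; its appendix handles the keyed case tersely (``by analogy with group by \nullvalue, where also need to take into account conditions on $\id.g_i$''). Your device of existentially fixing a witness tree $t^*$ and carrying symbolic key-equality constraints $(y=t^*.y)$ forward is a clean way to make this step precise when $\psi$ constrains several aggregates but not the key, at the cost of tracking tuples of trees (linked by cross-tree atoms $(p=p')$) rather than a single tree. Both organisations fit within the stated resource bounds.
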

\begin{proof}
  Let $\q = C \pipeline s_1 \pipeline \cdots \pipeline s_n$ be an \mupgl query,
  and $D$ a \mongodb instance.  We provide an algorithm to check that
  $\ansmongo(\q,D)$ is non-empty.

  We assume that $\q$ is of the following form:
  \begin{itemize}
  \item we consider atoms of the form $p=v$, $p\neq v$, $p=p$, $p\neq p$,
    $\exists p$, $\neg\exists p$, and assume that the criteria in match stages,
    Boolean value definitions and conditions in conditional value definitions
    are monotone Boolean expressions over such atoms (i.e., use only
    conjunction and disjunction).

  \item each project stage is of the form
    $\project[\mathit{ni}]{p_1,\dots,p_n,\, q/d}$, that is, contains at most
    one projection element defining the value of a path $q$. We can achieve it
    for an arbitrary project stage
    $\project[\mathit{ni}]{p_1,\dots,p_n,\,q_1/d_1,\dots,q_m/d_m}$ by splitting
    it into $m$ project stages:
    $\project[\mathit{ni}]{p_1,\dots,p_n,\, q_1/d_1} \pipeline
    \project{p_1,\dots,p_n,\, q_1,\, q_2/d_2} \pipeline \cdots \pipeline
    \project{p_1,\dots,p_n,\, q_1,\dots,q_{m-1},q_m/d_m}$.
  \end{itemize}

  Let $\ansmongo(\q,D) = F_n$. The algorithm is to check whether there is a
  tree in $F_n$. We do it recursively as follows. Assume that $F \pipeline s =
  F'$ and we want check whether a tree satisfying a set $\psi$ of atoms (of the
  considered form) is in $F'$. Then, the check amounts to the following:
  \begin{itemize}
  \item if $s = \match{\varphi}$, then we guess atoms $e_1,\dots,e_m$ appearing
    in $\varphi$ so that assigning them the true value makes $\varphi$ true,
    add to $\psi$ the conditions $e_1,\dots,e_n$. If the new conditions $\psi'$
    are consistent, we check whether there is a tree satisfying $\psi'$ in
    $F$. Otherwise, we report a failure.

  \item if $s = \unwind{p}$, then we replace $p$ by $p.i$ in all conditions in
    $\psi$ about $p$ and check whether there is a tree with the new conditions
    $\psi'$ in $F$.

  \item if $s = \unwind[+]{p}$, then we guess whether with or without index,
    and in the former case we replace $p$ by $p.i$ in all conditions in $\psi$
    about $p$, in the latter $\psi$ is not changed. Then we check whether there
    is a tree with the new conditions $\psi'$ in $F$.

  \item if $s = \project{p_1,\dots,p_l,\, q/d}$, we do not do anything for
    $p_i$. As for $q/d$, we remove $\exists q$ if it is in $\psi$, and
    proceed as follows:
    \begin{enumerate}
    \item if $d$ is a path $q'$, we replace each occurrence of $q$ in $\psi$ by
      $q'$.

    \item if $d$ is a constant non-array value,
      \begin{enumerate}
      \item if there is a condition of the form $q=v$ in $\psi$, we remove it
        from $\psi$, check whether $v=d$, and if not, we report failure.
      \item if there is a condition of the form $q\neq v$ in $\psi$, we remove it
        from $\psi$, check whether $v\neq d$, and if not, we report failure.

      \item if there is a condition of the form $p'=v'$ in $\psi$, for a prefix
        $p'$ of $q$, we extract the value $v$ for $p$ (it will be the value of
        the subtree in $v$ reachable by path $q'$ such that $p'.q'=q$) if it is
        possible and proceed as in (a), otherwise we report a failure.

      \item if there is a condition of the form $q.p'=v'$ or $q.p'\neq v'$ in
        $\psi$, we extract the value definition $d'$ from $d$ reachable by path
        $p'$ if it is possible and proceed as in the case $q.p'/d'$, otherwise
        we report a failure.

      \item if there is a condition of the form $q.i=v'$ in $\psi$, we report a
        failure.
      \end{enumerate}

      \item if $d$ is a Boolean value definition
      \begin{enumerate}
      \item if there is a condition of the form $q=v$ (resp., $q\neq v$) in
        $\psi$, we remove it from $\psi$. Then we check whether $v$ is
        \truevalue or \falsevalue, if not, we report failure. Otherwise, we
        guess atoms $e_1,\dots,e_n$ appearing in $d$ so that $d$ evaluates to
        $v$ (resp., to the negation of $v$) under assigning the atoms $e_i$ the
        true value, and add to $\psi$ the conditions $e_1,\dots,e_n$.

      \item[(c)] analogous to 2.(c)

      \item[(d)] analogous to 2.(d)

      \item[(e)] analogous to 2.(e)
      \end{enumerate}

      \item if $d=[d_1,\dots,d_k]$,
      \begin{enumerate}
      \item if there is a condition of the form $q=v$ in $\psi$, we remove it
        from $\psi$, check whether $v$ is of the form $[v_1,\dots,v_k]$, if
        not, we report failure. Otherwise we guess the pairs $(d_i,v_j)$, and
        for each pair $(d_i,v_j)$ we break it down and proceed similarly the
        case as if we had $q'/d_i$ and a condition $q'=v_j$ in $\psi$.

      \item if there is a condition of the form $q\neq v$ in $\psi$, we remove it
        from $\psi$, check whether $v\neq d$, and if not, we report failure.

      \item if there is a condition of the form $p'=v'$ in $\psi$, for a prefix
        $p'$ of $q$, we extract the value $v$ for $p$ (it will be the value of
        the subtree in $v$ reachable by path $q'$ such that $p'.q'=q$) if it is
        possible and proceed as in (a), otherwise we report a failure.

      \item if there is a condition of the form $q.p'=v'$ or $q.p'\neq v'$ in
        $\psi$, we guess $d_i$, extract the value definition $d'$ from $d_i$
        reachable by path $p'$ if it is possible and proceed as in the case
        $q.p'/d'$, otherwise we report a failure.

      \item if there is a condition of the form $q.i=v'$ in $\psi$, we guess
        $d_i$ and proceed similarly to the case as if we had $q'/d_i$ and a
        condition $q'=v'$ in $\psi$.
      \end{enumerate}

    \item if $d$ is a conditional value definition $\cond{c}{d_1}{d_2}$, we
      guess atoms $e_1,\dots,e_n$ appearing in $c$ so that
      $\psi \cup \{e_1,\dots,e_n\}$ is consistent, and if $c$ evaluates to true
      under assigning the atoms $e_i$ the true value, then we consider the
      inductive case when $d$ is $d_1$, otherwise when $d$ is $d_2$. In any
      case, we add to $\psi$ the conditions $e_i$.

    \end{enumerate}
    Then we check whether the new conditions $\psi'$ are consistent. If not, we
    report failure. Otherwise we check whether there is a tree satisfying
    $\psi'$ in $F$.

  \item $s = \group{}{a_1/b_1,\dots,a_n/b_n}$. If there is a condition
    $\id\neq\nullvalue$ in $\psi$, we report failure. Otherwise, we remove all
    conditions on \id from $\psi$, and for each $a_j/b_j$ we proceed as
    follows:
    \begin{itemize}
    \item if $\psi$ contains a condition of the form $a_j=[]$, then we check
      that there is no tree satisfying $\exists b_j$ in $F$.
    \item if $\psi$ contains a condition of the form $a_j\neq[]$, then we check
      that there is a tree satisfying $\exists b_j$ in $F$.
    \item if $\psi$ contains a condition of the form $a_j=[v_1,\dots,v_k]$,
      $k>0$, then for each $i=1,\dots,k$, we check whether each tree in $F$
      satisfies $b_j\in\{v_1,\dots,v_k\}$.
    \item if $\psi$ contains a condition of the form $a_j\neq[v_1,\dots,v_k]$,
      $k>0$, then we guess either ``subset'' or ``superset'', in the former
      case we guess a subset $[u_1,\dots,u_m]$ of $[v_1,\dots,v_k]$ and check
      whether each tree in $F$ satisfies $b_j \in\{u_1,\dots,u_m\}$ if $m>0$,
      or whether each tree in $F$ satisfies $\neg\exists b_j$ if $m=0$; in the latter
      case we check whether there is a tree satisfying
      $\{(b_j\neq v_1),\dots,(b_j\neq v_k)\}$ in $F$.
    \item if $\psi$ contains a condition of the form $a_j=v$ for a non-array
      value, then we report failure.
    \item if $\psi$ contains a condition of the form $a_j.i=v$, then we replace
      it by $b_j=v$ and check whether there is a tree with the new conditions
      $\psi'$ in $F$.
    \item if $\psi$ contains a condition $\exists a_j$, then it is satisfied
      and can be removed. We check whether there is a tree in $F$.
    \end{itemize}

  \item $s = \group{g_1/y_1,\dots,g_m/y_m}{a_1/b_1,\dots,a_n/b_n}$. We replace
    all conditions of the form $\id.g_i=v$ in $\psi$ by $y_i=v$. For each
    $a_j/b_j$ by analogy with group by \nullvalue, where also need to take into
    account conditions on $\id.g_i$.

  \item $s = \lookup{p_1 = C_2.p_2}{p}$.
    \begin{itemize}
    \item if there is a condition of the form $p=v$ in $\psi$, we remove it
      from $\psi$ and
      \begin{itemize}
      \item if $v$ is not an array we report a failure.
      \item if $v=[]$
        \begin{itemize}
        \item    if there is a condition of the form $p_1=v'$ in $\psi$, we
          check whether there is a tree with $p_2=v'$ in $D.C_2$. If yes, we
          report failure. Otherwise, we check
          whether there is a tree with the new conditions $\psi'$ in $F$.
        \item otherwise, let $v_1,\dots,v_k$ be all values of $p_2$ in
          $D.C_2$. We add conditions $p_1\neq v_i$, for $i=1,\dots,k$, to
          $\psi$, and check whether there is a tree satisfying the new
          conditions $\psi'$ in $F$.
        \end{itemize}
      \item otherwise, $v=[v_1,\dots,v_k]$, $k>0$, and we check whether there
        are trees $\tree(v_1),\dots,\tree(v_k)$ in $D.C_2$. If not, we report
        failure. If yes, we check that the trees agree on the value $v''$ of
        $p_2$. If yes,
        \begin{itemize}
        \item    if there is a condition of the form $p_1=v'$ in $\psi$: if
          $v'\neq v''$, we report failure, otherwise we check whether there is
          a tree with the new conditions $\psi'$ in $F$.
        \item otherwise we add a condition $p_1=v''$ to $\psi$ and check
          whether there is a tree with the new conditions $\psi'$ in $F$.
        \end{itemize}
      \end{itemize}

    \item if $\psi$ contains $\exists p$, we remove it and check whether there
      is a tree satisfying the new conditions $\psi'$ in $F$.
    \end{itemize}
  \end{itemize}
  Once we reach the first stage, then we directly check whether there is a tree
  in $D.C$ satisfying the conditions, or whether all trees in $D.C$ satisfy the
  conditions.

  By analysing how we deal with various stages, we can see that both branching
  and alternations occur only because of the group stages. %
  The overal algorithm works in alternating exponential time with a polynomial
  (actually, linear) number of alternations: the ``depth'' of the checks is
  given by the number of stages, the branching and the number of alternations
  are bounded by the size of $\q$.
\end{proof}

\begin{corollarynum}{\ref{cor:NRA-complexity}}
  NRA is \TAexppoly-complete in combined complexity.
\end{corollarynum}
\begin{proof}
For the lower-bound, see~\cite{Koch06}.  The upper bound follows from
Theorems~\ref{thm:nra-to-mupgl-correct} and~\ref{lem:mupg-nexptime-complete}.
\end{proof}

\begin{lemmanum}{\ref{lem:mq-logspace-complete}}
  Boolean query evaluation for \mq queries is \LOGSPACE-complete in combined
  complexity.
\end{lemmanum}
\begin{proof}
  First, we prove the upper bound.
  Let $D$ be a \mongodb database, and $\q$ an \mq query of the form $C
  \pipeline \match{\varphi}$, where $\varphi$ is a criterion. We can view
  $\varphi$ as a Boolean formula constructed using the connectors $\land$,
  $\lor$ and $\neg$ starting from the atoms of the form $(p~\mathbf{op}~v)$ and
  $\exists p$, where $p$ is a path, $v$ a literal value, and $\mathbf{op}$ is a
  comparison operator.
  Given a tree $t$ and an atom $\alpha$ of the above form, we can check in
  \LOGSPACE whether $t \models \alpha$: for each node $x$ in $t$, we can check
  in \LOGSPACE if $\mathsf{path}(x,t) = p$ and we can check in \LOGSPACE if
  $\lnode(x) = v$.

  Now, we define a \LOGSPACE reduction from the problem of whether
  $\ansmongo(\q,D) \neq \emptyset$ to the problem of determining the truth
  value of a variable-free Boolean formula, known to be A\LOGTIME-complete
  \cite{Buss87}. We construct a Boolean formula $\psi$ as the disjunction of
  $\varphi_t$ for each $t \in D.C$, where $\varphi_t$ is a copy of $\varphi$,
  where each atom $\alpha$ is substituted with $1$ if $t \models \alpha$ and
  with $0$, otherwise. Then $\ansmongo(\q,D) \neq \emptyset$ iff the value of
  $\psi$ is true.

  \smallskip%
  We show the lower bound by \NCone reduction from the directed forest
  accessibility (DFA) problem known to be complete for \LOGSPACE under \NCone
  reducibility \cite{CoMc87}. The DFA problem is, given an acyclic directed
  graph $G$ of outdegree zero or one, nodes $u$ and $v$, to decide whether
  there is a directed path from $u$ to $v$.

  Let $G = (V, T)$, $u,v \in V$ such that $G$ has precisely two weakly
  connected components, $u$ has indegree 0 and $v$ has outdegree~0: the lower
  bound still holds in this case. Let $v'$ be the other vertex in $G$ with
  outdegree 0. We construct a tree $t=(N,E,\lnode,\ledge)$ and a path $p$ such
  that $t \models (\exists p)$ iff there is a directed path from $u$ to $v$ in
  $G$. We add a fresh node $r$ that will be the root of the tree with two
  children $v$ and $v'$, and a fresh node $l$ that will be the only child of
  $u$, also we invert all edges in $G$: $N = V \cup \{r,l\}$, $E = T^- \cup
  \{(r,v), (r,v'), (u,l)\}$.  Then we set $\ledge(r,v)=a$, $\ledge(r,v')=c$,
  $\ledge(u,l)=b$, and the rest of the edges is labeled by index 0. The node
  labels are set as $\lnode(r) = \objectlabel$, $\lnode(u) = \objectlabel$ and
  the rest of the nodes are labeled with $\arraylabel$.

  Now, the obtained tree $t$ is not a valid tree according to our definition of
  a tree, as the children of array nodes are not labeled by distinct
  indexes. However, by inspecting the semantics of $\eval{p}$, we see that $t
  \models (\exists p)$ iff $t' \models (\exists p)$, where $t'$ is the version
  of $t$ with all distinct indexes.
  Thus, we obtain that $t \models (\exists a.b)$ iff there is a directed path
  from $u$ to $v$ in~$G$.
\end{proof}

The project operator allows one to create new values by duplicating the
existing ones; hence, it can make trees grow exponentially in the size of the
query, and similarly with the group operator.  Nevertheless, we can still check
whether the answer to a query is non-empty in polynomial time by reusing the
``old'' tree nodes when it is necessary to duplicate values.
\begin{lemmanum}{\ref{lem:mpg-ptime-complete}}
  Query evaluation for $\mpgl$ queries is \PTIME-complete.
\end{lemmanum}

\begin{lemma}\label{lem:mp-ptime-hard}
  The query emptiness problem for $\mp$ queries is \PTIME-hard in
  combined complexity.
\end{lemma}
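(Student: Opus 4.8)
The plan is to reduce from the \emph{Monotone Circuit Value Problem} (MCVP), which is \PTIME-complete under \LOGSPACE reductions. An instance is a sequence of gates $g_1,\dots,g_m$ listed in topological order, where each $g_i$ is a constant in $\{0,1\}$, or $g_i = g_j \wedge g_k$, or $g_i = g_j \vee g_k$ with $j,k < i$; the question is whether the output gate $g_m$ evaluates to $1$. Given such an instance, I would take the fixed single-document forest $F = \{\tree(\object{\id{:}\,1})\}$ stored in a collection $C$ (so $D.C = F$), and the \mp query
\[
  \q \;=\; C \pipeline s_1 \pipeline \cdots \pipeline s_m \pipeline \match{g_m = \truevalue},
\]
where the $g_i$ are read as pairwise distinct top-level keys, fresh with respect to $\id$ (so the no-prefix restriction on projections is trivially satisfied and $\id$ is kept automatically), and $s_i$ is the project stage
\[
  s_i \;=\;
  \begin{cases}
    \project{g_1,\dots,g_{i-1},\; g_i/\truevalue}, & \text{if } g_i\equiv 1,\\
    \project{g_1,\dots,g_{i-1},\; g_i/\falsevalue}, & \text{if } g_i\equiv 0,\\
    \project{g_1,\dots,g_{i-1},\; g_i/(g_j\wedge g_k)}, & \text{if } g_i = g_j\wedge g_k,\\
    \project{g_1,\dots,g_{i-1},\; g_i/(g_j\vee g_k)}, & \text{if } g_i = g_j\vee g_k.
  \end{cases}
\]
(Non-monotone circuits could be handled identically using $\neg d$, but monotone circuits already suffice.)

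Correctness would follow by induction on $i$: the unique tree $t_i$ produced by $s_1\pipeline\cdots\pipeline s_i$ carries, for every $\ell\le i$, the path $g_\ell$ with value $\truevalue$ if gate $g_\ell$ evaluates to $1$ in the circuit, and with value $\falsevalue$ otherwise. For input gates this is immediate from the project semantics, which attaches $\tree(\truevalue)$ or $\tree(\falsevalue)$ under $g_i$. For an AND gate, the project semantics puts under $g_i$ the truth value of $t_{i-1} \models (g_j \wedge g_k)$; since a path $p$ used as a Boolean value definition satisfies $t\models p$ exactly when $p$ exists and its value is not in $\{\nullvalue,\falsevalue,0\}$, and by the induction hypothesis the value of $g_\ell$ in $t_{i-1}$ is either $\truevalue$ or $\falsevalue$, we get $t_{i-1}\models g_\ell$ iff $g_\ell$ evaluates to $1$; hence $t_{i-1}\models(g_j\wedge g_k)$ iff $g_j$ and $g_k$ both evaluate to $1$, i.e.\ iff $g_i$ does. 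The OR case is symmetric. Consequently $t_m$ has $g_m = \truevalue$ iff the circuit outputs $1$, so the final stage $\match{g_m = \truevalue}$ makes $\ansmongo(\q,D)$ non-empty iff $g_m$ evaluates to $1$.

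Finally, $\q$ has size $O(m^2)$ (each stage relists the previously computed keys), each key name has length $O(\log m)$, and $\q$ is clearly producible from the circuit in \LOGSPACE. This establishes that query emptiness for \mp is \PTIME-hard; in fact, since the data instance is fixed and only the query depends on the input, the argument already yields \PTIME-hardness in query complexity, which a fortiori gives the stated bound in combined complexity. The only delicate point is the precise behaviour of paths occurring inside Boolean value definitions of project — in particular that the $\truevalue$/$\falsevalue$ leaves created by earlier project stages are interpreted by later stages exactly as the Boolean truth values we intend — but this is a direct reading of the project semantics in Figure~\ref{fig:mquery-semantics1}, and the induction above is all that is needed to make it rigorous.
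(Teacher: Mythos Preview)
Your proposal is correct and follows essentially the same approach as the paper: a reduction from (monotone) Circuit Value, simulating the gates one by one via project stages that carry forward all previously computed gate values and add the new one via a Boolean value definition, followed by a final match on the output gate. Your choice of $\truevalue$/$\falsevalue$ for the input constants (rather than the paper's $0$/$1$) is a harmless variant---both are treated identically by the ``$t\models p$'' clause for paths in Boolean value definitions---and your explicit observation that the data instance is fixed, hence the bound already holds in query complexity, matches what the paper also uses.
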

\begin{proof}
  The proof by a straightforward reduction from the Circuit Value problem,
  known to be \PTIME-complete. 
  For completeness, we provide the reduction. Given a monotone Boolean circuit
  $\C$ consisting of a finite set of assignments to Boolean variables
  $X_1,\ldots,X_n$ of the form $X_i=0$, $X_i=1$, $X_i = X_j \land X_k$, $j,k <
  i$, or $X_i = X_j \lor X_k$, $j,k < i$, where each $X_i$ appears on the
  left-hand side of exactly one assignment, check whether the value $X_n$ is
  $1$ in $\C$.

  We construct a query $\q$ such that on each non-empty forest $F$, $F
  \pipeline \q$ is non-empty iff the value $X_n$ is $1$ in $\C$. We set $\q =
  s_1 \pipeline \cdots \pipeline s_n \pipeline \match{\text{x}n=1}$, where for
  $i\in\{1,\dots,n\}$, $s_i =
  \project{\text{x}1,\dots,\text{x}i-1,~\text{x}i/\textit{ass}_i}$, where
  $\textit{ass}_i = v$, if $X_i = v$ for $v \in \{0,1\}$, $\textit{ass}_i =
  \text{x}j \land \text{x}k$, if $X_i = X_j \land X_k$, and $\textit{ass}_i =
  \text{x}j \lor \text{x}k$, if $X_i = X_j \lor X_k$.
\end{proof}

\begin{lemma}
  The query emptiness problem for $\mpgl$ queries is in \PTIME in combined
  complexity.
\end{lemma}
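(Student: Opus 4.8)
The plan is to evaluate an \mpgl query stage by stage while keeping every intermediate forest in a \emph{shared} representation, so that subtrees which would be duplicated by \emph{project} and \emph{group} are stored only once. Concretely, I would represent a forest $F$ by a single directed acyclic graph $G$ carrying the same kind of node- and edge-labels as a document tree, with $|F|$ distinguished root nodes (one per tree), where a node may be reached along identically-labelled edges from several parents; the value tree obtained by unfolding a root can be exponentially larger than $G$. The size measure on $G$ is its number of nodes and edges, and I maintain the invariant that after each stage the DAG forest has size polynomial in $|\q|+|D|$ and has at most $|D.C|$ roots. The root bound is immediate: of the four stages of \mpgl, \emph{match} and \emph{group} only decrease the number of trees, while \emph{project} and \emph{lookup} keep it unchanged (there is no unwind to make it grow), so the root count stays bounded by the input.

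Two auxiliary facts make the stages computable in polynomial time on this representation. First, path evaluation on a DAG is cheap: for a DAG $G$ of size $m$ and a path $p$ occurring in $\q$ (all such paths have length $O(|\q|)$), the set of nodes reached by $p$, or by $p.i$ for $i$ an index, under the array-skipping rule of the path semantics, is computable in polynomial time and has at most $m$ elements, since each of the $O(|\q|)$ path steps is a reachability step in $G$ (close the current node set under array-index edges, then follow the key- or index-labelled edges), and node out-degrees stay polynomially bounded -- arrays arise only from the input, from constant array value definitions in \emph{project}, or from $\colltoarray$ in \emph{group}, each bounded by $|D|$ or $|\q|$. Second, equality of the values represented by two nodes of $G$ (and of a node of $G$ with a value $v$ appearing in $\q$) is decidable in polynomial time: process $G$ in topological order and give each node a canonical identifier formed from its node label together with the list of (edge-label, child-identifier) pairs, sorted by key for object nodes and kept in index order for array nodes; two nodes get the same identifier iff they unfold to the same JSON value, and running the procedure on the disjoint union of $G$ and $\tree(v)$ decides $\mathsf{value}(x)=v$. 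This handles \mquery's deep comparisons between a value and the (possibly exponential) subtree hanging from a path without ever unfolding. Hence, for a tree $t$ given by a root of $G$ and a match criterion or Boolean value definition $\varphi$, the test $t\models\varphi$ reduces to evaluating the Boolean combination $\varphi$ over atoms ``some node reached via $p$ (or $p.i$) represents $v$'' and ``$p$ is non-missing'', each polynomial; so $t\models\varphi$ is decidable in polynomial time.

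It then remains to check that each stage maps a DAG forest to a DAG forest with only a polynomial increase in size. \emph{Match} merely keeps the roots $t$ with $t\models\varphi$, so it shrinks $G$. \emph{Project} builds, for each input root, one new root by $\oplus$-combining the restrictions $\project{p}(t)$ for the kept paths $p$ with small fresh gadgets $\attach(q,\tree(v_d))$ realising the defined paths $q/d$; all retained subtrees and all per-tree copies of $\tree(v)$ are \emph{shared}, conditional value definitions are resolved per tree by the polynomial $t\models c$ test above and contribute only the selected branch, so each root grows by an amount bounded by the textual size of $P$ while the root count is unchanged -- the usual exponential blow-up of iterated projections is absorbed by sharing. \emph{Group} computes for each input root a polynomial-size signature (which grouping paths are present and their values), partitions the roots into the at most $|F|$ classes these signatures induce, and for each class emits one root consisting of a fresh \id-gadget and, for every aggregation path $a_j/b_j$, one array node whose children are the shared subtrees $\subtree(t,b_j)$ for $t$ in that class; again polynomial, with at most $|F|$ roots. \emph{Lookup} leaves the roots of $G$ in place and attaches to each one array of shared subtrees of the external collection's DAG, selected by an equality test on $p_2$; polynomial, root count unchanged. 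Composing the $n\le|\q|$ stages, the final DAG forest has size polynomial in $|\q|+|D|$, and $\ansmongo(\q,D)\neq\emptyset$ iff it has at least one root, which is read off directly; thus \mpgl query emptiness is in \PTIME in combined complexity.

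The main obstacle is precisely the \emph{match} step (and the conditional value definitions of \emph{project}): one must ensure that the array-skipping path semantics together with \mquery's deep value/subtree equality can always be carried out on shared nodes without ever expanding them, and that the Boolean structure of criteria does not interfere with the per-tree branching of conditional value definitions. The two auxiliary facts above are exactly what tames this; the remaining work -- verifying that $\subtree$, $\attach$, $\oplus$, $t_1\setminus t_2$, and $\colltoarray$ all preserve DAG-ness and the polynomial-size invariant -- is routine but tedious.
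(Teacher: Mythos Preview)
Your proposal is correct and follows essentially the same approach as the paper: represent intermediate forests as DAGs so that the subtree duplications caused by \emph{project} and \emph{group} are absorbed by sharing, then argue that each of the four \mpgl stages maps a polynomial-size DAG forest to a polynomial-size DAG forest.

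There are two places where your write-up differs in emphasis from the paper's, both to your advantage. First, you use a single DAG with multiple distinguished roots and exploit sharing \emph{across} trees, whereas the paper keeps a set of separate per-tree DAGs; your representation is tighter but the argument is the same. Second, and more importantly, you explicitly address how to evaluate a match criterion (and a Boolean or conditional value definition inside project) on a DAG: deep equality between the possibly exponential value hanging from a path and a value $v$ from the query is decided by computing canonical identifiers bottom-up in topological order, i.e., by a bisimulation/hash-consing pass. The paper's proof simply writes ``$F' = \{t\in F\mid t\models\varphi\}$'' for match and does not spell out why $t\models\varphi$ is polynomial on a DAG; you correctly identify this as the one non-routine point and supply the missing ingredient. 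Your observation that the number of roots never increases (since unwind is absent) is also a clean way to control the additive growth at each stage, which the paper leaves implicit in its ``linear in the input and the stage'' claims.
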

\begin{proof}
  We provide a \PTIME algorithm for checking whether, given an \mpgl $\q$ (over
  collection $C$), a forest $F_0$ for $C$, and forests $G_{C'}$ for each
  external collection $C'$ used by $\q$, $F_0 \pipeline \q$ is non-empty.

  %
  %
  %
  The algorithm computes the result of each stage by representing the
  intermediate trees as DAGs in order to avoid exponential growth of trees that
  is possible due to multiple duplication of existing values. Suppose that $\q
  = s_1 \pipeline \cdots \pipeline s_m$. Then we compute $F_1, \dots, F_m$,
  where each $F_i$ is a set of DAGs, and we can obtain from $F_i$ the forest
  $F_0 \pipeline s_1 \pipeline \cdots \pipeline s_i$ by ``unravelling'' each
  DAG into a proper tree.

  We are going to consider connected DAGs with labeled nodes and edges and that
  have only one source node, that is, one node that has no incoming
  edges. Similarly to trees, a DAG is a tuple $(N,E,\lnode,\ledge)$, where $N$ is a
  set of nodes, $E$ is a successor relation, $\lnode: N \to V \cup
  \{\objectlabel,\arraylabel\}$ is a node labeling function, and $\ledge: E \to K
  \cup I$ is an edge labeling function such that
  \begin{inparaenum}[\it (i)]
  \item $(N,E)$ forms a DAG with a single node that has no incoming edges,
  \item a node labeled by a literal must be a node without outgoing edges,
  \item all outgoing edges of a node labeled by \objectlabel must be labeled by
    keys, and
  \item all outgoing edges of a node labeled by \arraylabel must be labeled by
    distinct indexes.
  \end{inparaenum}
  Clearly, a tree is a connected DAG with a single source node. We denote the
  source node of a DAG $t$ by $\treeroot(t)$. For a DAG $t$, the path type
  $\type(p,t)$, the interpretation of path $\eval{p}$, satisfaction of criteria
  $t \models \varphi$ and value definitions $t \models d$ is defined in the
  same way as for trees.

  \smallskip%
  First, we show, given a set $F$ of DAGs and a stage $s$, how to compute the
  set $F'$ of DAGs resulting from evaluating $s$ over $F$.
  \begin{itemize}
  \item Suppose $s$ is a match stage $\match{\varphi}$. Then $F' = \{t \mid t
    \in F \text{ and }t \models \varphi\}$. Clearly, $F' \subseteq F$, hence is
    linear in $F$ and $s$.

  \item Suppose $s$ is a project stage
    $\project{p_1,\dots,p_m,q_1/d_1,\dots,q_n/d_n}$. Let $t \in F$ be a DAG. We
    show how to transform it into a DAG $t'$ according to $s$.  Initially, $t'$
    contains one fresh node $r$ with $\lnode(r)=\objectlabel$. Then, for each
    $i\in\{1,\dots,n\}$, we do the following changes to $t'$. Suppose $q_i =
    k_1\cdots k_l$, we first insert into $t'$ fresh nodes $x_1,\dots,x_{l-1}$
    and edges $(x_j,x_{j+1})$ with $\ledge(r,x_1) = k_1$,
    $\ledge(x_{j-1},x_{j}) = k_j$, and $\lnode(x) = \objectlabel$ for $x \in
    \{x_1,\dots,x_{l-1}\}$.  Note that here, if $l=1$, $x_{l-1}$ refers to $r$.
    Then, by induction on the structure of $d_i$ we proceed as follows:
    \begin{enumerate}[(a)]
    \item $d_i$ is a literal value $v$: we insert a fresh node $x_l$ and an
      edge $(x_{l-1},x_{l})$ with $\ledge(x_{l-1},x_l) = k_l$ and $\lnode(x_l) = v$.

    \item $d_i$ is a path reference $p$. If $\eval{p}=\emptyset$, we remove
      from $t'$ all nodes $x_1,\dots,x_{l-1}$ (and edges) inserted previously.
      If $|\eval{p}|=1$, let $x_p \in \eval{p}$: we add to $t'$ the node $x_p$
      and its label, an edge $(x_{l-1},x_p)$ with $\ledge(x_{l-1},x_p) = k_l$, and
      copy all other nodes (hence the edges and labels) reachable from $x_p$ in
      $t$. Otherwise let $\eval{p} = \{y_1,\dots,y_m\}$: we insert into $t'$ a
      fresh node $x_l$ with $\lnode(x_l)=\arraylabel$, edges $(x_{l-1},x_l)$,
      $(x_l,y_1)$, \dots, $(x_l,y_m)$, with $\ledge(x_{l-1},x_l) = k_l$ and
      $\ledge(x_l,y_j) = j-1$, and copy all other nodes (and edges and labels)
      reachable from $y_j$ in $t$.

    \item if $d_i$ is a Boolean value definition, let $v_b$ be the Boolean
      value of $t \models d_i$. We proceed as in the case $d_i$ is a literal
      value $v_b$.

    \item if $d_i$ is a conditional value definition $\cond{d}{e_1}{e_2}$, then
      whenever $t \models d$, we proceed as in the case $d_i$ is $e_1$,
      otherwise as in the case $d_i$ is $e_2$.

    \item if $d_i$ is an array definition $[e_1,\dots,e_m]$, then we insert
      into $t'$ a fresh node $x_l$ with $\lnode(x_l) = \arraylabel$ and an edge
      $(x_{l-1},x_{l})$ with $\ledge(x_{l-1},x_l) = k_l$. Further, for each $e_j$,
      let $y_j$ be the node defined according to the structure of $e_j$ and the
      cases above (e.g., if $e_j$ is a literal value, then $y_j$ is a fresh
      node, and of $e_j$ is a path reference, it is an already existing in $t$
      node). We add an edge $(x_l,y_j)$ with $\ledge(x_l,y_j) = j-1$. Note that if
      $e_j$ is a path reference and this path does not exist in $t$, then it is
      equivalent to $e_j$ being \nullvalue.
    \end{enumerate}
    Thus, we have constructed the DAG $t'$ with a single source node. The size
    of $t'$ has grown at most linearly in the size of $t$ and $s$. In the
    resulting set $F'$, each DAG is obtained from exactly one DAG in $F$.

  \item Suppose $s$ is a group stage
    $\group{g_1/y_1,\dots,g_n/y_n}{a_1/b_1,\dots,a_m/b_m}$. If $n > 1$, let
    $F_1$ be a subset of $F$ such that
    \begin{itemize}
    \item[($\star$)] there exist indexes $i_1,\dots,i_k$, $k\leq n$, and values
      $v_{i_1},\dots,v_{i_k}$, such that for each DAG $t \in F_1$ the following
      holds: $t \models (\exists y_i)$ and $t \models (y_i=v_i)$ for $i\in
      \{i_1,\dots,i_k\}$ and $t \models \neg(\exists y_i)$ for $i \in
      \{1,\dots,n\}\setminus \{i_1,\dots,i_k\}$.
    \end{itemize}
    We show how to transform $F_1$ into a DAG $t_{F_1}$.
    Initially, $t_{F_1}$ contains two fresh nodes $r$ and $x_0$ with
    $\lnode(r)=\lnode(x_0)=\objectlabel$, and an edge $(r,x_0)$ with $\ledge(r,x_0) =
    \id$.  We now show how $t_{F_1}$ is built.

    First, for each $i\in\{i_1,\dots,i_k\}$, we proceed as follows. Fix a tree
    $t \in F_1$. Suppose $g_i = k_1\cdots k_l$, we insert into $t_{F_1}$ fresh
    nodes $x_1,\dots,x_{l-1}$ with $\lnode(x) = \objectlabel$ for $x \in
    \{x_1,\dots,x_{l-1}\}$, and edges $(x_j,x_{j+1})$ with $\ledge(x_0,x_1) =
    k_1$, $\ledge(x_{j-1},x_{j}) = k_j$. Then if $|\eval{y_i}|=1$, let $x_y \in
    \eval{y_i}$: we add to $t_{F_1}$ an edge $(x_{l-1},x_y)$ with
    $\ledge(x_{l-1},x_y) = k_l$, and copy all other nodes (hence the edges and
    labels) reachable from $x_y$ in $t$. Otherwise let $\eval{y_i} =
    \{z_1,\dots,z_h\}$: we insert into $t_{F_1}$ a fresh node $x_l$ with
    $\lnode(x_l)=\arraylabel$ and edges $(x_{l-1},x_l)$, $(x_l,z_1)$, \dots,
    $(x_l,z_h)$, with $\ledge(x_{l-1},x_l) = k_l$ and $\ledge(x_l,z_j) = j-1$, and
    copy all other nodes (hence the edges and labels) reachable from $z_j$ in
    $t$.

    Second, for each $i=[1..m]$, we proceed as follows.  We insert into
    $t_{F_1}$ a fresh node $x$ with $\lnode(x) = \arraylabel$ and an edge $(r,x)$
    with $\ledge(r,x) = a_i$. Now, for each DAG $t \in F_1$, we insert an element
    to the array rooted at $x$ as follows: if $\eval{b_i}=\emptyset$, then we
    do not insert anything into $t_{F_1}$; if $|\eval{b_i}|=1$, let
    $z\in\eval{b_i}$, we add to $t_{F_1}$ an edge $(x,z)$ with $\ledge(x,z)$ being
    the index of the new element, and copy all other nodes (hence the edges and
    labels) reachable from $z$ in $t$; otherwise let
    $\eval{b_i}=\{z_1,\dots,z_l\}$, we insert into $t_{F_1}$ a fresh node $z$
    with $\lnode(z) = \arraylabel$ and edges $(x,z),(z,z_1),\dots,(z,z_l)$ with
    $\ledge(x,z)$ being the index of the new element, $\ledge(z,z_j) = j-1$, and copy
    all other nodes (hence the edges and labels) reachable from $z_j$ in $t$.

    The resulting DAG $t_{F_1}$ has the single source node $r$, and its size is
    linear in the size of $F_1$ and $s$.  Let $F_1,\dots,F_l$ be the partition
    of $F$ into subsets satisfying ($\star$). Such a partition can be
    computed in time polynomial in $F$ and $s$: for each $t \in F$, we can
    determine its ``partition'' and then group the DAGs accordingly.  Then $F'$
    is obtained as $\{t_{F_1},\dots,t_{F_l}\}$, and its size is linear in the
    size of $F$ and $s$.

  \item Suppose $s$ is a lookup stage $\lookup{p_1=C.p_2}{p}$, and $G$ is the
    forest for $C$. Let $t \in F$ with the source node $x_0$, we show how to
    transform it into a DAG $t'$ according to $s$.  Initially $t'$ coincides
    with $t$. Suppose $p=k_1\cdots k_l$, we insert into $t'$ fresh nodes
    $x_1,\dots,x_l$ with $\lnode(x) = \objectlabel$ for $x \in
    \{x_1,\dots,x_{l-1}\}$, $\lnode(x_l) = \arraylabel$, and edges
    $(x_j,x_{j+1})$ with $\ledge(x_{j-1},x_{j}) = k_j$. Let $v$ be the value of
    $p_1$ in $t$, that is $v=\mathsf{value}(\subtree(t,p_1))$, and let $G_t$ be
    the subset of $G$ such that $\mathsf{value}(\subtree(g,p_2))=v$ for each $g
    \in G_t$. Then, for each $g \in G_t$, let $x_g$ be the root of $g$: we add
    to $t'$ an edge $(x_l,x_g)$ with $\ledge(x_l,x_g)$ being the consecutive
    index, and copy the whole tree $g$ to $t'$.

    The resulting DAG $t'$ is linear in the size of $F$, $G$ and $s$.
 \end{itemize}

 Now, we obtain that for a query $\q = s_1 \pipeline \cdots \pipeline s_m$ and
 an input forest $F_0$, each set of DAGs $F_i$, $i=[1..m]$ computed from
 $F_{i-1}$ and $s_i$ is linear in the size of $F_{i-1}$ and $s_i$, therefore
 $F_m$ is polynomial in the size of $F_0$ and $\q$. It should be clear that
 $F_0 \pipeline \q$ is non-empty iff $F_m$ is non-empty.
\end{proof}

Next, we show that adding unwind causes the loss of tractability, while project
and lookup do not add complexity.

\begin{lemmanum}{\ref{lem:mupl-np-complete}}
  Boolean query evaluation for \muq and \mupl queries is \NP-complete in
  combined complexity.
\end{lemmanum}
\begin{lemma}\label{lem:mu-np-hard}
  Boolean query evaluation for \muq is \NP-hard in combined complexity.
\end{lemma}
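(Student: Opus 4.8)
The plan is to reduce the Boolean satisfiability problem, which is \NP-complete, to Boolean query evaluation for \muq, following the construction already sketched for Theorem~\ref{thm:mu-complexity}. The idea is to feed in a single document carrying $n$ copies of the two-element array $[0,1]$, one per propositional variable, use $n$ unwind stages to enumerate all $2^n$ truth assignments as separate trees, and use one final match stage that survives exactly on the trees encoding satisfying assignments.

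Concretely, given a propositional formula $\psi$ over variables $x_1,\dots,x_n$, I would take the input forest
\[
  F = \{\, \tree(\object{\id{:}\,1,\; x_1{:}\,[0,1],\; \dots,\; x_n{:}\,[0,1]}) \,\}
\]
and the \muq query
\[
  \q \;=\; \unwind{x_1} \pipeline \cdots \pipeline \unwind{x_n} \pipeline \match{\psi'},
\]
where $\psi'$ is obtained from $\psi$ by replacing each occurrence of a variable $x_i$ with the atomic criterion $(x_i = 1)$, leaving the Boolean structure (including negations) unchanged. Here I use the plain unwind $\unwind{x_i}$ rather than the null/empty-preserving variant $\unwind[+]{x_i}$, which is harmless since each array $[0,1]$ is non-empty. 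Both $F$ and $\q$ are clearly computable from $\psi$ in \LOGSPACE, so this is a legitimate reduction.

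The correctness argument has two parts. First, by induction on $k$, the result of $F \pipeline \unwind{x_1} \pipeline \cdots \pipeline \unwind{x_k}$ is the forest $\{\, t^k_\sigma \mid \sigma \colon \{x_1,\dots,x_k\}\to\{0,1\} \,\}$, where $t^k_\sigma = \tree(\object{\id{:}\,1,\; x_1{:}\,\sigma(x_1),\; \dots,\; x_k{:}\,\sigma(x_k),\; x_{k+1}{:}\,[0,1],\; \dots,\; x_n{:}\,[0,1]})$: in the inductive step $x_{k+1}$ is still a first-level array of length two, because the earlier unwind stages only turn the sibling keys $x_1,\dots,x_k$ into literals, so by the semantics of unwind each tree splits into exactly two, one with $x_{k+1}{=}0$ and one with $x_{k+1}{=}1$, all other keys being copied unchanged. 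Hence after all $n$ unwind stages we obtain $\{\, t_\sigma \mid \sigma \colon \{x_1,\dots,x_n\}\to\{0,1\} \,\}$ with $t_\sigma = \tree(\object{\id{:}\,1,\; x_1{:}\,\sigma(x_1),\; \dots,\; x_n{:}\,\sigma(x_n)})$. Second, by the semantics of match, $t_\sigma \models (x_i = 1)$ iff $\sigma(x_i)=1$, and a routine structural induction on $\psi$ then shows $t_\sigma \models \psi'$ iff the truth assignment mapping $x_i$ to \truevalue exactly when $\sigma(x_i)=1$ satisfies $\psi$. Combining the two parts, $F \pipeline \q = \{\, t_\sigma \mid \sigma \text{ satisfies } \psi \,\}$, which is non-empty iff $\psi$ is satisfiable; this yields \NP-hardness of \muq in combined complexity.

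There is no genuine obstacle here — the construction is elementary. The only point requiring a little care is verifying the unwind semantics: one must check that each $x_{k+1}$ is indeed a \emph{first-level} array at the moment it is unwound (its only strict prefix is the empty path, whose node is labeled $\objectlabel$), so that $\unwind{x_{k+1}}$ produces two output trees per input tree rather than failing, and that the merge used in the semantics of unwind leaves the untouched keys intact.
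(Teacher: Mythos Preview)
Your proposal is correct and follows essentially the same approach as the paper: a reduction from SAT using a single document with $n$ two-element arrays, $n$ unwind stages to enumerate assignments, and a final match stage encoding the formula. The only cosmetic difference is that the paper uses arrays $[\truevalue,\falsevalue]$ and says ``$\varphi$ can be viewed as a criterion'' (implicitly replacing each $x_i$ by $(x_i=\truevalue)$), whereas you use $[0,1]$ and $(x_i=1)$; your write-up is more explicit about the correctness argument, which is fine.
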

\begin{proof}
  We prove the lower bound by reduction from the Boolean satisfiability
  problem. Let $\varphi$ be a Boolean formula over $n$ variables x1, \dots,
  xn. We fix a collection name $C$, and construct a collection $F$ for $C$ and
  an \muq query $\q$ such that $\ansmongo(\q, F)$ is non-empty iff $\varphi$ is
  satisfiable.

  $F$ contains a single document $d$ of the form
  \valuefont{\object{''x1'':\,[true,false],\,\dots,\,``xn'':\,[true,false]}},
  and $\q$ is the query: $C \pipeline \unwind{\text{x1}} \pipeline \dots
  \pipeline \unwind{\text{xn}} \pipeline \match{\varphi}$, denoted $\q_{\NP}$,
  where $\varphi$ can be viewed as a criterion.
\end{proof}
\begin{corollary}
  The query emptiness problem for \mup queries is \NP-hard in query complexity.
\end{corollary}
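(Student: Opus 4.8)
The plan is to reuse the reduction from Boolean satisfiability underlying Lemma~\ref{lem:mu-np-hard}, but to move the only input-dependent ingredient — the collection of $n$ two-element arrays — out of the database and into the query, which is exactly what a single project stage makes possible. The point is that the resulting reduction keeps the database fixed, so it yields a \emph{query}-complexity lower bound rather than merely a combined-complexity one.

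Concretely, I would fix once and for all a collection name $C$ and the input-independent collection $F=\{\tree(\object{\id{:}1})\}$. Given a Boolean formula $\varphi$ over variables $\text{x}1,\ldots,\text{x}n$, I would build the \mup query
\[
  \q_\varphi \;=\; C \pipeline \project{\text{x}1/[\truevalue,\falsevalue],\,\ldots,\,\text{x}n/[\truevalue,\falsevalue]} \pipeline \unwind{\text{x}1} \pipeline \cdots \pipeline \unwind{\text{x}n} \pipeline \match{\varphi},
\]
where $\varphi$ is read as a match criterion, each occurrence of a literal $\text{x}i$ being the atom $(\text{x}i{=}\truevalue)$ and each negated literal being $\neg(\text{x}i{=}\truevalue)$. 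The query $\q_\varphi$ uses only match, unwind, and project, it is computable from $\varphi$ in logarithmic space, and its size is polynomial in $|\varphi|$.

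For correctness I would argue from the semantics in Figure~\ref{fig:mquery-semantics1}: the single project stage turns $\tree(\object{\id{:}1})$ into the one tree carrying, besides \id, the $n$ first-level arrays $[\truevalue,\falsevalue]$; the $i$-th unwind stage replaces, in each surviving tree, the array at $\text{x}i$ by one of its two elements, hence splits every tree into two; therefore after all $n$ unwind stages the forest consists of exactly the $2^n$ trees, one per truth assignment to $\text{x}1,\ldots,\text{x}n$; and the final match retains precisely those trees whose assignment satisfies $\varphi$. Hence $\ansmongo(\q_\varphi,F)\neq\emptyset$ iff $\varphi$ is satisfiable, and since $F$ is fixed this establishes \NP-hardness of \mup in query complexity.

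I do not expect a genuine obstacle, since the combinatorial core is inherited from Lemma~\ref{lem:mu-np-hard}; the only points needing a brief check are that the ingredients used — an array value definition $[\truevalue,\falsevalue]$ inside a project, plain (non-preserving) unwind applied to a first-level array, and $\varphi$ viewed as a criterion over the paths $\text{x}i$ — are all available in the \mup fragment and behave as claimed, and that chaining the $n$ unwind stages neither collapses nor spuriously duplicates trees (it does not, as the $\text{x}i$ are distinct first-level arrays with exactly two elements each). The same argument, replacing the initial project stage by a sequence of lookups against a fixed two-element collection, gives the corresponding statement for \mul.
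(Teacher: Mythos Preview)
Your proposal is correct and follows essentially the same approach as the paper: both use a single project stage to populate $n$ two-element arrays from a fixed database, then apply the unwind/match pipeline of Lemma~\ref{lem:mu-np-hard}. The only cosmetic difference is that the paper stores one array $\valuefont{"values":[true,false]}$ in the fixed document and has project copy it $n$ times by path reference ($\text{x}i/\text{values}$), whereas you keep the document empty and have project create the arrays directly as constant value definitions ($\text{x}i/[\truevalue,\falsevalue]$); both are valid in \mup and the argument is otherwise identical.
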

\begin{proof}
  Since it is possible to use project to create copies of arrays, we can modify
  the above reduction so that $F$ contains a single document of the form
  \valuefont{\object{"values":\,[true,false]}}, and $\q = C \pipeline
  \project{\text{x}1/\text{values},~\dots,~\text{x}n/\text{values}} \pipeline
  \q_{\NP}$.
\end{proof}
\begin{corollary}
  The query emptiness problem for \mul queries is \NP-hard in query complexity.
\end{corollary}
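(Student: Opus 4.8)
The plan is to imitate the proof of the preceding corollary (\NP-hardness of \mup in query complexity), using lookup in place of project to fabricate the $n$ copies of a two-element array that are needed to enumerate all truth assignments. So I would reduce from Boolean satisfiability, keeping the database instance fixed while letting only the query grow with the input formula $\varphi$ over variables $x_1,\dots,x_n$; this is exactly what ``query complexity'' requires.

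Concretely, I would fix two collection names $C,C'$ and take the fixed instance $D$ with $D.C = \{\tree(\object{\id{:}1,\ j{:}1})\}$ and $D.C' = \{\tree(\object{\id{:}1,\ j{:}1,\ z{:}[\truevalue,\falsevalue]})\}$, where $j$ is an auxiliary join key and $z$ holds the array of the two Boolean literals. The query $\q$ would then, for $i=1,\dots,n$, execute $\lookup{j=C'.j}{y_i}\pipeline\unwind{y_i}$: since $C'$ contains a single document matching the join condition $j{=}1$, after the unwind the path $y_i$ holds a copy of that document, and in particular $y_i.z$ holds a copy of $[\truevalue,\falsevalue]$. After these $2n$ stages, $\q$ continues with $\unwind{y_1.z}\pipeline\cdots\pipeline\unwind{y_n.z}$, which yields the $2^n$ trees encoding all assignments, and finishes with $\match{\varphi'}$, where $\varphi'$ is $\varphi$ viewed as a criterion over the paths $y_1.z,\dots,y_n.z$ (each positive literal $x_i$ becoming $(y_i.z=\truevalue)$ and each connective $\land,\lor,\lnot$ the corresponding criterion operator), just as in the proof of Lemma~\ref{lem:mu-np-hard}. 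Then $\ansmongo(\q,D)$ is non-empty iff $\varphi$ is satisfiable; since $D$ is fixed, $\q$ has size polynomial in $|\varphi|$, and $\q$ is computable from $\varphi$ in \LOGSPACE, this establishes \NP-hardness in query complexity.

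The only point requiring some care — and the one I would spell out — is the semantics of lookup: $\lookup{p_1=C'.p_2}{p}[F'](t) = t \oplus \attach(p,\colltoarray(F'\pipeline\match{p_2=v_1},\epsilon))$ stores the matched documents \emph{wrapped inside an array} under $p$, rather than a bare copy of $z$. This is precisely why each lookup must be followed by $\unwind{y_i}$: the unwind strips that wrapping (the intermediate array has exactly one element, so it does not duplicate trees), and, crucially, turns $y_i.z$ into a \emph{first level array} — no strict prefix of type \tarray — so that the subsequent $\unwind{y_i.z}$ is well-defined rather than returning $\emptyset$. One could also dispense with the explicit key $j$ by joining on a path missing from both collections (using that, for match, $(p=\nullvalue)$ holds when $p$ is missing), but the explicit key keeps the argument cleaner. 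Beyond this bookkeeping — checking that the intermediate tree shapes are as claimed and that $\varphi'$ faithfully reflects $\varphi$ under the match semantics — there is no conceptual obstacle.
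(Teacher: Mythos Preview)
Your proposal is correct and follows essentially the same approach as the paper's: reduce from SAT, using lookup to manufacture $n$ copies of a two-valued structure, then unwind to enumerate all $2^n$ assignments and match the formula. The paper's encoding is marginally leaner --- a single collection containing the two documents $\object{\text{values}{:}\,\truevalue}$ and $\object{\text{values}{:}\,\falsevalue}$, joined on a missing ``dummy'' path (precisely the alternative you mention), so that each lookup already yields a two-element array and only one unwind per variable is needed --- but the core argument is identical.
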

\begin{proof}
  Now, we can use lookup to create copies of arrays. In this case again, $F$
  contains two documents of the form \valuefont{\object{"values":\,true}} and
  \valuefont{\object{"values":\,false}}. The query is as follows: $\q = C
  \pipeline \lookup{\text{dummy}=C.\text{dummy}}{\text{x}1} \pipeline \cdots
  \pipeline \lookup{\text{dummy}=C.\text{dummy}}{\text{x}n} \pipeline
  \unwind{\text{x}n} \pipeline \cdots \pipeline \unwind{\text{x}n}\pipeline
  \match{\varphi'}$, where $\varphi'$ is the variant of $\varphi$ where each
  variable $x$ is replaced by $x.\text{values}$.
\end{proof}

\begin{lemma}\label{lem:mupl-in-np}
  Boolean query evaluation for \mupl is in \NP in combined complexity.
\end{lemma}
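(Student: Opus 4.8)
The plan is to reuse the DAG-based evaluation technique behind the \PTIME bound for \mpgl (Theorem~\ref{lem:mpg-ptime-complete}) and add a layer of nondeterminism that copes with the only operator in \mupl that can cause an exponential blow-up, namely unwind. The starting point is that, since \mupl has no group stage, every stage acts on its input forest tree by tree, so for an \mupl pipeline $\q = s_1 \pipeline \cdots \pipeline s_m$ and a \mongodb instance $D$ one has $\ansmongo(\q,D) = \bigcup_{t \in D.C} \bigl( \{t\} \pipeline \q \bigr)$, where the external collections used by lookup stages are the fixed ones in $D$. Consequently $\ansmongo(\q,D)\neq\emptyset$ iff there is a single document $t \in D.C$ with $\{t\}\pipeline\q \neq \emptyset$, and the algorithm begins by guessing such a $t$.

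I would then run the pipeline on the single guessed document, maintaining the same compact DAG representation used in the \mpgl algorithm. Match, project, and lookup stages are processed exactly as there: each maps one DAG to one DAG (match possibly discarding it), with size growing at most linearly in the current DAG, the stage, and the external collections it refers to, so the representation stays polynomially bounded. The new case is unwind. Here I exploit the characterization that $\{t\}\pipeline\unwind{p}$ consists of the trees obtained by picking one index $i$ with $\eval{p.i}\neq\emptyset$ (after checking that $p$ is a first-level array) and replacing the subtree hanging from $p$ by the one hanging from $p.i$, while $\{t\}\pipeline\unwind[+]{p}$ additionally contains $t$ itself when $p$ is missing, \nullvalue, or an empty array. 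Thus, to evaluate an unwind stage on the current DAG, the algorithm checks the first-level-array condition deterministically (as on trees) and then guesses a single index $i$ (or, for $\unwind[+]{p}$, guesses to leave the DAG untouched when the emptiness side condition holds), performing the corresponding replacement and obtaining again a single DAG, whose size does not increase. After all $m$ stages, the algorithm accepts iff the surviving DAG has not been filtered out.

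Correctness is immediate from the two observations above: a surviving output tree for input $t$ corresponds to a sequence of index choices at the unwind stages that is never killed by a later match stage, and the accepting computations of the algorithm are in bijection with such choice sequences. For the complexity, there is one initial guess of $t$ plus, at each unwind stage, one guess of an index bounded by the size of the current DAG; since each stage keeps the DAG of polynomial size (match/project/lookup as in \mpgl, unwind non-increasing), every computation branch takes deterministic polynomial time, so the problem is in \NP. I expect the main obstacle to be not the nondeterministic bookkeeping but the verification that the DAG-level implementations of match, project, and lookup carry over verbatim from the \mpgl proof, and especially that unwind on a DAG — with node sharing and with the path semantics that may skip over intermediate arrays — is faithfully realized by a single index guess plus subtree replacement, including all the edge cases around missing, \nullvalue, and empty-array values and the $\unwind[+]$ variant. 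Together with the \NP-hardness of \muq (Lemma~\ref{lem:mu-np-hard}) and $\muq\subseteq\mupl$, this also completes the proof of Theorem~\ref{lem:mupl-np-complete}.
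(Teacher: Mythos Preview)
Your proposal is correct and follows essentially the same approach as the paper: reuse the DAG-based evaluation from the \mpgl argument and add a nondeterministic guess of an array index at each unwind stage, so that each unwind maps one DAG to at most one DAG and the representation stays polynomial. The only cosmetic difference is that you begin by guessing a single starting document (justified by the tree-by-tree nature of the \mupl stages), whereas the paper keeps the whole input forest and performs a guess per DAG at each unwind; both variants yield a polynomial-time nondeterministic procedure.
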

\begin{proof}
  We modify the \PTIME algorithm for \mpgl as follows. Given an \mupl $\q$
  (over collection $C$), a forest $F_0$ for $C$, and forests $G_{C'}$ for each
  external collection $C'$ used by $\q$, we compute in non-deterministic
  polynomial time $F_0\pipeline \q$ and check whether the result is empty or
  not.

  We only show how to compute the set $F'$ of DAGs resulting from evaluating an
  unwind stage $s$ over a set $F$ of DAGs.
  \begin{itemize}
  \item Suppose $s=\unwind{p}$. Let $t \in F$, we show how to transform it into
    $F_t$, which is either the empty set or a singleton set $\{t'\}$, for a DAG
    $t'$. If $p$ is first level array in $t$, let $\{x_a\} = \eval{p}$, and
    $\{x_1,\dots,x_n\}$ all nodes such that $(x_a,x_i)$ are edges in $t$. If
    $n=0$, then $F_t = \emptyset$. Otherwise, we guess $k \in \{1,\dots,n\}$
    and $F_t = \{t'\}$.  Initially, the new DAG $t'$ coincides with $t$ but on
    the nodes reachable from $x_a$. If $\lnode(x_k)=\ell$ in $t$, then
    $\lnode(x_a) = \ell$ in $t'$ and we add to $t'$ the edges $(x_a,y)$ such
    that $(x_k,y)$ is in $t$ and copy to $t'$ all other nodes (hence the edges
    and labels) reachable from $y$ in $t$.

  \item Suppose $s=\unwind[+]{p}$. The difference with the previous stage is
    that if $n=0$, then $F_t =\{t\}$.
  \end{itemize}
  $F'$ is obtained as $\bigcup_{t \in F}F_t$. Clearly, $F'$ is linear in the
  size of $F$ and $s$.

  As a query contains a linear number of unwind stages, our algorithm requires
  to do a linear number of guesses (of polynomial size), and the whole
  computation runs in polynomial time.
\end{proof}

To conclude, we also show that evaluation of \mp queries with additional array
operators \emph{filter}, \emph{map} and \emph{setUnion} is \NP-hard in query
complexity.  The map operator $m_d(p)$ allows to transform each element inside
an array $p$ according to the new definition $d$, and the filter operator
$f_d(p)$ filters the elements of an array $p$ that satisfy $d$:
\begin{center}
  \begin{tabular}{l}
    \{\$filter: \{ input: \meta{PathRef}, as: \meta{Path}, cond: \meta{ValueDef} \}\}\\
    \{\$map: \{ input: \meta{PathRef}, as: \meta{Path}, in: \meta{ValueDef} \}\}\\
    \{\$setUnion: [\meta{List}<\meta{ValueDef}>\}\\
  \end{tabular}\qquad
  $\begin{array}{rl}
     d ~\DEF& f_d(p)\\
     \mid~& m_d(p)\\
     \mid~& d_1 \cup d_2\\
   \end{array}$
 \end{center}


\begin{lemma}{\label{lem:mp-np-hard-with-filter-and-map}}
  The query emptiness problem for \mp queries with filter, map and set union
  operators is \NP-hard in query complexity.
\end{lemma}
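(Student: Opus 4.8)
The plan is to mimic the \NP-hardness reduction of Lemma~\ref{lem:mu-np-hard} for \muq, but replace the $n$ unwind stages (which there generate the $2^n$ truth assignments) by $n-1$ project stages that, using \emph{map} together with \emph{setUnion}, repeatedly double an array, and a final project stage that uses \emph{filter} to keep only the satisfying assignments.  First I would fix the precise set-based semantics of the three new operators at the level of value definitions: $f_{d}(p)$ keeps the elements of the array $p$ for which $d$ is truthy and re-indexes the result from $0$; $m_{d}(p)$ applies $d$ to each element of $p$, with the current element available as a bound path; and $d_1\cup d_2$ is set union up to the classical deep equality used throughout the paper.  The value $v_d$ must be extended to these constructs accordingly.

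Next I would set up the reduction.  Fix the collection name $C$ and the (fixed) instance $F=\{\tree(\object{\text{v}{:}[0,1]})\}$; given a CNF formula $\varphi$ over $x_1,\dots,x_n$, build in time polynomial in $|\varphi|$ a query $\q_\varphi$ with $\ansmongo(\q_\varphi,F)\neq\emptyset$ iff $\varphi$ is satisfiable.  A partial assignment to $x_1,\dots,x_k$ is encoded as the nested pair $[\cdots[[b_1,b_2],b_3]\cdots,b_k]$ (a leaf $b_1\in\{0,1\}$ when $k=1$), and the initial $\text{v}=[0,1]$ already holds the two assignments to $x_1$.  For $k=2,\dots,n$ the stage $\project{\text{v}/\,(m_{e\mapsto[e,0]}(\text{v})\ \cup\ m_{e\mapsto[e,1]}(\text{v}))}$ turns an array of $2^{k-1}$ distinct encodings of assignments to $x_1,\dots,x_{k-1}$ into one of $2^{k}$ distinct encodings of assignments to $x_1,\dots,x_k$: each map is injective, and the top-level index-$1$ child ($0$ versus $1$) distinguishes the two image sets, so setUnion collapses nothing.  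After these $n-1$ stages \text{v} holds all $2^n$ complete assignments.  For a literal $x_i$ let $\pi_i$ be the path, relative to an array element, $0.0{\cdots}0.1$ with $n-i$ zeros when $i\geq2$, and $0.0{\cdots}0$ with $n-1$ zeros when $i=1$; translating $x_i$ to $(\pi_i=1)$ and $\lnot x_i$ to $(\pi_i=0)$ gives a Boolean value definition $\psi'$ such that an element $t$ satisfies $\psi'$ iff its encoded assignment satisfies $\varphi$.  Then $\q_\varphi$ finishes with $\project{\text{v}/f_{\psi'}(\text{v})}\pipeline\match{\exists\,\text{v}.0}$: the filter leaves in \text{v} exactly the satisfying assignments, re-indexed from $0$, and $\exists\,\text{v}.0$ holds iff that array is non-empty.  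Counting stages gives $O(n)$ stages and total query size $O(n\cdot|\varphi|)$, so the reduction is polynomial and yields \NP-hardness in query complexity.

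The main obstacle is not the SAT encoding itself (which is routine) but the bookkeeping forced by the restricted value-definition language: since \meta{ValueDef} cannot build objects, one is pushed to the nested-pair encoding and hence to the index-paths $\pi_i$, and one must check against the path semantics of the excerpt that the atoms $(\pi_i=v)$ behave as claimed (deep equality of a leaf against $0$/$1$) and that the emptiness test $\exists\,\text{v}.0$ really tracks whether the filtered array is non-empty, including the borderline case where the filter returns $[\,]$ (the surviving document then fails the final match).  Verifying the injectivity of the doubling step under the fixed setUnion semantics is the one place where a short inductive argument is genuinely needed.
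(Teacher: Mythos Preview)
Your high-level plan coincides with the paper's: reduce from SAT, use repeated map-plus-setUnion to double an array of partial assignments until all $2^n$ assignments are present, filter those satisfying $\varphi$, and test non-emptiness in a final match.

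Where you diverge is the encoding.  You represent an assignment as a nested array $[\cdots[[b_1,b_2],b_3]\cdots,b_n]$ and then read off the bits via index paths $\pi_i\in\{0,1\}^{*}$ inside the filter condition, and test non-emptiness via $\exists\,\text{v}.0$.  In the paper's formalisation, however, \meta{Path} is built from keys in $K$ only, and $K$ is disjoint from the index set $I$; the semantics $\eval{p}$ lets paths \emph{skip over} array indices but not \emph{name} them.  So your atoms $(\pi_i=v)$ and the final $\exists\,\text{v}.0$ are not well-formed criteria in this framework, and the ``check against the path semantics'' you flag would actually fail.  The paper's proof sidesteps this by encoding an assignment as an object $\object{\text{x}1{:}b_1,\dots,\text{x}n{:}b_n}$, so bit extraction is via ordinary key paths and the final test is simply $\match{\text{assignments}\neq[]}$.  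Your worry that \meta{ValueDef} cannot build objects is literally correct for the grammar in Figure~\ref{fig:mongodb-query-syntax}, but for the map operator the paper simply allows an object-shaped value definition (as real MongoDB does); once you grant yourself that, the key-based encoding removes the index-path obstacle and the rest of your argument goes through unchanged.
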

\begin{proof}
  Proof by reduction from the Boolean satisfiability problem. Let $\varphi$ be
  a Boolean formula over $n$ variables x1, \dots, x$n$. We construct a query
  $\q$ such that for each non-empty forest $F$, $F \pipeline \q$ is non-empty
  iff $\varphi$ is satisfiable.
  \begin{align}
    \q =\,
    &\project{\text{a}0/\{\text{x}1=0\},~\text{a}1/\{\text{x}1=1\}}\pipeline
    \project{\text{a}/[\text{a}0, \text{a}1]}\pipeline{}
    \tag{a$_1$}\\
    &\project{\text{a}0/m_{\{\text{x}1/\text{a.x}1,~ \text{x}2/0\}}(a),~
      \text{a}1/m_{\{\text{x}1/\text{a.x}1,~ \text{x}2/1\}}(a)}\pipeline
    \project{\text{a}/(\text{a}0 \cup \text{a}1)}\pipeline{}
    \tag{a$_2$}\\
    & \dots\notag\\
    &\project{
      \text{a}0/m_{\{\text{x}1/\text{a.x}1,\dots,\text{x}(n-1)/\text{a.x}(n-1),~\text{x}n/0\}}(\text{a}),~
      \text{a}1/m_{\{\text{x}1/\text{a.x}1,\dots,\text{x}(n-1)/\text{a.x}(n-1),~\text{x}n/1\}}(\text{a})}
    \pipeline \project{\text{a}/(\text{a}0 \cup \text{a}1)}\pipeline{}
    \tag{a$_n$}\\
    &\project{\text{assignments}/f_{\varphi}(\text{a})}\pipeline{}
    \tag{filter}\\
    &\match{\text{assignments}\neq[]} \notag
  \end{align}
  The stages (a$_1$) to (a$_n$) construct an array \valuefont{a} of $2^n$
  elements, where each element is an object encoding an assignment to the
  variables x1, \dots, x$n$. In the stage (a$_i$), the map operator is used to
  extend each current element with the an assignment to the variable x$i$. The
  (filter) stage then uses the filter operator to check for each element of the
  big array, whether it is a satisfying assignment, and if not, it is removed
  from the array. Finally, match will check that the resulting array is
  non-empty. If it is the case, then we have a satisfying assignment. All
  satisfying assignments will be stored in \valuefont{a}.
\end{proof}

\begin{definition}
  Given a set $\S$ of type constraints, \mquery $\q$ is \emph{well-typed} for
  $\S$, if for each \mongodb instance $D$ satisfying $\S$, $\ansmongo(\q,D)$ is
  a well-typed forest.
\end{definition}

\begin{theorem}
  The problem of checking whether a query is well-typed is \TAexppoly-hard.
\end{theorem}
\begin{proof}
  Let $\q = C_0\mathsf{in}A_{p_2(n)}$ be the pipeline from the proof of
  Lemma~\ref{lem:TAexppoly-lower-bound}. Further, let
  $s_{\mathit{nwt}} = \project{\text{nonWellTypedPath}/[0, [1,2]]}$ and
  $\S = \{(C, \tree(\object{\id: \tliteral}))\}$. Then we have that the query
  $\q_{\mathit{nwt}} = C \pipeline \q \pipeline s_{\mathit{nwt}}$ is not
  well-typed for $\S$ iff the Turing machine $M$ accepts $w$ (see
  Lemma~\ref{lem:TAexppoly-lower-bound}).  When $M$ accepts $w$, then
  $\q_{\mathit{nwt}}$ is not well-typed for $\S$, and the witness input forest
  for it is $\{\tree(\lobject \id:1\robject)\}$. When $\M$ does not accept $w$,
  then $\ansmongo(\q_{\mathit{nwt}}, D)$ is empty (hence, well-typed) for each
  instance $D$ satisfying $\S$, as the value of \id is never used by~$\q$.
\end{proof}

\begin{theorem}
  Given a set of constraints $\S$, the problem of checking whether each stage
  in a query is well-typed for its input type is DP-complete.
\end{theorem}
\begin{proof}
  The upper bound follows from the algorithm reported in
  Section~\ref{sec:mquery2nra}. The lower bound is a straightforward reduction
  from the satisfiability and validity problems for Boolean formulas.
\end{proof}

\end{document}
